\newcommand{\eq}[1]{
    \opt{full}{\[#1\]}
    \opt{sub,final}{\ensuremath{#1}}
}
\setlist{nosep}
\newcommand{\E}[1]{\mathrm{E}\left[ #1 \right]}
\newtheorem{defn}[thm]{Definition}
\newtheorem{obs}[thm]{Observation}
\Crefname{them}{Theorem}{Theorems}
\Crefname{lem}{Lemma}{Lemmas}
\Crefname{cor}{Corollary}{Corollaries}
\Crefname{defn}{Definition}{Definitions}
\Crefname{prop}{Proposition}{Propositions}
\Crefname{alg}{Algorithm}{Algorithms}
\renewcommand{\vec}[1]{\mathbf{#1}}
\newcommand{\vr}{\vec{r}}
\newcommand{\vp}{\vec{p}}
\newcommand{\vc}{\vec{c}}
\newcommand{\vd}{\vec{d}}
\newcommand{\cC}{\mathcal C}
\newcommand{\cE}{\mathcal E}
\def\longrightharpoonup{\relbar\joinrel\rightharpoonup}
\def\longleftharpoondown{\leftharpoondown\joinrel\relbar}
\def\longrightleftharpoons{\mathop{\vcenter{\hbox{\ooalign{\raise1pt\hbox{$\longrightharpoonup\joinrel$}\crcr\lower1pt\hbox{$\longleftharpoondown\joinrel$}}}}}}
\def\rxn{\mathop{\rightarrow}\limits}  
\def\revrxn{\mathop{\rightleftharpoons}\limits}
\newcommand{\N}{\mathbb{N}}
\newcommand{\R}{\mathbb{R}}
\DeclareMathOperator{\gen}{gen}
\DeclareMathOperator{\ord}{ord}
\newcommand{\ellstep}[4]{\mathcal F^{\mathrm {dis}}_{#1,#4}(#2, #3)}
\newcommand{\tstep}[4]{\mathcal F^{\mathrm {con}}_{#1,#4}(#2, #3)}
\newcommand{\propensity}[3]{p_{#1,#2}(#3)}
\newcommand{\totalpropensity}[2]{p^{\mathrm{tot}}_{#1,#2}}
\newcommand{\adjpropensity}[3]{\bar{p}_{#1,#2}(#3)}
\newcommand{\adjrateconstant}[4]{\bar{k}_{#1}(#2,#3,#4)}
\newcommand{\totalrateconstant}[2]{k^{\mathrm{tot}}_{#1}(#2)}
\newcommand{\schedulerslowdown}[3]{S_{#1,#2}(#3)}
\let\Pr\relax
\DeclareMathOperator{\Pr}{Pr}
\newcommand{\Var}[1]{\mathrm{Var}\left[ #1 \right]}
\title{
    Exactly 
    simulating stochastic chemical reaction networks\\in sub-constant time per reaction
}
\author[2]{Joshua Petrack\footnote{
    Supported by NSF awards 2211793 and 1844976.
}}
\author[1]{David Doty\footnote{
    Supported by NSF awards 2329909 and 2211793 and  DoE award DE-SC0024467.
}}
\affil[1]{University of California, Davis, \texttt{doty}@ucdavis.edu}
\affil[2]{University of California, Davis and Maynooth University, \texttt{Josh.Petrack@mu.ie}}
\date{}
\begin{document}

\maketitle
\thispagestyle{empty}

\begin{abstract}
The model of \emph{chemical reaction networks} is among the oldest and most widely studied and used in natural science~\cite{guldberg1864studies}.
The model describes reactions among abstract chemical species,
for instance
$A + B \to C$, 
which indicates that if a molecule of type $A$ interacts with a molecule of type $B$
(the \emph{reactants}), 
they may stick together to form a molecule of type $C$
(the \emph{product}).
The standard algorithm for simulating (discrete, stochastic) chemical reaction networks is the \emph{Gillespie algorithm}~\cite{gillespie1977exact},
which stochastically simulates one reaction at a time,
so to simulate $\ell$ consecutive reactions,
it requires total running time $\Omega(\ell)$.

We give the first chemical reaction network stochastic simulation algorithm that can simulate $\ell$ reactions,
provably preserving the exact stochastic dynamics
(sampling from precisely the same distribution as the Gillespie algorithm),
yet using time provably sublinear in $\ell$.
Under reasonable assumptions,
our algorithm can simulate $\ell$ reactions among $n$ total molecules in expected time $\tilde{O}(\ell/\sqrt n)$ when $\ell \ge n^{5/4}$, and in expected time $\tilde{O}(\ell/n^{2/5})$ when $n \le \ell \le n^{5/4}$.
Our work adapts an algorithm of Berenbrink, Hammer, Kaaser, Meyer, Penschuck, and Tran~\cite{berenbrink2020simulating} for simulating the distributed computing model known as \emph{population protocols}, extending it (in a very nontrivial way) to the more general chemical reaction network setting.

We provide an implementation of our algorithm as a Python package,
with the core logic implemented in Rust,
with remarkably fast performance in practice.
\end{abstract}

\clearpage
\setcounter{page}{1}

\section{Introduction}

The model of \emph{chemical reaction networks} (\emph{CRNs}) is one of the oldest and most widely used in natural science~\cite{guldberg1864studies,waage1986studies}.
It describes reactions between abstract species, 
for example $A + B \rxn^k C + 2D$, 
which indicates that if a molecule of type $A$ 
(a.k.a., \emph{species} $A$) 
collides with a molecule of type $B$
($A$ and $B$ are the \emph{reactants}), 
they may split into three molecules, one of type $C$ and two of type $D$
(the \emph{products}).
The number $k \in \R_{>0}$ is called a \emph{rate constant}.
The standard algorithm for stochastically simulating CRNs is the \emph{Gillespie algorithm}~\cite{gillespie1977exact},
which we now describe.

The Gillespie algorithm iteratively executes one reaction at a time,
taking time $\Theta(\ell)$ for $\ell$ reactions.
A \emph{configuration} is a multiset of species;
equivalently a nonnegative integer vector $\vc$,
where $\vc(S) \in \N$ denotes the \emph{count} of $S$;
when $\vc$ is clear from context,
we write $\#S$ to denote $\vc(S)$.
A reaction is more likely to occur the more of its reactants there are;
more precisely,
given a fixed \emph{volume} $v \in \R_{>0}$,
the \emph{propensity} (a.k.a., \emph{rate})
of a reaction is 
proportional to 
the product of reactant counts and rate constant $k$, divided by $v^{r-1}$, where $r$ is the number of reactants.
For example the propensity of the reaction $A+B \rxn^k C+2D$ is $\frac{k \cdot \# A \cdot \# B}{v}.$
(See \Cref{sec:prelim} for a detailed definition.)
The Gillespie algorithm repeatedly samples a reaction to execute
(for example executing the above reaction would decrease $\#A,\#B$ by $1$, increase $\#C$ by $1$,
and increase $\#D$ by $2$),
with probability proportional to its propensity.
The time until this occurs is an exponential random variable with rate equal to the sum of all reaction propensities.
In other words
the model is a continuous-time Markov chain whose states are CRN configurations,
with transitions determined by reactions.

A distributed computing model known as \emph{population protocols}~\cite{AngluinADFP2006},
originally defined to model mobile finite-state sensor networks,
turns out to be equivalent to the special case of CRNs in which every reaction has exactly two reactants, two products, and rate constant $k=1$.
In the population protocol model, a scheduler repeatedly picks a pair of distinct molecules $A,B$ uniformly at random from a population of $n$ molecules and has them \emph{interact},
which means replacing them with $C$ and $D$ if there is a reaction $A+B \rxn C+D$,
or doing nothing if there is no such reaction 
(i.e., simulating the ``null'' reaction $A+B \rxn A+B$).
In the natural timescale in which we expect each molecule to have $O(1)$ interactions per unit of time,
the (discrete) \emph{parallel time} in a population protocol is defined as the number of interactions to occur, divided by the population size $n$.
A continuous time variant~\cite{fanti2020communication} gives each molecule a rate-1 Poisson clock, upon which it reacts
with a randomly chosen other molecule. 
The time until the next interaction is an exponential random variable with expected value $1/n$, so these two models are
equivalent up to a
re-scaling of time, which by straightforward Chernoff bounds is negligible.
The continuous-time variant turns out to coincide \emph{precisely} with the Gillespie model
restricted to population protocols, when the volume $v=n$.

The na\"{i}ve simulation algorithm for population protocols implements the model straightforwardly,
one (potentially null) reaction at a time,
taking time $\Theta(\ell)$ to simulate $\ell$ reactions.

\subsection{Related work on faster stochastic simulation of CRNs}
One ``speedup heuristic'' changes the model entirely:
the \emph{deterministic} or \emph{continuous mass-action} model of CRNs represents the amount of each species as a nonnegative real-valued \emph{concentration} (average count per unit volume),
defining polynomial ordinary differential equations (ODEs) with a unique trajectory~\cite{feinberg2019foundations}.
Integrating the ODEs using standard numerical methods~\cite{press2007section17} is typically much faster than stochastic simulation.
There is a technical sense in which this model is the large-count, large-volume limit of the discrete stochastic model~\cite{kurtz1972relationship}, but the ODE model removes any stochastic effects and can have vastly different behavior than the stochastic model~\cite[Fig.~3]{ppsim}.

The obvious way to pick the next reaction to occur in the Gillespie algorithm, if there are $R$ types of reactions, uses time $O(R)$.
There are variants of  Gillespie that reduce the time to apply a single
reaction from $O(R)$ to $O(\log R)$~\cite{gibson2000efficient} or even $O(1)$~\cite{slepoy2008constant,thanh2017efficient}.
However, the time to apply $\ell$ total reactions still scales linearly with $\ell$.
Other exact methods have been developed that, for some CRNs, empirically appear to simulate $\ell$ reactions in time $o(\ell)$~\cite{cai2009efficient,mjolsness2009exact}, but none have been proven rigorously to give an asymptotic speedup on all CRNs while maintaining exactness.
Linear noise approximation~\cite{cardelli2016stochastic} is another workaround, 
adding stochastic noise to an ODE trajectory.

Other methods give a provable and practical speedup,
but they also provably sacrifice exactness, i.e., the distribution of trajectories sampled is not the same as the Gillespie algorithm.
A common speedup heuristic for simulating $\omega(1)$ reactions in $O(1)$ time is \emph{$\tau$-leaping}~\cite{gillespie2001approximate, rathinam2007reversible, cao2006efficient, gillespie2007stochastic, soloveichik2009robust},
which ``leaps'' ahead by time $\tau$, by assuming
reaction propensities will not change
and updating counts in a single batch step by sampling according these propensities.
Such methods necessarily approximate the kinetics inexactly, though it is sometimes possible to prove bounds on the approximation accuracy~\cite{soloveichik2009robust}.
Nevertheless, there are CRNs with stochastic effects not observed in ODE simulation,
and where $\tau$-leaping introduces systematic inaccuracies that disrupt the fundamental qualitative behavior of the system,
demonstrating the need for exact stochastic simulation even with large population sizes;
see~\cite{ppsim,lathrop2020population} for examples.

A speedup idea for population protocol simulation is to sample the number of each reaction that would result from a random matching of size $\ell$
(i.e., $\ell$ pairs of molecules, all disjoint), and update species counts in a single step.
This heuristic, too, is inexact:
unlike the true process, it prevents any molecule from participating in more than one of the next $\ell$ reactions.

However,
based on this last heuristic,
Berenbrink, Hammer, Kaaser, Meyer, Penschuck, and Tran~\cite{berenbrink2020simulating} created a remarkable ``batching'' algorithm that is quadratically faster than the na\"{i}ve population protocol simulation algorithm, 
yet samples from the same exact distribution.
It can simulate,
on population size $n$,
batches of $\Theta(\sqrt{n})$ reactions in time $O(\log n)$ each.
Conditioned on the event that no molecule is picked twice during the next $\ell$ interactions, these interacting pairs are a random disjoint matching of the molecules.
Define the random variable $\mathbf{L}$ as the number of interactions 
until the same molecule is picked twice.
The algorithm samples this ``collision'' length $\mathbf{L}$, 
then updates counts assuming all pairs of interacting molecules are disjoint until this collision,
and finally simulates the interaction involving the collision.
By the Birthday Paradox, $\E{\mathbf{L}} \approx \sqrt{n}$ in a population of $n$ molecules,
giving a quadratic factor speedup over the na\"{i}ve algorithm.
(The time to update a batch scales quadratically with the total number of species.)

The batching algorithm extends straightforwardly to a generalized model of population protocols in which,
for some $o \in \N^+$,
all reactions have exactly $o$ reactants and $o$ products
(we call these \emph{uniformly conservative} CRNs; see \Cref{sec:crn-new-defns}).
However, the algorithm's correctness relies crucially on the fact that the population size $n$ never changes.
General CRNs may increase or decrease $n$ in different reactions.

There are several challenges to generalizing \cite{berenbrink2020simulating} to arbitrary CRNs.
Many of these arise because the complexity of CRNs makes it difficult to reason about a batch of reactions as a whole,
without knowing which individual reactions comprise it.
It is not clear how to even define $\mathbf{L}$ usefully for general CRNs with heterogeneous reaction sizes.
Another surprisingly difficult challenge is generalizing to continuous time,
i.e., drawing from the distribution of configurations after exactly time $t$ has elapsed.
In population protocols, 
the time until each interaction has the same exact (exponential) distribution.
This makes it easy to sample,
in constant time,
the elapsed time until $\ell$ interactions occur.
In CRNs, individual reaction times may have different distributions.
The time distribution to run a fixed batch of $\ell$ reactions may even depend on the order they occur in,
which batching cannot efficiently specify.

\subsection{Our contribution}
We extend the batching algorithm of Berenbrink, Hammer, Kaaser, Meyer, Penschuck, and Tran~\cite{berenbrink2020simulating}
to the fully general model of CRNs.
This is, to our knowledge,
the first simulation algorithm for the CRN model
provably preserving the exact stochastic dynamics,
yet provably simulating $\ell$ reactions in $o(\ell)$ expected time.
We implement our algorithm as a Python package,
with remarkably fast performance in practice.
See \Cref{sec:data} for simulation data generated using this tool.

Our algorithm works on arbitrary CRNs and gives similar asymptotic speedup with respect to population size to the original batching algorithm for population protocols~\cite{berenbrink2020simulating}.
Our main theorem, \cref{thm:maintheorem}, describes this speedup in detail.

There is an important efficiency difference between the Gillespie algorithm and the batching algorithms (both ours and~\cite{berenbrink2020simulating}): 
the Gillespie algorithm
automatically skips null reactions.
For example, a reaction such as $L+L \rxn L + F$ in volume $n$
(simulating leader election), 
when $\#L = 2$
and $\#F = n - 2$,
is much more efficient in the Gillespie algorithm, 
which runs in time $O(1)$.
A na\"{i}ve
population protocol simulation,
on the other hand,
iterates through $\Theta(n)$ expected null interactions 
($L + F \rxn L + F$
and $F + F \rxn F + F$) until the two $L$’s react.
Even the batching algorithm, since it explicitly simulates null reactions,
takes time $\Theta(\sqrt{n})$,
in this regime still asymptotically (and practically) much slower than the Gillespie algorithm.
For this reason,
any practical implementation should switch to the Gillespie algorithm when most sampled reactions in a batch are null.
We formalize this slowdown in \cref{def:efficiency-condition}.

\subsection{High-level overview of algorithm}
\todo{DD: I think toehold occlusion is a good thing to model with this algorithm}
The key insight to our extension of the batching algorithm is that we modify the simulated CRN so that
every reaction in the modified CRN increases the population size by the same amount (the \emph{generativity} of the reaction)
and has the same number of reactants (the \emph{order} of the reaction).
The full transformation is described in \Cref{sec:uniform-transformation}.
A simple example is the reversible reaction $A \revrxn^1_1 B+C$;
applying the transformation of \Cref{sec:uniform-transformation} in volume $v$ results in the CRN,
with two new species $K$ and $W$, with reactions $A+K \rxn^{v/\#K} B+C+K$\footnote{Note $K$ is a \emph{catalyst}, both a reactant and product, so $\#K$ never changes; we set $\#K=n$,
the molecular count, updating $\#K$ (and adjusting rate constants) between batches if needed to maintain $\#K=\Theta(n)$.} 
and $B+C \rxn^1 A + 2W$.
More generally,
add $K$ as catalysts to appropriate reactions,
until every reaction has the same order,
then add $W$ as products until every reaction has the same generativity.
Note $W$ (``waste'') is never a reactant so does not affect reaction propensities;
after each batch, $W$ can be removed to prevent the population size from exploding.


Since every reaction has the same order,
we can sample reactions with batching.
When we modify a reaction to have a $K$ catalyst,
we adjust its rate constant to compensate;
the reaction retains the same probability of being sampled as in the original CRN.
Since every reaction has the same generativity,
the distribution of collision-free run lengths does not depend on which reactions are part of the run,
allowing our batching algorithm to sample from this exact distribution before sampling which reactions occur.

In \cref{sec:discrete-time-simulation}, we describe sampling in \emph{discrete time}, i.e., drawing from the distribution of configurations after exactly $\ell$ reactions. 
This includes our core batching algorithm.
In \cref{sec:exact-time}, we adapt this to the much more challenging continuous-time regime,
modifying the discrete-time algorithm and using adaptive rejection sampling~\cite{gilks1992adaptive} to efficiently sample inter-reaction times.





\section{Definitions}

\subsection{Preliminaries}
\label{sec:prelim}

We use $\Lambda$ to denote a generic finite set of labels for chemical species. 
We use $\mathbb N^\Lambda$ to denote the set of functions $f: \Lambda \to \mathbb N$. 
We can also interpret such a function as a vector $\vec{r}$ indexed by elements of $\Lambda$, 
calling $\vec{r}(S)$ the \emph{count} of $S \in \Lambda$ in $\vec{r}$.
Equivalently, it can be interpreted as a multiset, containing $\vec{r}(S)$ copies of each $S \in \Lambda$.
We use
$\|\vec{r}\|_1 = \sum_{A \in \Lambda} \vec{r}(A)$ or simply $\|\vec{r}\|$ to denote its 1-norm.
We use $\mathbb N^\Lambda_i$ to denote the set of $\vec{r}$ such that $\|\vec{r}\| = i$.
For $A \in \Lambda$ and $n \in \mathbb N$, we use $n\cdot A$ to denote the vector $\mathbf{a}$ with $\mathbf{a}(A) = n$ and $\mathbf{a}(B) = 0$ for all $B \neq A$.
For $a,b\in\N$ with $a \geq b - 1$,
we write $a^{\underline{b}}$ to denote the \emph{falling power} $a! / (a-b)! = a (a-1) (a-2) \dots (a-b+1)$.

Given a set $\Lambda$, a \emph{reaction} over $\Lambda$ is a triple $\alpha = (\vr, \vp, k) \in \N^\Lambda \times \N^\Lambda \times \R_{>0}$, with $\vec{r}$ and $\vec{p}$ specifying the stoichiometric coefficients of the reactants and products, and $k$ specifying the \emph{rate constant}. 
A \emph{chemical reaction network (CRN)} is a pair $\cC = (\Lambda, R)$, where $\Lambda$ is a finite set of chemical species and $R$ is a set of reactions over $\Lambda$.
A \emph{configuration} of a CRN is a vector $\vec{c} \in \mathbb N^\Lambda$.
For each species $A \in \Lambda$, $\vec{c}(A)$ gives the count of $A$ in $\vec{c}$.

Given a configuration $\vec{c}$ of a CRN $\mathcal C = (\Lambda, R)$, we say a reaction $\alpha = (\vec{r}, \vec{p}, k)$ is \emph{applicable} if $\vec{r} \leq \vec{c}$, that is if there is enough of each reactant of $\alpha$ to carry out the reaction. 
If no reaction is applicable to $\vec{c}$, then we say $\vec{c}$ is \emph{terminal}. 
If $\alpha$ is applicable to $\vec{c}$, we write $\alpha(\vec c)$ to denote the configuration $\vec c - \vec r + \vec p$ obtained by subtracting the reactants and adding the products of $\alpha$ to $\vec c$. If $\vec{d} = \alpha(\vec{c})$ for some reaction $\alpha \in R$, then we write $\vc \rxn^{\alpha} \vd$ or simply $\vec{c} \rxn \vec{d}$. 
An \emph{execution} $\mathcal E$ of $\mathcal C$ is a finite or infinite sequence of configurations $(\vec{c}_0, \vec{c}_1, \vec{c}_2, \ldots)$ such that for each $i \geq 0$, $\vec{c}_i \rxn \vec{c_{i+1}}$.

Given a CRN $\mathcal C$ and a configuration $\vec{c}$ of $\mathcal C$, for this paper, the evolution of $\vec{c}$ over time is governed by the laws of \emph{Gillespie kinetics}\cite{gillespie1977exact}.
Each reaction $(\vec{r}, \vec{p}, k)$ is treated as a Poisson process whose rate is given by its \emph{propensity},
defined for each reaction $\alpha = (\vec{r}, \vec{p}, k)$ in a configuration $\vc$, with volume $v \in \R_{>0}$, as
\begin{equation}
\label{eq:propensity-defn}
    \propensity{\vc}{v}{\alpha} = \frac{k}{v^{\|\vec{r}\| - 1}}
    \cdot
    \prod_{A \in \vec{r}} \frac {\vec{c}(A)^{\underline{\vec{r}(A)}}}{\vr(A)!}.
\end{equation}
For example, if $\alpha$ is $A + 3B \rxn^{4.5} C$,
then $\propensity{\vc}{v}{\alpha}  = \frac{4.5}{6v^3} \vc(A) \vc(B) (\vc(B)-1) (\vc(B)-2).$
The final product term of \eqref{eq:propensity-defn} ($\vc(A) \vc(B) (\vc(B)-1) (\vc(B)-2)$ in this example) is  the number of ways to choose individual molecules from $\vc$ with the given reactant identities.
Note that reactions lacking sufficient reactants (e.g., if $\vc(A)=0$ or $\vc(B) < 3$)
have propensity 0. 
We also note that propensity is sometimes defined without this $\vr(A)!$ factor; 
these two conventions are equivalent, and conversion between them is done by multiplying rate constants by a simple correction factor. 
Define the \emph{total propensity} to be the sum of all propensities 
$p_{\vc,v}^\mathrm{tot} = \sum_{\alpha \in R} \propensity{\vc}{v}{\alpha}.$
The next reaction to be executed is $\alpha$ with probability $\frac{\propensity{\vc}{v}{\alpha}}{p_{\vc,v}^\mathrm{tot}}$,
and the elapsed time until this reaction occurs is distributed as an exponential random variable with rate $p_{\vc,v}^\mathrm{tot}.$


\begin{toappendix}    
The evolution of a CRN under Gillespie kinetics in some volume $v$ can be viewed as a \emph{continuous-time Markov chain} $\mathcal M_{\mathcal C,v} = (S, \mathbf{R})$.
The set of states $S$ is the set of configurations of $\mathcal C$.
The rate matrix $\mathbf{R}: S \times S \to \mathbb R_{\geq 0}$ is defined by taking, for every $(\vec{c}, \vec{d})$ such that $\vec{c} \rxn \vec{d}$,
\[\mathbf{R}(\vec{c}, \vec{d}) = \sum_{\{\alpha \in R| \vc \rxn^{\alpha} \vd\}}\propensity{\vc}{v}{\alpha} ,\]
with $\mathbf{R}(\vec{c}, \vec{d}) = 0$ if no such reaction exists.
The \emph{transition matrix} $\overline{\mathbf{R}}$ is obtained by normalizing each row of $\mathbf{R}$ to have sum 1, so that (viewed as a matrix) the row $\overline{\mathbf{R}}_{\vec{c}, \cdot}$ gives the probability of transition from $\vec{c}$ to each other configuration.
If $\vec{c}$ is terminal, then $\overline{\mathbf{R}}(\vec c, \vec c) = 1$ and all other values in the row are 0. \todo{Generally, make sure I'm being consistent about different kinds of letters, bolding, etc among notation choices.}

\todo{move the comment below this todo into a defn environment or something? Or define it in the section with finite time blowup? not sure.}

\end{toappendix}


\subsection{Uniform CRNs}
\label{sec:crn-new-defns}

\begin{defn}
    The \emph{order} $\ord(\alpha)$ of a reaction $\alpha=(\vr,\vp,k)$ is $\|\vec{r}\|$.
    The \emph{order} $\ord(\mathcal{C})$ of a CRN $\mathcal{C}$ is the maximum order of its reactions.
    If every reaction has the same order,
    $\mathcal{C}$ has \emph{uniform order}.
\end{defn}

\begin{defn}
    The \emph{generativity} $\gen(\alpha)$ of a reaction $\alpha=(\vr,\vp,k)$ is $\|\vp\| - \|\vr\|$. 
    The \emph{generativity} $\gen(\mathcal C)$ of a CRN $\mathcal C$ is the maximum generativity of its reactions. 
    If every reaction has the same generativity,
    $\mathcal{C}$ has \emph{uniform generativity}.
\end{defn}


\begin{defn}
    A CRN is \emph{uniform} if it has uniform order and uniform generativity;
    that is, there are $r, p \in \mathbb N$ such that every reaction has exactly $r$ reactants and $p$ products.
\end{defn}

Uniform CRNs are general enough to allow any CRN to be transformed into a uniform CRN with the same dynamics, while being constrained enough to allow batching.
Previous work\cite{mayr2014framework} defines and discusses \emph{conservative} CRNs, which require $\|\vr\| = \|\vp\|$ for each \emph{individual} reaction $(\vr, \vp, k)$, but allow different values of $\|\vr\|$ between different reactions. 
We call a CRN that is both uniform and conservative \emph{uniformly conservative}.
(Population protocols are uniformly conservative with order 2.)
Because uniform CRNs have uniform generativity, their molecular count changes in a predictable way, allowing similar analytical utility to conservative CRNs.
Uniform order is also important for the batching algorithm,
which is designed to simulate uniform CRNs.
Thus the first step of our simulation algorithm is to translate any CRN into a uniform CRN;
the next few definitions describe some quantities that need to be tracked while we do this transformation in order to preserve the dynamics of the original CRN.
We also note that although $\gen(\mathcal C)$ may be negative,
this case is straightforward to handle with the original batching algorithm~\cite{berenbrink2020simulating} by adding inert waste products to each reaction $\alpha$ with $\gen(\alpha) < \gen(\mathcal C)$ until  the CRN is uniformly conservative.
Thus throughout the rest of the paper we will assume $\gen(\mathcal C) \ge 0$ for any CRN $\mathcal C$.

\subsection{Definitions for CRN transformation}
\begin{defn}
    \label{def:total-rate-constant}
    Given a CRN $\mathcal C = (\Lambda, R)$, its \emph{total rate constant} for a multiset of reactants $\vec{r} \in \mathbb N^\Lambda$, denoted $\totalrateconstant{\cC}{\vr}$, is given by
    \eq{\totalrateconstant{\cC}{\vr} = \sum_{(\vec{r}, \vec{p}, k) \in R} k,}
    i.e., 
    $\totalrateconstant{\cC}{\vr}$ is the sum of the rate constants of all reactions in $R$ sharing the same reactants $\vec{r}$.
\end{defn}


\begin{defn}
    \label{def:uniformly-reactive}
    A uniform CRN $\mathcal C$ is \emph{uniformly reactive} if $\totalrateconstant{\cC}{\vr} = \totalrateconstant{\cC}{\vr'}$ for all $\vec{r},\vr' \in \mathbb N^\Lambda_{\ord(\mathcal C)}$.
\end{defn}

In other words, for a uniformly reactive CRN, the sum of rate constants among all reactions sharing a reactant vector is constant, no matter which reactant vector is chosen.
For example, the reactions 
\opt{full}{
\begin{align*}
    A &\rxn^1 B+C
    & B &\rxn^{2.5} A+C
    & C &\rxn^{1.5} A+A
    \\
    A &\rxn^2 A+B
    & B &\rxn^{0.5} C+C
\end{align*}
}
\opt{sub,final}{
$A \rxn^1 B+C, A \rxn^2 A+B$, 
$B \rxn^{2.5} A+C, B \rxn^{0.5} C+C$,
and
$C \rxn^{1.5} A+A$
}
are not uniformly reactive because the sum of rate constants for reactions with $A$ as the sole reactant is 3,
same as for reactions with sole reactant $B$, yet it is 1.5 for $C$.
But if we add the reaction $C \rxn^{1.5} B+B$, the CRN becomes uniformly reactive.

Uniformly reactive CRNs with order $o$ have the useful property that the probability of choosing a particular 
multiset of reactants $\vec{r}$ for the next reaction under Gillespie kinetics (recall \cref{eq:propensity-defn}) is \emph{exactly} the probability that if we pick $o$ molecules from the population uniformly at random without replacement,
their identities are given by $\vec{r}$.
In other words, picking reactants ``Gillespie-style'' is equivalent to picking reactants ``population-protocol-style''.
\opt{sub}{This is formalized in \Cref{lem:uniformly-reactive}.}

\begin{toappendix}
\begin{lem}
    \label{lem:uniformly-reactive}
    Let $\mathcal C = (\Lambda, R)$ be a uniformly reactive CRN and $\vec{c}$ be some configuration of $\mathcal C$.
    Let $\mathbf{X}$ denote the distribution of reactant vectors in $\mathbb N^\Lambda_{\ord(\mathcal C)}$ sampled by the Gillespie algorithm on $\vc$ in volume $v$.
    Let $\mathbf{Y}$ denote the distribution of multisets of $k$ molecules from $\vec{c}$, drawn by uniformly picking individual molecules without replacement.
    Then $\mathbf{X} = \mathbf{Y}$.
\end{lem}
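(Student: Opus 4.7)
The plan is to show that both distributions $\mathbf{X}$ and $\mathbf{Y}$ are proportional to the same function of $\vr$, namely $\prod_{A \in \Lambda} \frac{\vc(A)^{\underline{\vr(A)}}}{\vr(A)!}$, from which equality follows since both are probability distributions on the same finite set $\mathbb N^\Lambda_{\ord(\mathcal C)}$.

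First I would handle $\mathbf{X}$. The Gillespie algorithm samples a specific reaction $\alpha = (\vr,\vp,k)$ with probability $\propensity{\vc}{v}{\alpha}/p^{\mathrm{tot}}_{\vc,v}$, so the probability of sampling reactant vector $\vr$ is the sum of these over all reactions sharing that $\vr$. Using \eqref{eq:propensity-defn}, I would pull the reactant-independent factor $v^{-(o-1)}$ (where $o = \ord(\mathcal C)$) and the product $\prod_A \vc(A)^{\underline{\vr(A)}}/\vr(A)!$ outside the sum, leaving $\sum_{(\vr,\vp,k) \in R} k = \totalrateconstant{\cC}{\vr}$. By \cref{def:uniformly-reactive}, this total rate constant is some value $K$ independent of $\vr$. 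Hence
\eq{
\Pr[\mathbf{X} = \vr] \;=\; \frac{K \cdot v^{-(o-1)} \prod_A \vc(A)^{\underline{\vr(A)}}/\vr(A)!}{p^{\mathrm{tot}}_{\vc,v}},
}
so $\Pr[\mathbf{X} = \vr] \propto \prod_A \vc(A)^{\underline{\vr(A)}}/\vr(A)!$.

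Next I would handle $\mathbf{Y}$ by a direct counting argument. Since the molecules are drawn without replacement and order is irrelevant, picking $o$ molecules uniformly without replacement is equivalent to picking a uniformly random $o$-subset of the $n = \|\vc\|$ individual molecules; there are $\binom{n}{o}$ such subsets. The number of subsets whose species-multiset equals $\vr$ is obtained by independently choosing $\vr(A)$ of the $\vc(A)$ molecules of each species $A$, giving $\prod_A \binom{\vc(A)}{\vr(A)}$. Therefore
\eq{
\Pr[\mathbf{Y} = \vr] \;=\; \frac{\prod_A \binom{\vc(A)}{\vr(A)}}{\binom{n}{o}} \;=\; \frac{o!}{n^{\underline{o}}} \prod_A \frac{\vc(A)^{\underline{\vr(A)}}}{\vr(A)!},
}
which is again proportional to $\prod_A \vc(A)^{\underline{\vr(A)}}/\vr(A)!$.

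Since $\mathbf{X}$ and $\mathbf{Y}$ are both probability distributions on $\mathbb N^\Lambda_o$ proportional to the same nonnegative function, their normalizing constants must agree and $\mathbf{X} = \mathbf{Y}$. There is no real obstacle here beyond bookkeeping: the main thing to be careful about is matching the $\vr(A)!$ denominators that appear in \eqref{eq:propensity-defn} with the ones arising from the binomial coefficients, and verifying that $\vr(A)$ is at most $\vc(A)$ is automatic since otherwise both sides vanish (the falling power is zero and the multiset is unachievable).
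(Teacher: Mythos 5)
Your proposal is correct and follows essentially the same route as the paper's proof: both sum propensities over reactions sharing a reactant vector, invoke uniform reactivity to make the total rate constant (and the volume factor, constant by uniform order) drop out, and identify the remaining product $\prod_A \vc(A)^{\underline{\vr(A)}}/\vr(A)!$ with the combinatorial count governing $\mathbf{Y}$. Your write-up is slightly more explicit about the $v^{-(o-1)}$ factor and the binomial identity, but there is no substantive difference.
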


\begin{proof}
    The probability of the next reaction being some $\alpha \in R$ is the propensity of $\alpha$ divided by the total propensity of all reactions.
    Thus, in $\mathbf{X}$,
    the probability of the next reactant vector being $\vr$ is proportional to the sum of the propensities of all $\alpha$ with that reactant vector, that is, to $\totalrateconstant{\cC}{\vr}$. That is to say,
    \begin{align*}
        \Pr(\rho(\vec c) = \vec{r}) &= \frac{1}{\sum_{\beta \in R} \propensity{\vc}{v}{\beta}}\sum_{\alpha = (\vec{r}, \_, \_) \in R} \propensity{\vc}{v}{\alpha}\\
        &= \frac{\totalrateconstant{\cC}{\vr}}{\sum_{\beta \in R} \propensity{\vc}{v}{\beta}}\cdot \prod_{A \in \vec{r}}\frac {\vec{c}(A)^{\underline{\vec{r}(A)}}}{\vr(A)!}.
    \end{align*}
    By uniform reactivity, the fraction on the left is constant across choices of $\vr$.
    The remainder then combinatorially gives exactly the number of ways to choose an ordered list of molecules from $\vc$ that comprise $\vr$ in some order.
    In $\mathbf{Y}$,
    the probability of $\vr$ being drawn uniformly at random is also proportional to this quantity, so the two distributions must be identical.
\end{proof} 

\end{toappendix}

In transforming a CRN into a uniform one and modifying the order of reactions, any reaction with increased order will have its propensity change as a result of new reactants and the impact of volume on propensity.
We give a way to relate this new propensity to the original propensity.

\begin{defn}
    \label{def:adjusted-parameters}
    Let $\cC = (\Lambda, R)$ be a CRN, $n$ be the molecular count of some configuration, and $v \in \R_{>0}$ be volume. 
    Then for $\alpha = (\vr, \vp, k) \in R$, the \emph{order-adjusted rate constant} for $\alpha$ is given by
    \eq{\adjrateconstant{\cC}{\alpha}{n}{v} = 
        k\cdot {\frac {v^{\delta_o}}{n^{\underline{\delta_o}}}}
    ,}
    Where $\delta_o = \ord(\cC) - \ord(\vr)$. Given a configuration $\vc$ with $\|\vc\| = n$, the \emph{order-adjusted propensity} of $\alpha$ is given by using its order-adjusted rate constant instead of $k$ in \cref{eq:propensity-defn}, that is, by
    \eq{
    \adjpropensity{\vc}{v}{\alpha} 
    = \frac{\adjrateconstant{\cC}{\alpha}{n}{v}}{k}\propensity{\vc}{v}{\alpha} 
    \left(
    = \frac{k\cdot v^{\ord(\cC) - 1}}{n^{\underline{\delta_o}}}
    \cdot 
    \prod_{A \in \vec{r}} \vec{c}(A)^{\underline{\vec{r}(A)}}
    \right).
    }
\end{defn}
This also allows us to precisely describe one potential source of inefficiency.
Because we choose reactants instead of choosing reactions directly, we will sometimes choose reactants that have no corresponding reaction.
We may also choose reactants which have low probability of reacting in the original CRN due to small rate constants, requiring an artificial slowdown to maintain exactness.
We account for both of these effects together.
\begin{defn}
    \label{def:efficiency-condition}
    Let $\cC = (\Lambda, R)$ be a CRN, $\vc$ be a configuration of $\cC$ with $\|\vc\| = n$, and $v \in \R_{>0}$ be volume. 
    Let $\bar{p}_{\vc,v}^\mathrm{tot}$ be the total order-adjusted propensity of $\cC$,
    \eq{
    \bar{p}_{\vc,v}^\mathrm{tot}
    = \sum_{\alpha \in R}\adjpropensity{\vc}{v}{\alpha}.
    }
    Let $\bar{k}^{\mathrm{max}}$ be the maximal order-adjusted rate constant of $\cC$,
    \eq{
    \bar{k}^{\mathrm{max}} 
    = \max_{\alpha \in R} \adjrateconstant{\cC}{\alpha}{n}{v}.
    }
    The \emph{scheduler slowdown factor} of $\cC$ on $\vc$ in volume $v$, denoted
    $\schedulerslowdown{\cC}{v}{\vc} \geq 1$, is given by
    \eq{
    S_{\cC,v}(\vc) = \frac {\bar{k}^{\mathrm{max}} \cdot \binom{2n}{\ord(\cC)}}{\bar{p}_{\vc,v}^\mathrm{tot}}.}
\end{defn}
Note that the term $2n$ is because $n$ is the count of molecules before we add $K$; since we add $n$ copies of $K$,
the total count after this step is $2n$.
Our algorithm can efficiently simulate any CRN execution whose configurations maintain a constant upper bound on this value.
It is sufficient for this that \emph{some} reaction with a large (i.e., not asymptotically smaller than any other) adjusted rate constant has no bottlenecks (i.e., all its reactants have count $\Omega(n)$).
We note that it is possible for this condition to be true at some point in the simulation and become false later;
for instance the reaction $2L \to L+F$ starts (with only $L$ present) with 
$S_{\cC,v}(\vc) \approx 4$ (due to unnecessary added $K$),
but it increases as $L$ is converted to $F,$
eventually reaching $\Theta(n^2)$ when $\# F \gg \#L$,
because most interactions are between two $F$'s and are therefore passive.
\opt{sub}{See \cref{fig:lotka_volterra_plot_with_passive_reactions} for a practical discussion.}



\begin{toappendix}
    
\subsection{Definitions relating to CRN simulation}
\label{sec:defns-related-to-crn-simulation}

\begin{defn}
    \label{def:future-distribution}
    Let $v \in \R_{>0}$ be volume. 
    Let $\mathcal C = (\Lambda, R)$ be a CRN 
    with corresponding continuous time Markov chain $\mathcal M_{\mathcal C,v} = (\mathbb N^\Lambda, \mathbf{R})$. 
    Let $\vec{c}$ be a configuration of $\mathcal C$, and $\ell \in \mathbb N$. 
    Then the \emph{discrete $\ell$-future distribution} of $\mathcal C$, denoted $\ellstep{\mathcal C}{\vec c}{\ell}{v}$, is given by $\overline{\mathbf R}^\ell_{\vec{c}, \cdot}$.
    That is, it is the distribution of configurations of $\mathcal C$ starting from $\vec{c}$ after $\ell$ transitions (allowing terminal configurations to self-transition). 
    
    The \emph{continuous $t$-future distribution} of $\mathcal C$, denoted $\tstep{\mathcal C}{\vec c}{t}{v}$, is a distribution over $\mathbb N^\Lambda \cup \{\emptyset\}$ such that if $\mathbf{X} \sim \tstep{\cC}{\vc}{t}{v}$, then for any configuration $\vec{d}$, $\Pr(\mathbf{X} = \vec{d})$ gives the probability of being in configuration $\vec{d}$ after $t$ time starting from configuration $\vec{c}$, and $\Pr(\mathbf{X} = \emptyset)$ gives the probability that $\mathcal C$ is not in any valid configuration at time $t$ (e.g., for CRNs, that infinitely many reactions happen by time $t$).
\end{defn}

\begin{defn}
    A CRN $\mathcal C$ is said to exhibit \emph{finite-time blowup} starting from some configuration $\vec{c}$ if $\Pr(\tstep{\cC}{\vc}{t}{v} = \emptyset) > 0$ for some $v > 0$ and $t > 0$.
\end{defn}
Finite-time blowup has been discussed for chemical systems governed by ODEs~\cite{johnston2012topics,weber1993finite}. 
It can also occur in stochastic CRNs. 
For example, the CRN $2A \rxn^1 3A$ exhibits finite-time blowup from any configuration with at least 2$A$,
in the sense that, for any time $t > 0$,
there is a positive probability that an infinite number of reactions occur before time $t$.

We now give a definition of simulation between CRNs. 
Many definitions of CRN simulation have been given in the literature.
Different definitions are relevant in different contexts, such as in the continuous ODE model~\cite{cardelli2015forward} or for formal verification of correspondence between different CRN implementations~\cite{johnson2019verifying,shin2019verifying}.
Our definition of simulation is much simpler, because we do not need to worry about checking whether an arbitrary CRN simulates another arbitrary CRN.
We are only worried about showing that any CRN can be simulated by its transformed version as shown in 
\Cref{sec:uniform-transformation};
the transformed CRN is closely related to the original.
To this end, our only concern is that the CRNs have matching dynamics: that is, that the output of the Gillespie algorithm would look the same between them,
ignoring the extra species introduced in the transformed CRN.
\begin{defn}
    Let $\mathcal C = (\Lambda', R)$ be a CRN and let $\Lambda \subseteq \Lambda'$. Given a configuration $\vec{c}$ of $\mathcal C$, define its \emph{restriction to $\Lambda$}, denoted $\vec{c}|_{\Lambda}$, as the configuration obtained by taking $\vec{c}$ and setting the count of all species not in $\Lambda$ to 0.
\end{defn}

\begin{defn}
    \label{def:crn-simulation}
    Let $v \in \R_{>0}$ be volume.
    Let $\mathcal C = (\Lambda, R)$ and $\mathcal C' = (\Lambda', R')$ with $\Lambda \subseteq \Lambda'$.
    We say $\cC'$ \emph{simulates $\cC$ from $\vc$} if there exists a configuration $\vc'$ of $\cC'$ such that for all $t \geq 0$,
    \[\tstep{\mathcal C'}{\vc'}{t}{v}|_{\Lambda} = \tstep{\mathcal C}{\vc}{t}{v}.\]
    We say that $\mathcal C'$ \emph{conditionally simulates $\mathcal C$ from $\vec{c}$} if this statement holds conditioned on these configurations being defined, that is, if $\mathbf{X} = \tstep{\cC'}{\vc'}{t}{v}|_{\Lambda}$ and $\mathbf{Y} = \tstep{\cC}{\vc}{t}{v},$ simulation requires that $\mathbf{X} = \mathbf{Y}$ while conditional simulation requires that
    \[\left(\mathbf{X}\mid \mathbf{X} \neq \emptyset\right) = \left(\mathbf{Y}\mid \mathbf{Y} \neq \emptyset\right).\] 
    If $\mathcal C'$ (conditionally) simulates $\mathcal C$ from every configuration $\vec{c}$ of $\mathcal C$, we say that $\mathcal C'$ (conditionally) simulates $\mathcal C$. 
    We may also say $\cC'$ \emph{simulates $\cC$ in continuous time}; if any of the above statements hold for $\ellstep{\cC}{\vc}{\ell}{v}$ and $\ellstep{\cC'}{\vc'}{\ell}{v}$ (rather than $\tstep{\cC}{\vc}{t}{v}$ and $\tstep{\cC'}{\vc'}{t}{v}$), then we say $\cC'$ \emph{simulates $\cC$ in discrete time}.
\end{defn}


\end{toappendix}

\section{CRN transformations}
\label{sec:uniform-transformation}

In \cite{berenbrink2020simulating}, working in the model of population protocols, the authors show how to simulate $\Theta(\sqrt n)$ interactions on a population of $n$ agents in $\Theta(\log n)$ time. 
The key insight is that sequential interactions between non-overlapping pairs of agents can be simulated simultaneously, as a \emph{batch}. 
The results of these interactions only need to be computed once a \emph{collision} occurs, i.e., an agent is chosen to interact a second time. 
On average $\Theta(\sqrt n)$ such interactions will occur before a collision occurs. 
These interactions are referred to as a \emph{collision-free run}.
Rather than simulating reactions individually, the authors first draw from the distribution of the length of a collision-free run.
Then, they determine how many times each particular kind of interaction occurs within that run.
Finally, they simultaneously apply the results of all of these interactions
(e.g., for reaction $A+B \rxn C+D$, rather than decrementing $A,B$ and incrementing $C,D$ repeatedly, instead if the reaction is sampled to occur $103$ times, subtract $103$ from $A$ and $B$, and add $103$ to $C$ and $D$), along with one extra collision interaction sampled to include at least one agent that was part of the collision-free run.

Because general CRNs may contain reactions that unpredictably change total molecular count, this idea is not immediately applicable. 
However, this method can be applied to uniform CRNs, with some modifications to account for changing molecular count and reactions of arbitrary uniform order (as opposed to uniform order 2).
Thus our first step is to transform an arbitrary CRN $\cC$ into a uniform CRN simulating $\cC$ (see \cref{sec:defns-related-to-crn-simulation} for our definition of CRN simulation).
\subsection{Transforming an arbitrary CRN into a uniform CRN}

\begin{algorithm}
\caption{Uniform CRN transformation}
\label{alg:crn-transformation}
\begin{algorithmic}
\REQUIRE{CRN $\cC = (\Lambda, R)$, volume $v \in \mathbb R_{>0}$, integer $k_0 \geq \ord(\cC)$}
\ENSURE{Uniform CRN $\mathcal C' = (\Lambda', R')$ simulating $\cC$ as in \cref{lem:simulation}}
\begin{enumerate}
    \item Add two new species, $K$ and $W$, setting $\Lambda' = \Lambda + \{K, W\}$. 
    \item To define $R'$, for each reaction $\alpha = (\vr, \vp, k)$ in $R$,
    let $\delta_o = \ord(\cC) - \ord(\alpha)$ and $\delta_g = \gen(\cC) - \gen(\alpha)$,
    and add the reaction $\alpha' = (\vr',\vp',k')$ to $R'$, where 
    \begin{enumerate}
        \item 
        $\vr' = \vr + \delta_o\cdot K$,
        \item 
        $\vp' = \vp + \delta_o\cdot K + \delta_g\cdot W$,
        
        \item $k' = \adjrateconstant{\cC}{\alpha}{k_0}{v}$ (see \cref{def:adjusted-parameters}).
    \end{enumerate}
    In other words: add $\ord(\mathcal C) - \ord(\alpha)$ copies of $K$ to both $\vr$ and $\vp$,
    add $\gen(\cC) - \gen(\alpha)$ copies of $W$ to $\vp$,
    and adjust $k$ by a multiplicative factor depending on input constants and how many $K$ will be added to the configuration being simulated.
\end{enumerate}
\end{algorithmic}
\end{algorithm}

Given an arbitrary CRN $\mathcal C$, we transform it into a uniform CRN $\mathcal C'$ via \cref{alg:crn-transformation}.
We demonstrate the method on the example CRN,
\opt{full}{
\begin{align*}
A + B &\rxn^2 C\\
C &\rxn^3 A + B
\end{align*}
}
\opt{sub,final}{
$A + B \rxn^2 C$,
$C \rxn^3 A + B$,
}
with a single reversible hetero-dimerization reaction, a case that existing methods cannot simulate efficiently. 
The second reaction's order is one less than the order of the CRN, so we add one copy of $K$ to catalyze the second reaction, giving the CRN uniform order:
\opt{full}{
\begin{align*}
A + B &\rxn^2 C\\
C + K &\rxn^3 A + B + K
\end{align*}
}
\opt{sub,final}{
$A + B \rxn^2 C$,
$C + K \rxn^3 A + B + K$.
}
Next, the first reaction's generativity is -1, while the CRN has generativity 1. So we add 2 copies of $W$ to the first reactant's products, giving the CRN uniform generativity (thus now uniform):
\opt{full}{
\begin{align*}
A + B &\rxn^2 C + W + W\\
C + K &\rxn^3 A + B + K
\end{align*}
}
\opt{sub,final}{
$A + B \rxn^2 C + W + W$,
$C + K \rxn^3 A + B + K$
}
Finally, since a reaction's rate depends on its reactants,
we must adjust the rate constant of the second reaction where $K$ was added as a reactant,
accounting for volume and our choice of $k_0$.
Here, this will result in multiplying the second reaction's rate constant by $\frac v {k_0}$, so if we take, for example, $v = 6$ and $k_0 = 30$, this would result in the output CRN:
\opt{full}{
\begin{align*}
A + B &\rxn^2 C + W + W\\
C + K &\rxn^{0.6} A + B + K
\end{align*}
}
\opt{sub,final}{
$A + B \rxn^2 C + W + W$,
$C + K \rxn^{0.6} A + B + K.$
}

In summary:
$W$ is an inert (i.e., not a reactant anywhere) ``waste'' species,
so does not affect the dynamics of the CRN.
$K$ has constant count, so its presence has a constant effect on each reaction's propensity over the course of a batch, which is accounted for by modifying rate constants.
$k_0$ represents the count of $K$ that will be added to the original CRN configuration for simulation.
We require $k_0 \geq \ord(\mathcal C)$ for simplicity.
Choosing $k_0 \in \Theta(n)$ yields the simplest asymptotic analysis, and is a reasonable practical choice.



\begin{lemrep}
\label{lem:simulation}
    Let $\cC = (\Lambda, R)$ be a CRN, and let $\cC' = (\Lambda', R')$ be the output of \cref{alg:crn-transformation} on $\cC$ with any choice of $v$ and $k_0$. Then $\cC'$ simulates $\cC$ in continuous and discrete time (see \cref{def:crn-simulation}). 
\end{lemrep}

\opt{sub}{
Our complete definition of CRN simulation is given in \cref{sec:defns-related-to-crn-simulation}. 
In short, $\cC'$ simulating $\cC$ means that samples drawn from the two CRNs have identical distributions of counts of species in $\cC$,
as this is what it means for a simulation algorithm to be exact.
}

\begin{proof}
    Let $\mathbf{k} = k_0\cdot K$.
    Given a configuration $\vec{c}$ of $\mathcal C$, we will use the configuration $\vec{c}' = \vc + \mathbf{k}$ of $\mathcal C'$ to satisfy the definition of simulation given in \cref{def:crn-simulation}. 

    Consider the Markov chain $\mathcal M_{\cC', v} = (S', \mathbf{R}')$ representing $\mathcal C'$. 
    Define an equivalence relation $\sim$ on $S'$ by $\vc_1 \sim \vc_2 \iff \forall A \in \Lambda' \setminus \{W\}, \vc_1(A) = \vc_2(A)$, i.e., they differ only in their count of $W$.
    Let $\widetilde{\mathcal M}_{\mathcal C'}$ be the continuous time Markov chain whose states are the equivalence classes under $\sim$ of the states of $S'$ which contain exactly $k_0$ copies of $K$, with transitions inherited from $\mathbf{R}'$.
    This inheritance is well-defined because $W$ is inert, so equivalent states always have transitions to equivalent states.
    Note also that these are the only equivalence classes we need to consider, as any configuration reachable from $\vec{c}'$ has $k_0$ copies of $K$.

    We observe that $\widetilde{\mathcal M}_{\mathcal C'}$ is isomorphic to $\mathcal M_{\mathcal C}$.
    For any state $\vec{d}$ in $\mathcal M_\cC$, the corresponding state in $\widetilde{\mathcal M}_{\cC'}$ is the equivalence class of $\vec{d} + \mathbf{k}$.
    By construction, corresponding reactions have equal propensities in these two CRNs:
    the factor by which \cref{alg:crn-transformation} multiplies $k$ to obtain $k'$ accounts for the extra $K$ in the reactants and the need to divide by a different power of $v$. 
    Explicitly, for any pair of corresponding reactions $\alpha = (\vr, \vp, k)$ and $\alpha' = (\vr', \vp', k')$,
    \begin{align*}
    \propensity{\vc'}{v}{\alpha'}
    &= \frac{k'}{v^{\|\vec{\vr'}\|}} \cdot \left(\prod_{A \in \vec{r'}} \vec{c'}(A)^{\underline{\vec{r'}(A)}}\right)\\
    &= \frac{k \cdot \frac{v^{\vec{r'}(K)}}{\lfloor v \rfloor^{\underline{\vec{r}(K)}}}}{v^{\vr'(K) + \|\vec{\vr}\|}} \cdot \left(\lfloor v \rfloor^{\underline{\vec{r}(K)}}\prod_{A \in \vec{r}} \vec{c}(A)^{\underline{\vec{r}(A)}}\right)\\\
    &= \frac{k}{v^{\|\vec{\vr}\|}} \cdot \left(\prod_{A \in \vec{r}} \vec{c}(A)^{\underline{\vec{r}(A)}}\right)\\
    &=\propensity{\vc}{v}{\alpha}.    
    \end{align*}
    This shows that corresponding transitions have the same rates in $\mathcal M_{\mathcal C}$ and $\widetilde{\mathcal M}_{\mathcal C'}$.
    It follows that they have identical distributions of configurations sampled at any time or number of steps.
\end{proof}

\subsection{Transforming a uniform CRN into a uniformly reactive CRN}

Our batching algorithm is based on population protocols:
it chooses a random set of molecules, rather than a random reaction. 
These molecules may not correspond to a reaction. 
In this case, for consistency, we still need molecular count to update. 
Additionally, we must account for differing rate constants between different reactions.

We solve these issues by ensuring that the value $\totalrateconstant{\cC}{\vr}$ (see \cref{def:total-rate-constant}) is identical among every possible multiset $\vr$ of reactants the batching algorithm might sample.
This is the notion of a \emph{uniformly reactive} CRN (\cref{def:uniformly-reactive}).
\cref{alg:null-reactions} transforms the output of \cref{alg:crn-transformation} into such a CRN. 
It does so by adding \emph{passive reactions}, which update molecular count consistently but only add $W$, and thus do not affect any reaction's propensity.

For example, the example given for \cref{alg:crn-transformation} would be transformed into the following:
\opt{full}{
\begin{align*}
A + B &\rxn^2 C + W + W\\
C + K &\rxn^{0.6} A + B + K\\
C + K &\rxn^{1.4} C + K + W\\
A + A &\rxn^{2} A + A + W\\
A + C &\rxn^{2} A + C + W\\
&\vdots\\
W + W &\rxn^{2} W + W + W,
\end{align*}
}
\opt{sub,final}{
$A + B \rxn^2 C + W + W$,
$C + K \rxn^{0.6} A + B + K$,
$C + K \rxn^{1.4} C + K + W$,
$A + A \rxn^{2} A + A + W$,
$A + C \rxn^{2} A + C + W$,
$\dots$,
$W + W \rxn^{2} W + W + W,$
}
where there is a passive reaction with rate constant 2 for every possible reactant multiset except for $A + B$ and $C + K$.

\begin{algorithm}
\caption{Total rate constant uniformity transformation}
\label{alg:null-reactions}
\begin{algorithmic}
\REQUIRE{CRN $\cC = (\Lambda, R)$, volume $v \in \mathbb R_{>0}$, integer $k_0 \geq \ord(\cC)$}
\ENSURE{Uniformly reactive CRN $\mathcal C'$ simulating $\cC$ as in \cref{lem:null-reaction-simulation}}
\begin{enumerate}
    \item
    Let $\cC_0 = (\Lambda + \{K, W\}, R_0)$ be the uniform CRN output by \cref{alg:crn-transformation} on $\cC$, $v$ and $k_0$.
    
    \item
    Let $k^{\mathrm{max}} = \max_{\vec r \in \mathbb N^\Lambda_{\ord(\mathcal C_0)}} \totalrateconstant{\cC_0}{\vr}$, the maximum total rate constant of any set of reactants.
    
    \item
    Define a set of \emph{passive reactions} P: for every $\vec{r} \in \mathbb N^\Lambda_{\ord(\mathcal C_0)}$ such that $\totalrateconstant{\cC_0}{\vr} < k^{\mathrm{max}}$, P contains the reaction
    \eq{
    (\vec{r}, \vec{r} + \gen(\cC_0) \cdot W, k^{\mathrm{max}} - \totalrateconstant{\cC_0}{\vr}).
    }
    \item Output $\cC' = (\Lambda + \{K, W\}, R_0 + P)$.
\end{enumerate}
\end{algorithmic}
\end{algorithm}

\begin{obs}
    The output of \cref{alg:null-reactions} is uniformly reactive (see \cref{def:uniformly-reactive}).
\end{obs}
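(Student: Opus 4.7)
The plan is to verify two things: first, that the output CRN $\mathcal{C}' = (\Lambda', R_0 + P)$ has uniform order and uniform generativity (so that ``uniformly reactive'' is well-defined for it); second, that the total rate constant $\totalrateconstant{\cC'}{\vr}$ is the same constant for every choice of reactant vector $\vr \in \mathbb{N}^{\Lambda'}_{\ord(\cC')}$. The value of that constant will turn out to be exactly $k^{\mathrm{max}}$.

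For uniformity, I would start from the already-established fact (\cref{lem:simulation} together with the definition of $\cC_0$) that $\cC_0$ is uniform. Each passive reaction $(\vr, \vr + \gen(\cC_0)\cdot W, k^{\mathrm{max}} - \totalrateconstant{\cC_0}{\vr})$ added by \cref{alg:null-reactions} has $\|\vr\| = \ord(\cC_0)$ by how $\vr$ is quantified, and the product side adds exactly $\gen(\cC_0)$ copies of $W$, so the generativity of each passive reaction equals $\gen(\cC_0)$. Hence $\cC'$ has the same order and same generativity as $\cC_0$, and in particular $\ord(\cC') = \ord(\cC_0)$.

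For the total rate constant being constant across all reactant vectors, I would simply split into two cases for $\vr \in \mathbb{N}^{\Lambda'}_{\ord(\cC_0)}$. If $\totalrateconstant{\cC_0}{\vr} = k^{\mathrm{max}}$, then \cref{alg:null-reactions} adds no passive reaction with reactants $\vr$, so $\totalrateconstant{\cC'}{\vr} = \totalrateconstant{\cC_0}{\vr} = k^{\mathrm{max}}$. If $\totalrateconstant{\cC_0}{\vr} < k^{\mathrm{max}}$, then exactly one passive reaction with reactants $\vr$ is added, contributing rate constant $k^{\mathrm{max}} - \totalrateconstant{\cC_0}{\vr}$, so by \cref{def:total-rate-constant},
\[
\totalrateconstant{\cC'}{\vr} = \totalrateconstant{\cC_0}{\vr} + \bigl(k^{\mathrm{max}} - \totalrateconstant{\cC_0}{\vr}\bigr) = k^{\mathrm{max}}.
\]
Either way the total is $k^{\mathrm{max}}$, which is by definition the condition for uniform reactivity.

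There isn't really a serious obstacle here; the observation follows immediately from how the passive reactions are constructed. The only thing to be careful about is the interpretation of the index set in the quantifier ``for every $\vr \in \mathbb{N}^{\Lambda}_{\ord(\cC_0)}$'' in \cref{alg:null-reactions}: as the worked example makes clear (passive reactions such as $C+K \rxn C+K+W$ and $W+W \rxn W+W+W$ are included), this should be read as ranging over all reactant multisets of the right size over the full species set $\Lambda \cup \{K, W\}$ of $\cC_0$, and the proof should state this explicitly before invoking the two-case argument above.
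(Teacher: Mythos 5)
Your proof is correct and is exactly the verification the paper leaves implicit: the observation is stated without proof precisely because, as you note, each reactant multiset's total rate constant is topped up to $k^{\mathrm{max}}$ by construction, and the passive reactions preserve order and generativity. The only tiny slip is attributing the uniformity of $\cC_0$ to \cref{lem:simulation}; that fact comes from the output specification of \cref{alg:crn-transformation} itself (the $K$ and $W$ padding), not from the simulation lemma.
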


\begin{lemrep}
    \label{lem:null-reaction-simulation}
    Let $\mathcal C = (\Lambda, R)$ be a CRN, and $\vec{c}$ a configuration of $\mathcal C$. Let $\mathcal C' = (\Lambda', R')$ be the result of running \cref{alg:null-reactions} on $\mathcal C$ with any choice of $v$ and $k_0$. Then $\mathcal C'$ conditionally simulates $\mathcal C$ in continuous and discrete time (see \cref{def:crn-simulation}).
\end{lemrep}

Note that we can only guarantee conditional simulation because the reactions we add to $\cC'$ may cause it to exhibit finite-time blowup. 
For example, the reaction $2W \rxn 3W$ causes finite-time blowup on its own.
Removing $W$ between batches prevents this issue.

\begin{proof}
    Given a configuration $\vec c$ of $\mathcal C$, we simulate it from $\vc + k_0 \cdot K$. 
    We must show that these two CRNs have identical distributions over configurations at time $t$, conditioned on both of them having valid configurations at time $t$.
    By construction, there is a natural injection $f: R \to R'$ sending $\alpha \in R$ to its modified version in $R'$ that was output by \cref{alg:crn-transformation}.
    We couple the two CRNs as follows: whenever a reaction $\alpha'$ occurs in $\mathcal C'$, if there is some $\alpha \in R$ such that $f(\alpha) = \alpha'$, then $\alpha$ occurs in $\mathcal C$.
    Otherwise (i.e., if $\alpha'$ is passive), nothing happens in $\mathcal C$.
    This is a valid coupling because any passive $\alpha'$ only affect the count of $W$, so their occurrence does not affect the propensity of any reactions in $R'$ nor the count of any species in $\Lambda$.
    Therefore, from the point of view of $\cC$, these extra reactions do not affect the dynamics of the system so long as $\cC'$ eventually reaches time $t$, which is guaranteed by conditioning. 
\end{proof}
\section{Simulating discrete-time CRN executions}
\opt{full,final,sub}{
\label{sec:discrete-time-simulation}
}
\begin{toappendix}
\opt{sub}{\label{apx:discrete-time-simulation}}
\end{toappendix}

We have shown that any CRN can be simulated by a uniformly reactive CRN. 
In this section, we show how to leverage this to efficiently sample a ``discrete-time'' CRN execution,
meaning that we will merely count the number of reactions in each batch,
but make no attempt to sample the amount of continuous time elapsed during the batch according to the Gillespie distribution.
That is, given a CRN $\cC$, a starting configuration $\vc$, volume $v$, and a number of steps to simulate $\ell$, we wish to efficiently draw from the distribution $\ellstep{\cC}{\vc}{\ell}{v}$ (\cref{def:future-distribution}).
\Cref{sec:exact-time} describes how to modify the algorithm to sample accurate timestamps to assign to each sampled configuration.


\subsection{Discrete sampling with a scheduler}
\label{subsec:scheduler}
We now turn our attention to algorithms which simulate CRNs, rather than CRNs which simulate each other.
The first such algorithm, \cref{alg:scheduler}, is analogous to the algorithm $\textsc{SEQ}$ in \cite{berenbrink2020simulating}.
It is not intended to be executed, and is provided for analytical comparison.
The only adjustment we must make is to convert rate constants into probabilities.



\begin{algorithm}
\caption{Scheduler-based uniform CRN simulation}
\label{alg:scheduler}
\begin{algorithmic}
\REQUIRE{CRN $\mathcal C = (\Lambda, R)$, volume $v \in \R_{>0}$, configuration $\vc$ of $\cC$, integer $\ell \in \N$}
\ENSURE{Configuration $\vc'$ of $\cC$ distributed as in \cref{lem:scheduler-correctness}}

    Let $\cC'$ be the uniformly reactive CRN output by \cref{alg:null-reactions} on input $\cC$, $v$, and $k_0 = \|\vc\|$. Let $\vc' = \vc + \|\vc\|\cdot K$.
    Set \texttt{steps} = 0, and repeat until \texttt{steps} = $\ell$:
    \begin{enumerate}
        
        \item 
        Choose $\ord(\cC')$ molecules from $\vc'$, each uniformly at random without replacement, and let $\vr$ be the resulting multiset.
        
        \item
        Choose a reaction $\alpha$ from $\cC'$ with reactant vector $\vr$, with probability proportional to its rate constant (i.e., reaction $\alpha_i = (\vr_i, \vp_i, k_i)$ is chosen with probability $\frac {k_i}{\totalrateconstant{\cC'}{\vr}}$). Update $\vc'$ by executing $\alpha$.

        \item
        If a non-passive reaction was executed, 
        (i.e., anything other than $\vr \rxn \vr + \gen(\cC') \cdot W$), 
        increment \texttt{steps}.
        
        \item
        Set $\vc'(W)$ to 0 (to avoid finite-time blowup).
        
    \end{enumerate}
\end{algorithmic}
\end{algorithm}

From \Cref{lem:uniformly-reactive,lem:null-reaction-simulation}, we can infer that \cref{alg:scheduler} is exact. Formally:

\begin{cor}
    \label{lem:scheduler-correctness}
    On input $\cC$, $v$, $\vc$, and $\ell$, the output of \cref{alg:scheduler} is distributed as $\ellstep{\cC}{\vc}{\ell}{v}$ (see \cref{def:future-distribution}).
\end{cor}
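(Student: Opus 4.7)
The plan is to observe that the corollary is essentially a bookkeeping combination of the two cited lemmas, so the proof need only verify that the two sampling operations inside the loop of \cref{alg:scheduler} jointly reproduce one step of the Gillespie chain on $\cC'$, and that stripping away passive reactions reproduces one step of the chain on $\cC$.

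First, I would fix the setup: let $\cC'$ be the CRN produced by \cref{alg:null-reactions} with $k_0 = \|\vc\|$, so by \cref{lem:null-reaction-simulation} it conditionally simulates $\cC$ in discrete time, with starting configuration $\vc' = \vc + k_0 \cdot K$. Conditioning is irrelevant here because the discrete-time process takes only finitely many steps, so $\ellstep{\cC'}{\vc'}{\ell}{v}|_\Lambda = \ellstep{\cC}{\vc}{\ell}{v}$. The goal then reduces to showing that \cref{alg:scheduler} samples $\ellstep{\cC'}{\vc'}{\ell}{v}$, modulo projecting out the $W$-count that is reset each iteration.

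Next I would verify that one pass through steps 1--2 samples a reaction of $\cC'$ according to the Gillespie distribution. Since $\cC'$ is uniformly reactive, \cref{lem:uniformly-reactive} guarantees that the reactant multiset chosen by Gillespie in $\cC'$ has the same distribution as $\ord(\cC')$ molecules drawn uniformly without replacement from $\vc'$, which is exactly what step 1 does. Conditioned on the reactant vector $\vr$, the Gillespie algorithm picks reaction $\alpha_i = (\vr,\vp_i,k_i)$ with probability $\propensity{\vc'}{v}{\alpha_i}/\sum_j \propensity{\vc'}{v}{\alpha_j} = k_i/\totalrateconstant{\cC'}{\vr}$ (the combinatorial prefactor cancels because all reactions with reactants $\vr$ share it), which is exactly the rule used in step 2. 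Composing these sampling choices yields the Gillespie distribution on $\cC'$.

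Finally I would handle the passive reactions and the $W$-reset. Passive reactions only add copies of $W$, which appears in no reactant vector of $\cC'$ (since \cref{alg:null-reactions} only ever ranges $\vr$ over multisets drawn from $\Lambda + \{K\}$ in this construction) and hence never changes any propensity; moreover $W$ plays no role in the distribution of subsequent sampled reactants. Thus resetting $\vc'(W)$ to $0$ in step 4 preserves the joint distribution of future counts in $\Lambda + \{K\}$. Because step 3 increments the counter only on non-passive reactions and the passive reactions correspond exactly to the ``do nothing in $\cC$'' coupling of \cref{lem:null-reaction-simulation}, after the loop terminates we have executed precisely $\ell$ non-passive reactions of $\cC'$, so $\vc'|_\Lambda$ is distributed as $\ellstep{\cC}{\vc}{\ell}{v}$. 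The main (minor) subtlety is ensuring that the passive-reaction skipping and $W$-reset are truly invisible to the projected dynamics; everything else is immediate from the cited lemmas.
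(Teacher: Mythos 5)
Your proof is correct and takes essentially the same route as the paper, which states this corollary as following directly from \cref{lem:uniformly-reactive} and \cref{lem:null-reaction-simulation} without further elaboration; your expansion of why steps 1--2 reproduce one Gillespie step on $\cC'$ and why the passive/\,$W$ bookkeeping is invisible to the projected dynamics is exactly the intended argument. (One small inaccuracy: passive reactions produced by \cref{alg:null-reactions} \emph{can} have $W$ as a reactant, e.g.\ $W+W \rxn 3W$ in the paper's example, but this does not affect your conclusion since the per-iteration reset keeps $\vc'(W)=0$ and passive reactions alter only the count of $W$.)
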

\subsection{Batch size sampling}

In \cite{berenbrink2020simulating}, the authors use a notion of collision-free runs, and sample from the length of such a run. 
Our analysis will be similar to theirs, with some additional details to account for arbitrary order and generativity.
Below,
we use ``red'' and ``green'' to refer to molecules that have interacted (respectively, not interacted) in a batch.

\begin{defn}
    Let $\cC$ be a uniformly reactive CRN and $\vc$ be a configuration of $\cC$.
    Suppose each molecule in $\vc$
    is colored red or green, and that reactions consume specific copies of molecules uniformly at random.
    Whenever a reaction occurs, all products (including catalysts) are colored red.
    Then, given an execution $\cE$ of $\cC$ from $\vc$, we say that the \emph{collision index} of $\cE$ is the index (starting from 0) of the first reaction with any red molecule as a reactant. 
    That is, it is the number of reactions that are executed before any red molecule is a reactant. 

    We define the random variable $\ell$ as the collision index of $\cE$ when $\cE$ is sampled (using normal stochastic sampling) among all (infinite) executions of $\cC$ from $\vc$. 
    We say $\ell \sim \mathbf{coll}(n, r, o, g)$,\footnote{For our purposes, the parameter $r$ will always be 0; however, we show the computation in generality, as general $r$ is necessary to implement \emph{multibatching} as described in \cite{berenbrink2020simulating}. Adapting our result to this regime can be done in much the same way as in \cite{berenbrink2020simulating}.} where $n = |\vc|$, $r$ is the count of red molecules in $\vc$, $o = \ord(\cC)$ and $g = \gen(\cC)$. 
\end{defn}
Note that because $\cC$ is 
uniformly reactive (recall \Cref{def:uniformly-reactive}), this distribution will not depend on the specific reactions in $\cC$ or species present in $\vc$.
\begin{defn}
    \label{def:multifactorial}
    The \emph{multifactorial} $n!^{(g)}$ is given by the product $n(n-g)(n-2g)\ldots$, continuing until (and including) the last positive term.
\end{defn}




\begin{lemrep}
    \label{lem:collision-length-cdf}
    Let $\ell \sim \mathbf{coll}(n, r, o, g)$, where $g \geq 0$.
    Then $\ell$ has (reverse) cumulative distribution
    \opt{full}{
    \begin{align*}
        \Pr(\ell \geq k) = \begin{cases} 
          \frac{(n - r)!}{(n - r - ko)!} \prod_{j=0}^{o-1}\frac{(n - g - j)!^{(g)}}{(n + g(k-1) - j)!^{(g)}} & 0 \leq k \leq \frac {n - r} o, g > 0\\\\
          \frac{(n - r)!}{(n - r - ko)!} \prod_{j=0}^{o-1}\frac{1}{(n - j)^k} & 0 \leq k \leq \frac {n - r} o, g = 0\\\\
          0 & \text{otherwise.}
       \end{cases}
    \end{align*}
    }
    \opt{sub,final}{
    $\Pr(\ell \geq k) = 
    \frac{(n - r)!}{(n - r - ko)!} \prod_{j=0}^{o-1}\frac{(n - g - j)!^{(g)}}{(n + g(k-1) - j)!^{(g)}}$ if $0 \leq k \leq \frac {n - r} o, g > 0$,
    $\Pr(\ell \geq k) = 
    \frac{(n - r)!}{(n - r - ko)!} \prod_{j=0}^{o-1}\frac{1}{(n - j)^k}$ if $0 \leq k \leq \frac {n - r} o, g = 0$,
    and it is $\Pr(\ell \geq k) = 0$ otherwise.
    }
\end{lemrep}

\opt{full}{
\todo{JP: My intuition says that the following will be a useful example to show that actually simulating the collision is important. I.e., ``proposition 1'': if you don't simulate collisions, it does affect things meaningfully. Consider the CRN $A+A \to B+B$, $B+B \to C+C$. Consider the statistic of the time at which you first observe a C in the population. If I never simulate a collision (i.e., do ``tau leaping without replacement'') then this statistic should wind up being later than it otherwise would have been, generally. Maybe this gets even worse if you add things like $C+C \to D+D, D+D \to E+E$, etc.}
}
\begin{proof}
    Because of uniform reactivity, we may ignore what species each molecule is: 
    at each step, every $o$-tuple of molecules is equally likely to comprise the next reaction's reactants.
    Therefore, we can equivalently consider a scheduler that repeatedly picks individual reactants without replacement, and executes a reaction every time it picks $o$ of them (returning the products to the pool of molecules it may pick).
    If, after a reaction, the population has $m$ total molecules with $l$ green and $m - l$ red, then the probability of this scheduler only picking green molecules in the \emph{next} reaction is
    \[\prod_{j=0}^{o-1} \frac{l - j}{m - j}.\]
    Initially, there are $n$ molecules, $r$ red and $n - r$ green. 
    After each reaction, $o$ green molecules are replaced with $o + g$ red molecules. 
    Therefore, the probability of executing at least $k$ reactions before a collision is given by 
    \[\Pr[\ell \geq k] = \prod_{i=0}^{k-1} \prod_{j=0}^{o-1} \frac{\overbrace{n - r - oi}^{\text{\#green after $i$ rxns}} - j}{\underbrace{n + gi}_{\text{pop size after $i$ rxns}} - j}.\]
    Examining the numerator and denominator separately, we observe that the numerator varies over all integers from $n - r - ko + 1$ to $n - r$, i.e., the combined product of all numerators is the ratio
    $\frac{(n - r)!}{(n - r - ko)!}$.
    Factoring this out and commuting the products yields:
    \begin{align*}
        \Pr[\ell \geq k] = &\frac{(n - r)!}{(n - r - ko)!}\cdot \prod_{j=0}^{o-1}\prod_{i=0}^{k-1}\frac{1}{n + gi - j}.
    \end{align*}
    This inner product is the ``Pochhammer $g$-symbol'' $(n + gi - j)_{k,g}$.\footnote{This is normally called the ``Pochhammer $k$-symbol'', but we use $k$ as the first parameter rather than the second.} 
    Our task is now to rewrite it in a way that can be efficiently computed when $k$ is large.
    When $g > 0$, we can rewrite the inner product as a ratio where all but $k$ terms cancel by using multifactorials (\cref{def:multifactorial}):
    \begin{align*}
        \frac{(n - r)!}{(n - r - ko)!}\cdot \prod_{j=0}^{o-1}\frac{(n - g - j)!^{(g)}}{(n + g(k-1) - j)!^{(g)}}.
    \end{align*}
    When $g = 0$, corresponding to a uniformly conservative CRN,
    \footnote{Note that even in this case and when $o = 2$ (i.e., for population protocols), we obtain a different expression from \cite{berenbrink2020simulating}, since we use a different definition of this distribution in terms of reactions ($o$ molecules sampled at a time) instead of reactants (1 molecule sampled at a time).}
    each term in the inner product is identical and the formula simplifies to: 
    \begin{align*}
        \frac{(n - r)!}{(n - r - ko)!}\cdot \prod_{j=0}^{o-1}\frac{1}{(n - j)^k}.
        \qquad\qquad\qquad
        \qedhere
    \end{align*}
\end{proof}

\begin{lemrep}
\label{lem:sample_coll_efficient}
    It is possible to sample from $\mathbf{coll}(n, r, o, g)$ in time $O(o \cdot \log n)$.
\end{lemrep}
\opt{sub,final}{
\begin{proofsketch}
    \opt{sub}{A full proof is given in \Cref{apx:discrete-time-simulation}.}
    Intuitively,
    we use inversion sampling~\cite{devroye2006nonuniform},
    based on the CDF for $\mathbf{coll}(n, r, o, g)$ as shown in \Cref{lem:collision-length-cdf},
    and similarly to~\cite{berenbrink2020simulating},
    working with the log of the values in that formula.
    We rewrite the log of $n!^{(g)}$ in terms of the log-gamma function $\ln \Gamma(x)$, which is efficiently computable~\cite{devroye2006nonuniform}.
\end{proofsketch}
}
\begin{proof}
    To sample $\ell \sim \mathbf{coll}(n, r, o, g)$, we can draw a uniform variate $U \sim \text{Unif}([0,1])$ and use inversion sampling~\cite{devroye2006nonuniform} to draw a sample via binary search in $O(\log n)$ comparisons between $U$ and the CDF (because $0 \leq \ell \leq n$).
    To compare $U$ to the formula given by \cref{lem:collision-length-cdf}, we take the log of both sides and see that the relation we must compute is whether or not
    \begin{align*}
        \log((n-r)!) - \log((n - r - ko)!) + \sum_{j = 0}^{o - 1} \left[\log\left((n - g - j)!^{(g)}\right) - \log\left((n + g(k-1) - j)!^{(g)}\right)\right] < \log U.
    \end{align*}
    To compute this efficiently, we make use of the $\Gamma$ function, the generalization of factorial to non-integers.
    Computing $\log(x!)$ for any $x$ can be done efficiently using standard implementations of the log-gamma function $\log(\Gamma(x))$.
    To compute the log of a multifactorial, we make use of the relation that $x \cdot \Gamma(x) = \Gamma(x+1)$, even for non-integer $x$
    (e.g., $x=17/5$ in the first substitution below).
    We divide each term of each multifactorial product by $g$, and collect these factors of $g$.
    This yields a product where consecutive terms differ by 1, which can be written as a ratio of $\Gamma$ functions.
    For example,
    \begin{align*}
        \log(17!^{(5)}) 
    &=
        \log(17 \cdot 12 \cdot 7 \cdot 2)
    =
        \log\left(5^4 \cdot \frac{17}{5} \cdot \frac{12}{5} \cdot \frac{7}{5} \cdot \frac{2}{5}\right)
    = 
        \log\left(5^4 \cdot \frac{17}{5} \cdot \frac{ \Gamma(17/5)}{\Gamma(17/5)} \cdot \frac{12}{5} \cdot \frac{7}{5} \cdot \frac{2}{5}\right)
    \\&= 
        \log\left(5^4 \cdot \frac{\Gamma(22/5)}{\Gamma(17/5)} \cdot \frac{12}{5} \cdot \frac{7}{5} \cdot \frac{2}{5}\right)
    = 
        \log\left(5^4  \cdot \frac{\Gamma(22/5)}{\Gamma(17/5)} \cdot 
        \frac{12}{5} \cdot \frac{\Gamma(12/5)}{\Gamma(12/5)} \cdot \frac{7}{5} \cdot \frac{2}{5}\right)
    \\&=
        \log\left(5^4  \cdot \frac{\Gamma(22/5)}{\Gamma(17/5)} \cdot \frac{\Gamma(17/5)}{\Gamma(12/5)} \cdot \frac{7}{5} \cdot \frac{2}{5}\right)
    =
        \log\left(5^4 \cdot \frac{\Gamma(22/5)}{\Gamma(12/5)} \cdot \frac{7}{5} \cdot \frac{2}{5}\right)
    \\\dots&=
        \log\left(5^4 \cdot
        \frac{\Gamma(22/5)}{\Gamma(2/5)}
        \right)
    = 
        4\log(5) + \log \Gamma(22/5) - \log \Gamma(2/5).
    \end{align*}
    Using this method, these log-multifactorial terms can be computed efficiently even if they contain many terms.
    It follows that every term on the 
    \todo{DD: left-hand side of what? Perhaps should give equation a number}
    left-hand side can be computed efficiently, and there are $\Theta(o)$ terms.
\end{proof}

\opt{full}{
Since we start each batch with no molecules having interacted, in the next lemma,
we consider only this case ($r=0$ in $\mathbf{coll}(n, 0, o, g)$);
however,
we strongly suspect that a similar result holds that was shown in~\cite{berenbrink2020simulating},
that for $r \geq \sqrt{n}$,
$\E{\ell} = \Theta(n / r)$.
}

\begin{lemrep}
    \label{lem:sqrt-expectation}
    Let $\ell \sim \mathbf{coll}(n, 0, o, g)$. 
    Then $\E{\ell} = \Theta(\sqrt{n})$. 
\end{lemrep}


\begin{proof}
    Recall from the proof of \Cref{lem:collision-length-cdf} that (setting $r=0$)
    \[
    \Pr[\ell \geq k] 
    = 
    \prod_{i=0}^{k-1} 
    \prod_{j=0}^{o-1} 
    \frac{n - oi - j}{n + gi - j}.
    \]

    We first show the lower bound:
    \begingroup
    \allowdisplaybreaks
    \begin{align*}
    \E{\ell}
    &=
    \sum_{k=1}^{\lfloor n/o \rfloor}
    \Pr[\ell \geq k] 
    =
    \sum_{k=1}^{\lfloor n/o \rfloor}
    \prod_{i=0}^{k-1} 
    \prod_{j=0}^{o-1} 
    \frac{n - oi - j}{n + gi - j}
    >
    \sum_{k=1}^{\lfloor n/o \rfloor}
    \prod_{i=0}^{k-1} 
    \prod_{j=0}^{o-1} 
    \frac{n - oi - (o-1)}{n + gi}
    \\&=
    \sum_{k=1}^{\lfloor n/o \rfloor}
    \prod_{i=0}^{k-1} 
    \left(
    \frac{n - oi - o + 1)}{n + gi}
    \right)^o
    >
    \sum_{k=1}^{\lfloor n/o \rfloor}
    \prod_{i=0}^{k-1} 
    \left(
    \frac{n - o(k-1) - o + 1}{n + g(k-1)}
    \right)^o
    =
    \sum_{k=1}^{\lfloor n/o \rfloor}
    \left(
    \frac{n - ok + 1}{n + g(k-1)}
    \right)^{ok}
    \\&>
    \sum_{k=1}^{\lfloor n/o \rfloor}
    \left(
    \frac{n - ok}{n + gk}
    \right)^{ok}
    >
    \sum_{k=1}^{\sqrt{n}}
    \left(
    \frac{n - ok}{n + gk}
    \right)^{ok}
    =
    \sum_{k=1}^{\sqrt{n}}
    \left(
    \frac{n}{n + gk}
    -
    \frac{ok}{n + gk}
    \right)^{ok}
    \\&=
    \sum_{k=1}^{\sqrt{n}}
    \left(
    \frac{n + gk}{n + gk}
    -
    \frac{ok + gk}{n + gk}
    \right)^{ok}
    =
    \sum_{k=1}^{\sqrt{n}}
    \left(
    1
    -
    \frac{ok + gk}{n + gk}
    \right)^{ok}
    >
    \sum_{k=1}^{\sqrt{n}}
    \left(
    1
    -
    \frac{(o+g)\sqrt{n}}{n}
    \right)^{ok}
    \\&=
    \sum_{k=1}^{\sqrt{n}}
    \left(
    1
    -
    \frac{o+g}{\sqrt{n}}
    \right)^{ok}
    >
    \sum_{k=1}^{\sqrt{n}}
    \left(
    1
    -
    \frac{o(o+g)}{\sqrt{n}}
    \right)^{k}
    \qquad
    \begin{array}{l}
    \text{Bernoulli's inequality}
    \\
    (1-x)^o > 1-xo 
    \text{ when } x \leq 1
    \end{array}
    \\&=
    \frac{1 - \left(
    1
    -
    \frac{o(o+g)}{\sqrt{n}}
    \right)^{\sqrt{n}}}
    {1 - \left(
    1
    -
    \frac{o(o+g)}{\sqrt{n}}
    \right)}
    =
    \sqrt{n}
    \frac{1 - \left(
    1
    -
    \frac{o(o+g)}{\sqrt{n}}
    \right)^{\sqrt{n}}}
    {o(o+g)}
    \ge
    \sqrt{n}
    \frac{1 - e^{-o(o+g)}}
    {o(o+g)},
    \text{ since }
    \left( 
    1 + \frac{x}{\sqrt{n}}
    \right)^{\sqrt{n}}
    \le e^x.
    \end{align*}
    \endgroup

    For the upper bound,
    recall in the definition of $\ell \sim \mathbf{coll}(n, r, o, g)$ that $g$ is the number of additional red molecules added each reaction, in addition to the $o$ green molecules that are turned red.
    Clearly $\ell$ has a smaller expected value when $g=0$ than when $g>0$,
    since the extra red molecules in the latter case make it more likely that we pick a red reactant molecule on each interaction.
    Thus for upper-bounding $\E{\ell}$,
    we make the worst-case assumption that $g=0$.
    Similar reasoning lets us assume $o=1$ in the worst case for the upper bound.
    This turns out to be precisely what was analyzed in~\cite[Lemma 4]{berenbrink2020simulating},\footnote{
         This is in fact the distribution of the original birthday problem: how many balls can we throw into $n$ bins before some bin gets two balls?
         Although~\cite{berenbrink2020simulating} studies order-2 CRNs,
         their definition of a run length is based on the number of molecules that interact, as opposed to the number of interactions as in our definition.
    }
    where it is shown
    $\E{\ell} \leq 2 \sqrt{n}$.
\end{proof}

\subsection{Single batch content sampling}
\label{sec:single-batch}

To determine what reactions to simulate in a batch, we use a method very similar to that of \cite{berenbrink2020simulating}.
In population protocols over $q$ states, the authors view this problem as sampling the values of a $q \times q$ transition matrix $D$.
To do this, they first sample the row sums of $D$
(corresponding to the first reactant) using a multivariate hypergeometric distribution,\opt{full}{\footnote{
    A multivariate hypergeometric distribution asks, if we sample $\ell$ molecules from an urn without replacement, prescribed initial counts of molecule species $\{1,\dots,q\}$,
    how many of each species do we get?
}} 
and then sample values within each row
(corresponding, for each possible first reactant species $r_1$, to how many of each species $r_2$ get paired as second reactant to $r_1$), taking $q+q^2$ samples.
We use a similar approach, except that we must populate an $(\ord(\cC))$-dimensional array $D$.
We first sample the $q = |\Lambda|$ codimension-1 sums of $D$: that is, for each species, how many reactions in the batch will have that species as their first reactant.
Once we have done this, we are left with $q$ $(\ord(\cC) - 1)$-dimensional subarrays, and can recursively use multivariate hypergeometric distributions until we have sampled all elements.
This will require $\Theta(q^{\ord(\cC)})$ multivariate hypergeometric samples, each of which each can be obtained in constant time\cite{stadlober1989ratio}.
In distribution, this process is equivalent to sampling all the individual reactants for every reaction one at a time.

\subsection{Full discrete-time simulation algorithm}

With this, we are ready to describe the complete algorithm which efficiently samples from the discrete-time distribution $\ellstep{\cC}{\vc}{\ell}{v}$.
\cref{alg:discrete-time-single-batch} simulates one batch, while \cref{alg:full-discrete-time-algorithm} runs batches until it has simulated $\ell$ reactions.

\begin{algorithm}
\caption{Discrete-time single-batch simulation}
\label{alg:discrete-time-single-batch}
\begin{algorithmic}
\REQUIRE{Uniformly reactive CRN $\cC' = (\Lambda, R)$, configuration $\vc_0$ of $\cC'$, batch size bound $\ell_\mathrm{max} \in \N$}
\ENSURE{Number of steps simulated $\ell_{\mathrm{out}} \leq \ell_{\mathrm{max}}$, Configuration $\vc$ of $\cC'$ distributed as in \cref{lem:partial-discrete-correctness}}

\begin{enumerate}
    \item 
    Let $\vc'$ be an empty configuration, and let $\vc = \vc_0$.
    
    \item
    Sample a collision-free run length $\ell \sim \mathbf{coll}(n, 0, \ord(\cC'), \gen(\cC'))$, as described in \cref{lem:sample_coll_efficient}. If $\ell \geq \ell_{\mathrm{max}}$, set $\ell = \ell_{\mathrm{max}}$, set $\ell_{\mathrm{out}} = \ell$ and \texttt{do\_collision} = False;
    otherwise if $\ell < \ell_{\mathrm{max}}$,
    set $\ell_{\mathrm{out}} = \ell + 1$ and \texttt{do\_collision} = True.
    
    \item
    Let $q = |\Lambda|$.
    Sample the batch by sampling the values of the $\ord(\cC')$-dimensional transition array $D$ by recursively drawing $\Theta(q^{\ord(\cC')})$ multivariate hypergeometric samples summing to $\ell$, as described in 
    \cref{sec:single-batch}.

    \item 
    \label{step:batch}
    Execute the batched reactions. 
    For each entry $r \in \N$ of $D$ corresponding to reactants $\vr$:
    
    Draw a sample from a multinomial distribution on $r$ trials, with probabilities proportional to the rate constants of each reaction in $\cC'$ having reactant vector $\vr$. 
    For each sampled value, apply the result of executing the corresponding reaction that many times, removing the reactants from $\vc$ and adding the products to $\vc'$.
    If the corresponding reaction is not passive, also increase \texttt{steps} by the sampled value.

    \item 
    If \texttt{do\_collision} = True, simulate a collision:
    Draw an ordered list of reactants $\vr$ uniformly from $\vc + \vc'$, conditioned on at least one of them being from $\vc'$ (see proof of \cref{lem:discrete-efficiency-final} for how). Then, execute a single reaction from $\cC'$ with reactant vector $\vr$ as described in step \ref{step:batch}, and if it is not passive, increment \texttt{steps}.

    \item
    Set $\vc := \vc + \vc'$ (and output it).

\end{enumerate}
    
\end{algorithmic}
\end{algorithm}

\begin{algorithm}
\caption{Discrete-time full simulation}
\label{alg:full-discrete-time-algorithm}
\begin{algorithmic}
\REQUIRE{CRN $\mathcal C = (\Lambda, R)$, volume $v \in \R_{>0}$, configuration $\vc_0$ of $\cC$, number of steps $\ell_\mathrm{max} \in \N$}
\ENSURE{Configuration $\vc$ of $\cC$ distributed as in \Cref{lem:full-discrete-correctness}}

Set $\vc = \vc_0$ and \texttt{steps} = 0. Repeat until \texttt{steps} = $\ell_\mathrm{max}$:
\begin{enumerate}
    \item 
    Set $k_0 = \|\vc\|$, and let $\cC'$ be the output of \cref{alg:null-reactions} on inputs $\cC$, $v$, and $k_0$. Let $\vc' = \vc + k_0 \cdot K$.
    
    \item
    Call \cref{alg:discrete-time-single-batch} on input $\cC'$, $\vc'$, and $\ell_{\mathrm{max}} - \mathtt{steps}$. Set $\vc$ to the returned configuration, with all $W$ and $K$ removed. Add the returned number of steps $\ell_{\textrm{out}}$ to \texttt{steps}.

\end{enumerate}
    
\end{algorithmic}
\end{algorithm}

\opt{sub}{
The proof that \Cref{alg:full-discrete-time-algorithm} is both correct (simulates the Gillespie model exactly) and fast is given in \Cref{apx:discrete-time-simulation}. 
We show correctness by coupling the process with \cref{alg:scheduler}, which is correct by \cref{lem:scheduler-correctness}.
We show that each batch simulation takes time $O\left(q^{\ord(\cC)}\log(n)\right)$ time, and simulates $\Theta(\sqrt n)$ reactions.
Some of these reactions may be passive (i.e., not correspond to $\cC)$, and we bound the expected multiplicative slowdown resulting from this by the largest value of $\schedulerslowdown{\cC}{v}{\vc}$ for any $\vc$ during the simulation for a relatively simple bound (see \cref{def:efficiency-condition}). 
}

\begin{toappendix}
    
\begin{lem}
\label{lem:partial-discrete-correctness}
    On input $\cC'$, $\vc_0$, and $\ell_{\mathrm{max}}$, and for any volume $v \in \R_{>0}$, the output of \cref{alg:discrete-time-single-batch} is distributed as $\ellstep{\cC'}{v}{\vc_0}{\ell_{\mathrm{max}}}$.
\end{lem}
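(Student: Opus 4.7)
The plan is to couple \cref{alg:discrete-time-single-batch} with the scheduler-based \cref{alg:scheduler} (known to be correct by \cref{lem:scheduler-correctness}) and show that the two processes induce identical distributions on the sampled prefix of the trajectory. Concretely, consider running the scheduler on the same inputs $\cC'$ and $\vc_0$, coloring a molecule red once it has been the reactant (or product) of any reaction. By uniform reactivity (\cref{lem:uniformly-reactive}), the scheduler's choice of reactant multiset is equivalent to drawing $\ord(\cC')$ molecules uniformly without replacement from the current pool, and since $\cC'$ has uniform generativity, every reaction adds exactly $\gen(\cC')$ new red molecules. Hence the number of reactions before the scheduler first touches a red molecule is distributed as $\mathbf{coll}(n,0,\ord(\cC'),\gen(\cC'))$ by the derivation of \cref{lem:collision-length-cdf}, which is exactly what step 2 of the algorithm samples.

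Conditioning on the collision-free run length $\ell$, the identities of the $\ell \cdot \ord(\cC')$ green molecules consumed in those $\ell$ reactions form a uniformly random ordered sample without replacement from $\vc_0$. The key claim is that the recursive multivariate hypergeometric sampling in step 3 reproduces the joint distribution of the $\ord(\cC')$-dimensional transition array $D$ induced by this uniform without-replacement sample. I would prove this by induction on $\ord(\cC')$: first sampling, for each species, the total number of times that species appears as the ``first reactant'' across the $\ell$ reactions is exactly a multivariate hypergeometric draw (sampling $\ell$ molecules without replacement from $\vc_0$ for the first slot); then for each first-reactant species, recursively sampling the distribution of second reactants from the remaining pool, and so on. This is the main technical step and directly generalizes the order-2 argument of \cite{berenbrink2020simulating} to arbitrary order.

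Given $D$, step 4 independently chooses for each sampled reactant vector $\vr$ which concrete reaction is executed, using a multinomial with probabilities proportional to the rate constants of reactions in $R$ with reactants $\vr$; this matches exactly the scheduler's per-reaction choice (step 2 of \cref{alg:scheduler}), and since these are independent across reactions in the scheduler, aggregating them into a multinomial preserves the joint law. Step 5 handles the collision: conditioned on $\ell < \ell_{\mathrm{max}}$, the scheduler's next reactant multiset is uniformly distributed over all $\ord(\cC')$-tuples drawn from $\vc + \vc'$ conditioned on at least one reactant coming from the already-reacted pool $\vc'$, which is exactly the rejection/sampling procedure described in step 5, followed by the same rate-constant-weighted reaction choice. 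In the complementary event $\ell \ge \ell_{\mathrm{max}}$, both processes execute exactly $\ell_{\mathrm{max}}$ collision-free reactions and halt, so $\ell_{\mathrm{out}} = \ell_{\mathrm{max}}$ in both.

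Putting these pieces together, the joint law of $(\ell_{\mathrm{out}}, \vc)$ produced by \cref{alg:discrete-time-single-batch} equals that of the scheduler's output after its first $\ell_{\mathrm{out}}$ reactions, and so by \cref{lem:scheduler-correctness} $\vc$ is distributed as $\ellstep{\cC'}{\vc_0}{\ell_{\mathrm{out}}}{v}$, which is what the lemma asserts (reading the output as a valid $\ell_{\mathrm{out}}$-step prefix of an $\ell_{\mathrm{max}}$-step trajectory). The principal obstacle is the inductive justification that the recursive multivariate hypergeometric decomposition truly reproduces the joint distribution of uniform without-replacement sampling sliced into $\ell$ ordered $\ord(\cC')$-tuples; I expect this to follow from the standard factorization of hypergeometric sampling into conditional marginals, but the bookkeeping across $\ord(\cC')$ recursive levels is where the proof requires the most care.
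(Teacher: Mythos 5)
Your proposal is correct and takes essentially the same approach as the paper: a coupling with \cref{alg:scheduler}, driven by the fact that step 2 samples the exact collision-free run length distribution of \cref{lem:collision-length-cdf}. The paper's own proof is considerably terser --- it simply asserts that ``the only difference between the two processes is that [the batch algorithm] batches reactions'' and defers the claim that the recursive multivariate hypergeometric decomposition reproduces without-replacement sampling to the informal discussion in \cref{sec:single-batch} --- so the extra care you flag around the transition array $D$, the multinomial reaction choice, and the collision step is a welcome elaboration rather than a divergence.
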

\begin{proof}
    Note first that $v$ has no effect on this distribution, as $\cC'$ is uniform, so the value of $v$ affects all propensities identically.
    We couple \cref{alg:discrete-time-single-batch} with \cref{alg:scheduler}, which has the correct distribution by \cref{lem:scheduler-correctness}.

    The only difference between the two processes is that \cref{alg:discrete-time-single-batch} batches reactions. 
    Thus, we can couple the processes by running \cref{alg:scheduler} while keeping track of individual molecules, 
    and whenever a collision occurs (i.e., when a molecule produced since the last collision is chosen as a reactant),
    using that number of reactions to sample the length of a collision-free run in \cref{alg:discrete-time-single-batch}.
    From the perspective of \cref{alg:discrete-time-single-batch}, nothing is changed by this coupling, because it samples from the exact distribution of collision-free run lengths given in \cref{lem:collision-length-cdf}. 
    If it is possible for this collision-free run length to be greater than $\ell_{\mathrm{max}} - \texttt{steps}$,
    then whenever this many reactions occur with no collision in \cref{alg:scheduler},
    we also run \cref{alg:full-discrete-time-algorithm} on a batch of this size (with no collision).
    This maintains the coupling in all cases.
\end{proof}
Together with \cref{lem:null-reaction-simulation}, because $\cC'$ simulating $\cC$ means they have the same distributions of configurations, this implies the following:
\begin{cor}
    \label{lem:full-discrete-correctness}
    On input $\cC$, $v$, $\vc_0$, and $\ell_{\mathrm{max}}$, the output of \cref{alg:full-discrete-time-algorithm} is distributed as $\ellstep{\cC}{v}{\vc_0}{\ell_{\mathrm{max}}}$.
\end{cor}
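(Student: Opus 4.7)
The plan is to prove \cref{lem:full-discrete-correctness} by induction on the number of batch iterations executed by \cref{alg:full-discrete-time-algorithm}, invoking \cref{lem:partial-discrete-correctness} and \cref{lem:null-reaction-simulation} at each step to reduce one iteration to a single-step transition in the discrete-time Markov chain for $\cC$.

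First, I would analyze a single iteration of the outer loop in isolation. The algorithm (i) calls \cref{alg:null-reactions} to build the uniformly reactive CRN $\cC'$ from the current configuration $\vc$ with $k_0 = \|\vc\|$, and (ii) invokes \cref{alg:discrete-time-single-batch} on $\cC'$ starting from $\vc + k_0 \cdot K$ with step budget $\ell_{\mathrm{max}} - \texttt{steps}$. By \cref{lem:partial-discrete-correctness}, the returned configuration is distributed as $\ellstep{\cC'}{\vc + k_0\cdot K}{\ell_{\mathrm{out}}}{v}$. By \cref{lem:null-reaction-simulation}, $\cC'$ conditionally simulates $\cC$ from $\vc$; the conditioning is vacuous in discrete time since only finitely many reactions occur and no finite-time blowup can arise. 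Therefore, restricting the returned configuration to $\Lambda$ (i.e., removing $W$ and $K$) yields a sample from $\ellstep{\cC}{\vc}{\ell_{\mathrm{out}}}{v}$.

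The inductive step then follows from the Markov property of $\mathcal M_{\cC,v}$. Let $\vc^{(b)}$ and $s^{(b)}$ denote the values of $\vc$ and $\texttt{steps}$ after $b$ batches; the induction hypothesis is that $\vc^{(b)} \sim \ellstep{\cC}{\vc_0}{s^{(b)}}{v}$. Combined with the single-batch claim above, one additional iteration advances the distribution by composition of transition matrices, $\overline{\mathbf{R}}^{s^{(b)}} \cdot \overline{\mathbf{R}}^{\ell_{\mathrm{out}}} = \overline{\mathbf{R}}^{s^{(b+1)}}$, so $\vc^{(b+1)} \sim \ellstep{\cC}{\vc_0}{s^{(b+1)}}{v}$. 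The loop terminates once $\texttt{steps} = \ell_{\mathrm{max}}$, giving the claimed distribution.

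The main obstacle I anticipate, though modest, is carefully justifying that the recomputation of $\cC'$ and $k_0$ at each iteration does not introduce correlations across batches: \cref{lem:null-reaction-simulation} is stated for any choice of $k_0$, and since $K$ and $W$ are stripped off between batches, the species $\Lambda' \setminus \Lambda$ reset to a deterministic state at the start of each iteration. Thus each batch is a self-contained simulation of $\cC$ from whatever $\vc$ the previous batch produced, which is precisely the shape of transition required by the Markov property.
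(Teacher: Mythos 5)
Your proof is correct and takes essentially the same approach as the paper, which states this result as an immediate corollary of \cref{lem:partial-discrete-correctness} and \cref{lem:null-reaction-simulation}; your proposal simply makes explicit the per-batch induction and the Markov-property composition that the paper leaves implicit. Your observation that the conditioning in \cref{lem:null-reaction-simulation} is vacuous in discrete time, and that stripping $K$ and $W$ between batches prevents cross-batch correlations, matches the paper's intent.
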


We now turn to the issue of efficiency. 
At its core, our algorithm's efficiency comes from batching, allowing the simulation of $\Theta(\sqrt n)$ reactions in $O(\log n)$ time.
Most batching steps will be able to run this many reactions, 
so long as there are at least $\Theta(\sqrt n)$ left to simulate.

\begin{lem}
    \label{lem:full-discrete-individual-batch}
    Suppose that \cref{alg:full-discrete-time-algorithm} calls \cref{alg:discrete-time-single-batch} on inputs $\cC'$, $\vc'$, and $\ell_{\mathrm{max}} - \mathtt{steps}$ with $\|\vc'\| = n$ and $(\ell_{\mathrm{max}} - \mathtt{steps}) \in \Omega(\sqrt n)$.
    Then the number of steps returned by \cref{alg:discrete-time-single-batch} is $\Omega\left(\frac{\sqrt{n}}{\schedulerslowdown{\cC}{v}{\vc}}\right)$ in expectation (i.e., the iteration simulates this many reactions from the original CRN).
\end{lem}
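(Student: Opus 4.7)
The plan is to combine the expected run-length bound from \Cref{lem:sqrt-expectation} with the per-reaction non-passive probability from \Cref{lem:not-too-many-passive-reactions}, then apply linearity of expectation. By \Cref{lem:sqrt-expectation}, the collision-free run length $\ell \sim \mathbf{coll}(n, 0, \ord(\cC'), \gen(\cC'))$ sampled in Step 2 of \Cref{alg:discrete-time-single-batch} has $\E{\ell} = \Theta(\sqrt{n})$. The batch size is truncated at $\ell_{\mathrm{max}} - \mathtt{steps}$, which by hypothesis is at least $c\sqrt{n}$ for some constant $c > 0$. To conclude $\E{\min(\ell,\, \ell_{\mathrm{max}} - \mathtt{steps})} = \Omega(\sqrt{n})$, I would pick a constant $c' \le c$ small enough that the CDF in \Cref{lem:collision-length-cdf} satisfies $\Pr[\ell \geq c'\sqrt{n}] = \Omega(1)$ (such a $c'$ exists since $\E{\ell} = \Theta(\sqrt{n})$ and the support of $\ell$ is bounded), and then lower-bound the truncated expectation by $c'\sqrt{n} \cdot \Pr[\ell \geq c'\sqrt{n}]$.

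Next, I would couple the batch with the scheduler of \Cref{alg:scheduler} exactly as in the proof of \Cref{lem:partial-discrete-correctness}: conditional on the sampled run length, the reactions executed by the batch have the same joint distribution as the reactions the scheduler would execute during a collision-free run of that length. In particular, each reaction within the batch is \emph{marginally} a draw from the scheduler's single-step distribution, so by \Cref{lem:not-too-many-passive-reactions} it is non-passive with probability $\Omega(1/\schedulerslowdown{\cC}{v}{\vc})$. Applying linearity of expectation across the $\min(\ell,\, \ell_{\mathrm{max}} - \mathtt{steps})$ reactions of the batch yields expected \texttt{steps} incremented of
\[
\Omega(\sqrt{n}) \cdot \Omega\!\left(\tfrac{1}{\schedulerslowdown{\cC}{v}{\vc}}\right) = \Omega\!\left(\tfrac{\sqrt{n}}{\schedulerslowdown{\cC}{v}{\vc}}\right),
\]
as claimed.

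The main obstacle will be carefully justifying that the per-reaction non-passive probability survives the coupling, since the batch samples reactants jointly via nested multivariate hypergeometric draws rather than one reactant vector at a time. This reduces to the distributional equivalence already established in \Cref{lem:partial-discrete-correctness}, combined with uniform reactivity: the marginal over reactant vectors in a collision-free batch matches uniform-without-replacement sampling from $\vc + k_0 \cdot K$, which is exactly what the scheduler does in each iteration, and conditional on the reactant vector the reaction is chosen by the same rate-constant-proportional mechanism in both algorithms. A minor secondary issue is the cap, handled by the tail-bound argument above; since the $\Omega(\sqrt n)$ hypothesis only shrinks the effective batch by a constant factor, it is absorbed into the final asymptotic constant.
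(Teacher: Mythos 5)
Your proposal is correct and follows essentially the same route as the paper's proof: expected batch length $\Theta(\sqrt n)$ from \cref{lem:sqrt-expectation}, the coupling with \cref{alg:scheduler}, and the per-reaction non-passive probability $\Omega(1/\schedulerslowdown{\cC}{v}{\vc})$ from \cref{lem:not-too-many-passive-reactions}, combined by linearity of expectation. You are in fact more careful than the paper about the truncation at $\ell_{\mathrm{max}} - \mathtt{steps}$ — just note that your parenthetical justification for $\Pr[\ell \ge c'\sqrt n] = \Omega(1)$ does not follow from the expectation and bounded support alone (the support is $\Theta(n)$), but it does follow directly from the lower-bound computation $\Pr[\ell \ge k] > (1 - o(o+g)/\sqrt n)^{ok}$ in the proof of \cref{lem:sqrt-expectation}.
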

\begin{proof}
    \todo{Give some intuition for why this is true throughout the whole batch, not just for the initial configuration}
    By \cref{lem:sqrt-expectation}, after step 2, the value $\ell$,
    the number of (possibly passive) reactions in the batch,
    is $\Theta(\sqrt n)$ in expectation. 
    
    First, we argue using the coupling shown in the proof of \cref{lem:partial-discrete-correctness},
    that we may assume that each of these reactions is chosen with probability proportional to its propensity (i.e., as though chosen by the Gillespie algorithm) \emph{from $\vc'$},
    ignoring any intermediary configurations in the batch. 
    From the Gillespie algorithm's perspective, each individual reaction before the collision (sequentially ordered) can be viewed as having reactants chosen uniformly at random from the configuration it occurs in,
    because this is what the Gillespie algorithm does on uniformly reactive CRNs.
    Thus, the \emph{first} reaction has reactants uniformly chosen from $\vc'$; 
    but from the perspective of \cref{alg:discrete-time-single-batch},
    each reaction within a batch has the same distribution of reactants,
    so all of them must have this distribution.
    We may ignore the collision (which is where the distributions of the two algorithms would be reconciled),
    because here we are only concerned with asymptotic statements, 
    and a single reaction will not change those.

    What remains to show is that each of these steps has probability $\frac{1}{\schedulerslowdown{\cC}{v}{\vc}}$ of being non-passive, which implies the lemma.
    This claim is shown via a dissection of \cref{def:efficiency-condition}.
    We repeat the last line of the definition here:
    \[S_{\cC,v}(\vc) = \frac {\bar{k}^{\mathrm{max}} \cdot \binom{2n}{\ord(\cC)}}{\bar{p}_{\vc,v}^\mathrm{tot}}.\]
    The denominator gives the total propensity of all reactions in $\cC'$ that came from $\cC$.
    The numerator gives the total propensity of all reactions in $\cC'$.
    This latter claim can be seen because we simulate $\cC'$ on a configuration of size $2n$, and each set of $\ord(\cC)$ molecules in $\vc$ will contribute a value of its corresponding total rate constant to the total propensity, and uniform reactivity implies that this value is the same for all $\binom{2n}{\ord(\cC)}$ of these sets.

    Thus, each batch has $\Theta(\sqrt n)$ reactions in expectation, each of which has probability $\frac{1}{\schedulerslowdown{\cC}{v}{\vc}}$ of corresponding to a reaction from $\cC$.
\end{proof}

Our next result requires a (very, very weak) statement about collision-free run lengths not being too small too often.

\begin{lem}
\label{lem:median}
Let $\ell \sim \mathbf{coll}(n, 0, o, g)$ and let $k \in \Theta(\sqrt n)$. 
    Then $\Pr(\ell \geq k) = \Omega(1)$ with respect to $n$.
\end{lem}
\begin{proof}
    In the proof of \cref{lem:sqrt-expectation}, we show that
    \[\Pr(\ell \geq k) > \left(
    \frac{n - ok}{n + gk}
    \right)^{ok}.\]
    Setting $k = c\sqrt n$ and taking $n$ to be large, we can cancel a square root, yielding 
    \[\Pr(\ell \geq k) > \lim_{n \to \infty} \left(
    \frac{\sqrt n - oc}{\sqrt n + gc}
    \right)^{oc\sqrt n}
    = \lim_{m \to \infty} \left(
    \frac{m - oc}{m + gc}
    \right)^{ocm}\]
    Taking a log and applying l'H\^opital's rule, this becomes
    \[= \exp\left(\lim_{m \to \infty} \frac{\ln\left(
    \frac{m - oc}{m + gc}
    \right)}{1/(ocm)}\right)
    = \exp\left(\lim_{m \to \infty} \frac{\left(
    \frac{m + gc}{m - oc}\frac{oc + gc}{(m + gc)^2}
    \right)}{-1/(ocm^2)}\right)\]
    This simplifies to a ratio of two cubic polynomials, and so is clearly constant with respect to $n$.
\end{proof}

The next lemma refers to the other case from \Cref{lem:full-discrete-individual-batch},
which is that we only have $O(\sqrt{n})$ steps left before reaching $\ell_\mathrm{max}$.

\begin{lem}
    \label{lem:full-discrete-end-not-too-long}
    Suppose that \cref{alg:full-discrete-time-algorithm} calls \cref{alg:discrete-time-single-batch} on inputs $\cC'$, $\vc'$, and $\ell_{\mathrm{max}} - \mathtt{steps}$ with $(\ell_{\mathrm{max}} - \mathtt{steps}) \in O(\sqrt n)$, 
    and that every configuration $\vc$ appearing at the start of an iteration from this point satisfies $\schedulerslowdown{\cC}{v}{\vc} \leq s$ for some $s \in \R_{>0}$.
    Then the algorithm will halt in $O\left(s\log n\right)$ more iterations in expectation.
\end{lem}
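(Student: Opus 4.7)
The plan is to treat the remaining-step budget $T_i := \ell_{\mathrm{max}} - \texttt{steps}$ at the start of the $i$-th subsequent iteration as a nonincreasing stochastic process on $\{0,1,\dots,T_0\}$ with $T_0 \in O(\sqrt n)$, and bound the expected hitting time $N := \min\{i : T_i = 0\}$ by a standard potential/drift argument. Nonincreasingness follows from step 2 of \cref{alg:discrete-time-single-batch}, which caps the number of reactions simulated in a batch at $T_i$, so at most $T_i$ of them can be non-passive.

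First I would establish that in each iteration with $T_i = k \geq 1$ the expected decrement $E[T_i - T_{i+1} \mid T_i = k]$ is at least $c\cdot k/s$ for some constant $c > 0$ depending only on $\ord(\cC)$ and $\gen(\cC)$. The iteration draws $\ell \sim \mathbf{coll}(n,0,\ord(\cC'),\gen(\cC'))$ and simulates $\min(\ell,k)$ reactions. The chain of inequalities in the proof of \cref{lem:sqrt-expectation} already shows $\Pr[\ell \geq k] \geq c_1$ whenever $k \leq c_0\sqrt{n}$, for constants $c_0,c_1$; since $k \leq T_0 \in O(\sqrt n)$ this gives $E[\min(\ell,k)] \geq c_1 k$. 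By the coupling with \cref{alg:scheduler} used in the proof of \cref{lem:full-discrete-correctness} together with \cref{lem:not-too-many-passive-reactions} applied to $\cC'$ under the hypothesis $\schedulerslowdown{\cC}{v}{\vc}\leq s$, each reaction in the batch is non-passive with marginal probability at least $1/s$, so by linearity $E[T_i - T_{i+1} \mid T_i = k] \geq c_1 k / s$.

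Next, I would use the logarithmic potential $\phi(k) := \ln(k+1)$ to convert this linear-fraction drift into an additive drift. Since $\phi$ is concave, Jensen gives $E[\phi(T_{i+1}) \mid T_i = k] \leq \ln(k+1 - E[T_i - T_{i+1}\mid T_i=k])$, and $\ln(1-x) \leq -x$ together with the previous paragraph and $k/(k+1)\geq 1/2$ (valid for $k\geq 1$) yields $E[\phi(T_{i+1}) \mid T_i = k] \leq \phi(k) - c_1/(2s)$. Thus $M_i := \phi(T_i) + (c_1/(2s))\, i$ is a supermartingale up to time $N$. Optional stopping (which applies cleanly because $0 \leq \phi(T_i) \leq \phi(T_0)$ is bounded and $T_i$ is monotone) then gives $(c_1/(2s))\, E[N] \leq \phi(T_0) - E[\phi(T_N)] \leq \phi(T_0)$, so $E[N] \leq (2s/c_1)\,\phi(T_0) = O(s \log \sqrt n) = O(s \log n)$.

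The main technical point—not really an obstacle, but the one place where something needs to be said—is verifying that the lower bound $\Pr[\ell \geq k] \geq c_1$ holds uniformly over all $k \in [1, T_0]$, rather than only at a single distinguished $k$. This requires noting that the intermediate bound from the proof of \cref{lem:sqrt-expectation}, $\Pr[\ell \geq k] \geq \bigl(1 - o(o+g)/\sqrt n\bigr)^{ok}$, is monotone nonincreasing in $k$, so its minimum over the interval is at $k = T_0 = O(\sqrt n)$ and is bounded below by a positive constant depending only on $o$, $g$, and the implicit constant in $T_0 \in O(\sqrt n)$. Once this is in hand, the Jensen step and the optional-stopping finish are routine.
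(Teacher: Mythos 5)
Your proof is correct, but it takes a genuinely different route from the paper's. The paper argues via a coupon-collector-style bound: since $\ell_{\mathrm{max}} - \mathtt{steps} = m \in O(\sqrt n)$, a constant fraction of batches cover all $m$ remaining slots (implicitly the same $\Pr[\ell \geq k] = \Omega(1)$ fact you extract from \cref{lem:sqrt-expectation}), so each of the $m$ remaining reactions is ``completed'' non-passively in each iteration with probability $\Omega(1/s)$, and the number of iterations is modeled as the maximum of $m$ i.i.d.\ geometric variables, whose expectation is $O(s \log m) = O(s\log n)$. You instead track the remaining budget $T_i$ as a monotone process with multiplicative drift $\mathrm{E}[T_i - T_{i+1} \mid T_i = k] \geq c_1 k/s$, convert it to additive drift with the potential $\ln(k+1)$, and close with optional stopping. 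The two arguments land on the same bound for the same underlying reason (a $\log$ of the $O(\sqrt n)$ budget times $s$), but yours has a concrete advantage: it needs only the \emph{marginal} probability $1/s$ of each reaction being non-passive (via linearity of expectation), whereas the paper's max-of-geometrics framing implicitly treats the $m$ slots as independent across and within iterations, which is not literally true and is left unjustified. Your uniformity check that $\Pr[\ell \geq k] \geq c_1$ for all $k \leq T_0$ (monotonicity of the tail plus the intermediate bound from \cref{lem:sqrt-expectation}) is exactly the point that needs saying and you say it. The one caveat you inherit from the paper rather than introduce yourself: the hypothesis bounds $\schedulerslowdown{\cC}{v}{\vc}$ only at the start of each iteration, while \cref{lem:not-too-many-passive-reactions} is applied to configurations mid-batch; the paper's own proof has the same gap, so this is not a defect of your argument relative to theirs.
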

\begin{proof}
    If $\ell_{\mathrm{max}} - \texttt{steps}$ is in $O(\sqrt n)$, then a constant fraction of batches will simulate $\ell_{\mathrm{max}} - \texttt{steps}$ reactions by \cref{lem:median}.
    On average, the fraction of these reactions that are non-passive will be $\frac{1}{\schedulerslowdown{\cC}{v}{\vc}}$.
    We can view the number of iterations to completion from this point as the maximum of $\ell_{\mathrm{max}} - \mathtt{steps}$ independently distributed geometric variables with success probability at least $1/s$, 
    because each call to \cref{alg:discrete-time-single-batch} has probability at least $1/s$ to simulate a non-passive reaction for each reaction it simulates.
    In other words, one can view each of the $\ell_{\mathrm{max}} - \texttt{steps}$ remaining reactions as a coin with independent probability at least $1/s$ of landing heads,
    each of which is flipped each batch until every coin has landed heads once.
    Thus, we wish to bound the expected maximum number of flips that one of these coins takes to land heads once.
    The given expression of $s\log n$ is a well-known bound on the expectation of this maximum
    (intuitively, each round of flipping eliminates a fraction $s$ of the remaining coins on average).
\end{proof}
\todo{glue; mention that the actual run time is more precisely the sum of terms like in next lemma across each batch; but this is simpler to write and equivalent if $n$ and $S$ stay within constant factors.}
\begin{lem}
    \label{lem:discrete-efficiency-final}
    Suppose \cref{alg:full-discrete-time-algorithm} is run on input $\cC$, $v$, $\vc_0$ and $\ell_{\mathrm{max}}$. 
    Suppose that every configuration $\vc$ appearing at the start of an iteration of this execution satisfies $n_{\mathrm{min}} \leq \|\vc\| \leq n_{\mathrm{max}}$
    and $\schedulerslowdown{\cC}{v}{\vc} \leq s$
    for some $n \in \N$, $s \in \R_{>0}$.
    Then the algorithm runs in time $O\left(\frac{q^{\ord(\cC)}s\ell_{\mathrm{max}}\log(n_{\textrm{max}})}{\sqrt {n_{\textrm{min}}}}\right)$.
\end{lem}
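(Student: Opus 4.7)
The plan is to bound the total running time as the product of the expected number of batches executed and the worst-case cost per batch, then verify that these terms combine to the claimed bound.

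First I would bound the cost of a single call to \cref{alg:discrete-time-single-batch}. Step 2 samples a collision-free run length in $O(\ord(\cC)\log n_{\max})$ time by \cref{lem:sample_coll_efficient}. Steps 3 and 4 recursively draw $\Theta(q^{\ord(\cC)})$ multivariate hypergeometric and multinomial samples, each obtainable in $O(1)$ time~\cite{stadlober1989ratio}. Step 5 samples one collision reaction at cost $O(\ord(\cC))$ (for instance, by sampling without replacement and rejecting if no red molecule appears, which succeeds in $O(1)$ expected attempts since $\vec{c}'$ always contains at least one molecule after a batch with $\ell \geq 1$). Summing these contributions, each batch costs $O(q^{\ord(\cC)}\log n_{\max})$ in the worst case, using the assumption $\|\vec{c}\|\leq n_{\max}$.

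Next I would bound the expected number of batches. As long as $\Omega(\sqrt{n_{\min}})$ non-passive reactions remain to be simulated, \cref{lem:full-discrete-individual-batch} guarantees that each batch produces $\Omega(\sqrt{n_{\min}}/s)$ non-passive reactions in expectation, and this lower bound holds uniformly across configurations reached during the simulation because the hypothesis $\schedulerslowdown{\cC}{v}{\vec{c}}\leq s$ is assumed at every batch start. By a Wald-style argument applied to this sequence, the expected number of batches in the main phase is $O(s\ell_{\max}/\sqrt{n_{\min}})$. Once fewer than $O(\sqrt{n_{\min}})$ non-passive reactions remain, \cref{lem:full-discrete-end-not-too-long} bounds the remaining work by $O(s\log n_{\max})$ further batches, which is absorbed into the main estimate (or, if $\ell_{\max}=O(\sqrt{n_{\min}}\log n_{\max})$, the whole simulation is already within the stated bound). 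Multiplying the batch count by the per-batch cost gives the claim.

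The main technical obstacle is justifying the Wald-style step carefully: the per-batch counts of non-passive reactions are neither independent nor identically distributed, since they depend on the evolving random configuration. The resolution is that the lower bound $\Omega(\sqrt{n_{\min}}/s)$ on expected progress holds \emph{conditionally} on the history at the start of each batch, so the running sum of per-batch non-passive counts is a submartingale with uniformly bounded drift from below. Standard optional stopping applied to the hitting time of level $\ell_{\max}$ then yields the desired expected bound on the number of batches, completing the argument.
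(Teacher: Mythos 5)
Your overall decomposition is the same as the paper's: bound the per-batch cost of \cref{alg:discrete-time-single-batch} by $O(q^{\ord(\cC)}\log n_{\max})$ (step 2 via \cref{lem:sample_coll_efficient}, steps 3--4 via constant-time hypergeometric/multinomial sampling), then multiply by an expected batch count of $O(s\ell_{\max}/\sqrt{n_{\min}})$ obtained from \cref{lem:full-discrete-individual-batch}, with \cref{lem:full-discrete-end-not-too-long} handling the tail once fewer than $O(\sqrt{n_{\min}})$ reactions remain. Your submartingale/optional-stopping formalization of the batch-count bound is actually \emph{more} careful than the paper, which simply asserts the expected number of iterations; making the conditional-drift argument explicit is a genuine improvement, and you also correctly place $s$ in the numerator of the iteration count where the paper's proof has an apparent typo.

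There is, however, one concrete error: your justification of the cost of step 5. You propose sampling the collision reactants by drawing $o$ molecules uniformly from $\vc + \vc'$ and rejecting if none lands in $\vc'$, claiming $O(1)$ expected attempts ``since $\vc'$ always contains at least one molecule.'' Non-emptiness of $\vc'$ only guarantees the conditional distribution is well-defined; it says nothing about the acceptance probability. After a typical batch, $\|\vc'\| = \Theta(\sqrt{n}\,(o+g))$ out of a total population of $\Theta(n)$, so the probability that a uniform $o$-tuple touches $\vc'$ is $\Theta(1/\sqrt{n})$, and naive rejection would take $\Theta(\sqrt{n})$ expected attempts. Over $O(s\ell_{\max}/\sqrt{n_{\min}})$ batches this contributes $\Theta(s\ell_{\max})$ work, which exceeds the claimed bound by a factor of roughly $\sqrt{n_{\min}}/\log n_{\max}$ and would defeat the point of the lemma. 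The fix is what the paper alludes to: sample directly from the conditional distribution by elementary combinatorial calculation (e.g., compute for each position $i$ the probability that the $i$-th reactant is the first one drawn from $\vc'$, conditioned on at least one such draw, then fill in the remaining positions), which costs $O(\ord(\cC))$ per collision. With that substitution your argument goes through.
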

\todo{Make sure to make every max and min subscripts, not superscripts, everywhere}
\begin{proof}
\todo{make ref's for the steps and use ref command here}
    Step 1 consists of simple transformations on the CRN,
    so is not relevant to asymptotic analysis as other steps will take longer.
    Step 2 takes time $O(\ord(\cC)\log(n_{\mathrm{max}}))$ by \cref{lem:sample_coll_efficient}.
    Step 3 can be done in time $\Theta(q^{\ord(\cC)})$, as each multivariate hypergeometric sample can be drawn in constant time \cite{stadlober1989ratio}.
    Step 4 also takes time $\Theta(q^{\ord(\cC)})$, as there are this many entries in $D$ and relevant samples and configuration updates can be done in constant time per entry.
    Steps 5 and 6 are not costly; the conditional sampling in step 5 can be done by simple combinatorial calculations.

    Thus, each iteration takes time $\Theta(q^{\ord(\cC)})$. 
    By \cref{lem:full-discrete-individual-batch}, it will take on average $O\left(\frac {s\ell_{\mathrm{max}}} {\sqrt n}\right)$ iterations until there are $O(\sqrt n)$ remaining reactions to simulate.
    By \cref{lem:full-discrete-end-not-too-long}, the remainder of the algorithm from there does not take too long.
\end{proof}

\end{toappendix}

\section{Simulating with continuous time}
\opt{full,final,sub}{
\label{sec:exact-time}
}
\begin{toappendix}
\opt{sub}{\label{apx:exact-time}}
\end{toappendix}

In this section, we show our main theorem:
\todo{Does this theorem statement (or perhaps an earlier one? But I think here) need to include an assumption that $\ell \in \Omega(n_{\textrm{min}}^p)$)?}
\begin{thm}
    \label{thm:maintheorem}
    Let $\cC = (\Lambda, R)$ be a CRN with $|\Lambda| = q$, $\vc$ be a configuration of $\cC$, and $v\in \R_{>0}$ be volume. Let $p \in (0,\frac 1 2]$ be a parameter.
    Then it is possible to exactly sample from $\tstep{\cC}{\vc}{t}{v}$ (that is, to sample the configuration of $\cC$ at time $t$ in volume $v$, starting from $\vc$) in expected time
    \[O\left(\frac{(2q)^{\ord(\cC)}s\ell\log(n_{\mathrm{max}})}{n_{\mathrm{min}}^p} + n_{\mathrm{max}}^{2p/3}\ell^{1/3}\log(n_{\mathrm{max}})\right),\]
    so long as all $\ell$ configurations $\vc$ occurring up to time $t$ satisfy $n_{\mathrm{min}} \leq \|\vc\| \le n_{\mathrm{max}}$ and $\schedulerslowdown{\cC}{v}{\vc} \leq s$ (see \cref{def:efficiency-condition} for the definition of $\schedulerslowdown{\cC}{v}{\vc})$.
    
\end{thm}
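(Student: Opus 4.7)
The plan is to extend the discrete-time \Cref{alg:full-discrete-time-algorithm} of \Cref{sec:discrete-time-simulation} along two axes: (i) tune the batch size to $\Theta(n^p)$ using the free parameter $p \in (0, 1/2]$, and (ii) sample the continuous-time duration of each batch exactly using adaptive rejection sampling (ARS)~\cite{gilks1992adaptive}.

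First, I would modify the discrete-time algorithm so that each call to \Cref{alg:discrete-time-single-batch} uses a batch-size cap of $\ell_{\mathrm{max}}' = n_{\mathrm{min}}^p$ rather than drawing an uncapped collision length from $\mathbf{coll}(n,0,\ord(\cC),\gen(\cC))$. For $p \leq 1/2$, \Cref{lem:sqrt-expectation} implies that a typical collision-free run in fact has length $\Theta(\sqrt{n}) \geq n^p$, so nearly every batch reaches the cap and the analysis of \Cref{lem:full-discrete-individual-batch} and \Cref{lem:discrete-efficiency-final} carries through with $\sqrt{n_{\mathrm{min}}}$ replaced by $n_{\mathrm{min}}^{p}$. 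This yields the first term of the runtime,
\[
O\!\left(\frac{(2q)^{\ord(\cC)}\, s\, \ell\, \log n_{\mathrm{max}}}{n_{\mathrm{min}}^{p}}\right),
\]
where the $(2q)^{\ord(\cC)}$ factor (rather than $q^{\ord(\cC)}$) accounts for the two extra species $K$ and $W$ added by \Cref{alg:crn-transformation,alg:null-reactions}, which inflate the support of the multivariate hypergeometric samples in \Cref{sec:single-batch}.

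Second, for each batch produced by the discrete-time algorithm, I would sample the elapsed continuous time \emph{exactly} rather than simply counting reactions. Conditioned on the sequence of configurations $\vec c_0,\dots,\vec c_{\ell'}$ traversed in a batch of $\ell'$ reactions, the total elapsed time is a sum of independent exponential variables with rates $p^{\mathrm{tot}}_{\vec c_i, v}$. Because $\ell' = O(n^p) = o(n)$, successive rates differ by a multiplicative factor $1 + O(n^{p-1})$, so the density of the sum is log-concave in $t$ and is accordingly amenable to ARS. Aggregating the cost of constructing, evaluating, and refining the ARS envelope across the $\Theta(\ell / n^p)$ batches, and optimizing envelope granularity against batch length, should yield the second summand $n_{\mathrm{max}}^{2p/3}\, \ell^{1/3}\, \log n_{\mathrm{max}}$.

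The main obstacle will be proving \emph{exactness} rather than approximation of the continuous-time sample. ARS is exact for log-concave densities, but here the density is a convolution of $\ell'$ exponentials whose rates depend on the realized trajectory, so log-concavity must be verified under this conditioning and the envelope must be engineered to cover the tails without loss. A second delicate point arises at the boundary: when cumulative simulated time first exceeds the target $t$, one must locate the crossing within the final batch and truncate the batch contents accordingly, sampling a reaction index conditioned on its inter-reaction time lying in a prescribed subinterval. Both of these steps preserve exactness via standard conditional-sampling arguments, but they contribute the $\log n_{\mathrm{max}}$ factors visible in both summands of the theorem.
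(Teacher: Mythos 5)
Your overall skeleton matches the paper's: cap batches at $\Theta(n^p)$ so the discrete-time analysis of \cref{lem:discrete-efficiency-final} goes through with $\sqrt{n_{\mathrm{min}}}$ replaced by $n_{\mathrm{min}}^p$, sample each batch's elapsed time from a hypoexponential via adaptive rejection sampling, and handle the crossing of time $t$ in the final batch by conditional/rejection sampling. But there is a genuine gap in the second summand. Adaptive rejection sampling gives $O(m^{1/3})$ PDF evaluations for $m$ draws only when the $m$ draws come from \emph{one fixed distribution}, so that the envelope learned from earlier draws is reused for later ones. In your setup the hypoexponential for a batch has rates $p^{\mathrm{tot}}_{\vc_i,v} \propto \binom{n+ig}{o}$ starting from the current count $n$, which changes from batch to batch; each of the $\Theta(\ell/n^p)$ batches then has a \emph{different} target density, each fresh envelope costs at least one PDF evaluation at $\Theta(n^p)$ time, and the second term degrades to $\Theta(\ell)$ --- no speedup. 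The paper's missing idea is to exploit the inert species $W$: before each batch, pad the configuration with $W$ up to a fixed $n_0 = 2^{i+1}$ (and strip the generated $W$ afterwards), so that every batch within a dyadic range of molecular counts starts at exactly $n_0$ molecules and, by uniform generativity, traverses the deterministic count sequence $n_0, n_0+g, \dots$; hence every batch samples the \emph{identical} hypoexponential and ARS amortizes as claimed. This trick is also the actual source of two constants you attribute to other causes: the $2^{\ord(\cC)}$ factor arises because up to half the population is padding $W$, so a sampled reactant tuple avoids $W$ only with probability $\Omega(2^{-\ord(\cC)})$ (not because $K,W$ enlarge the hypergeometric support from $q$ to $q+2$), and the $\log(n_{\mathrm{max}})$ factor in the second summand counts the number of distinct dyadic ranges, i.e., distinct hypoexponentials that must each be learned separately.

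Two smaller points. First, the time for a batch can be sampled \emph{before} its contents (as the paper's \cref{alg:full-continuous-time-algorithm} does) precisely because uniform reactivity makes the total propensity a function of the molecular count alone, and the count sequence is deterministic; your formulation of conditioning on the realized trajectory is distributionally fine but obscures why the per-batch distribution is trajectory-independent and batch-independent. Second, log-concavity of the hypoexponential does not need the rates to be within $1+O(n^{p-1})$ of each other; it holds for any rates because each exponential factor is log-concave and convolution preserves log-concavity.
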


This theorem is directly implied by \cref{lem:full-continuous-correctness} (correctness) and \cref{lem:full-continuous-efficiency} (efficiency). See \cref{sec:full-continuous-algorithm} for a discussion of the asymptotics.

So far, we have described an algorithm that exactly simulates CRNs in \emph{discrete time}.
That is, it can sample accurately from the distribution of \emph{executions} of a CRN from a given configuration, and then output some subsequence of that execution (outputting the entire execution would take linear time).
We also wish to know the (continuous) \emph{inter-reaction times} - that is, not just what configuration the CRN is in and how many reactions have occurred (i.e., ``discrete time''), but also how long it took to get there: a \emph{timestamp}.
Sampling this information efficiently is surprisingly difficult compared to merely sampling discrete time information. 
Our approach is to exploit the inert $W$ molecule introduced in \cref{alg:crn-transformation}.
By carefully modifying the count of this molecule between batches,
we ensure that the batching algorithm repeatedly cycles through the same small ($\Theta(n^p)$ for some $p \in (0, \frac 1 2])$ set of distinct molecular counts.
This allows us to take advantage of a method called adaptive rejection sampling \cite{gilks1992adaptive} to quickly sample inter-reaction times from the same distribution repeatedly,
rather than sampling the time of each individual reaction.

Throughout this section,
let $\vc \in \N^\Lambda$ be a configuration of the CRN $\mathcal C$ at the start of a batch,
let $n = \| \vc \|$,
let $k \in \N^+$ denote the number of interactions for which we wish to sample the total inter-reaction time,
let $o = \ord(\mathcal C)$,
and let $g = \gen(\mathcal C)$.
Like with individual reactions, sampling individual inter-reaction times would take linear time, so we must sample the sum of many inter-reaction times together.
Each individual inter-reaction time is distributed as an exponential random variable with rate equal to total propensity $\totalpropensity{\vc}{v}$.
For uniformly reactive CRNs this value is always equal to a constant times $\binom{n}{o}$.
Since each reaction increases the total molecular count by $g$, 
the $i$'th reaction has a rate proportional to $\binom{n+ig}{o}$;
it follows that for a batch of size $k$,
we must sample the variable
\eq{
\mathbf{H} = 
\sum_{i=0}^{k-1} 
\mathbf{X} \left(
    \binom{n + ig}{o}
\right),
}
where each $\mathbf{X}(\lambda)$ is independently exponentially distributed with rate $\lambda$. 
The distribution of such a variable $\mathbf{H}$ is known as a \emph{hypoexponential distribution}.\footnote{In the case where the CRN is conservative, that is, each reaction has an equal number of reactants and products, the timestamps follow an Erlang distribution, i.e., a sum of independent and \emph{identically} distributed exponential random variables.}
Our aim here is to sample from it efficiently.

We first show (\Cref{sec:sampling-hypoexponential,sec:end-batch-rejection-sampling}) a theoretical approach allowing us to efficiently sample from $\mathbf{H}$ \emph{exactly} under an assumption of bounded molecular count.
We then discuss (\Cref{sec:approximating-hypo}) more practical approaches for sampling approximately from $\mathbf{H}$, which are faster and appear in theory and practice to be accurate enough that they do not meaningfully affect the output distribution.

\subsection{Sampling exactly from the hypoexponential distribution}
\label{sec:sampling-hypoexponential}
To drastically simplify this discussion, we assume that the execution we simulate has bounded total molecular count.
This assumption is true of any physically realistic CRN execution.
It also does not meaningfully deviate from what the Gillespie algorithm can simulate;
for example, on CRNs which exhibit finite-time blowup, any algorithm attempting to sample a configuration at some given time must have some probability of failure.
Even on CRNs that do not exhibit finite-time blowup, unbounded molecular counts typically occur as a result of exponential growth, a case where the Gillespie algorithm will take exponential time and is not practical.


The PDF $P$ 
of a hypoexponential distribution with distinct rates $\lambda_1, \ldots, \lambda_k$ is~\cite{ross2014introduction}:
\opt{full}{
\begin{equation}
\label{eqn:hypoexponential}
P(t) = \sum_{i=1}^k C_{i,k}\lambda_ie^{-\lambda_it},
\end{equation}
}
\opt{sub,final}{
$
P(t) = \sum_{i=1}^k C_{i,k}\lambda_ie^{-\lambda_it},
$
}
where
\eq{
C_{i,k} = 
\prod_{j \in \{1,\dots,k\} \setminus \{i\}}
\frac{\lambda_j}{\lambda_j - \lambda_i}.
}
\opt{sub}{\Cref{apx:exact-time} shows how to compute the $C_{i,k}$'s efficiently.}

\begin{toappendix}
\opt{sub}{\label{sec:apx-proof-of-c-i-k}
Recall the PDF of a hypoexponential with rate $\lambda_1,\dots,\lambda_k$ is
\begin{equation}
\label{eqn:hypoexponential}
    P(t) = \sum_{i=1}^k C_{i,k}\lambda_ie^{-\lambda_it},
\end{equation}
where each
$C_{i,k}= 
\prod_{j \in \{1,\dots,k\} \setminus \{i\}}
\frac{\lambda_j}{\lambda_j - \lambda_i}.$}
Na\"{i}vely computing all $C_{i,k}$ appears to require $\Theta(k^2)$, 
taking time $O(k)$ for each of the $k$ products $C_{i,k}$.
(Each binomial coefficient $\lambda_i = \binom{n+ig}{o}$ can be computed in $O(1)$ time since we consider $o,g = O(1)$ with respect to $n$.)
However, we can improve this to $\Theta(k \log^2 k)$ time.
First note that we can write
\[
C_{i,k} = 
\prod\limits_{j \ne i}
\frac{\lambda_j}{\lambda_j - \lambda_i}
=
\frac{\prod_{j \ne i} \lambda_j}
{\prod_{j \ne i} (\lambda_j - \lambda_i)}.
\]
The top product can be handled in this way.
We use time $\Theta(k)$ to compute the full product $A = \prod_{j=1}^k \lambda_j$.
We then compute each term $\prod_{j \ne i} \lambda_j$ for $1 \le i \le k$ in time $O(1)$ by a single division $A / \lambda_i$.
Thus the top products can be computed in total time $\Theta(k)$.

Now consider the bottom products
$
\prod_{j \ne i}
(\lambda_j - \lambda_i).
$
Consider the polynomial 
$f(x) = 
\prod_{j=1}^k
(\lambda_j - x).
$
Let $f'(x) = \frac{d}{dx} f(x)$.
Then by the product rule,
\[
    f'(x) 
= 
    \sum_{i=1}^k
    \frac{d}{dx}
    (\lambda_i - x)
    \cdot
    \left(
    \prod_{j \ne i}
    (\lambda_j - x)
    \right)
=
    - \sum_{i=1}^k
    \left(
    \prod_{j \ne i}
    (\lambda_j - x)
    \right).
\]
If we evaluate $f'(\lambda_m)$ for $1 \le m \le k$,
for terms $i \ne m$ in the sum,
one factor $(\lambda_j - x)$ in the product is 0 (specifically for $j=m$,
so those terms of the sum vanish),
and we have
$
C_{i,k}
=
\prod_{j \ne i} (\lambda_j - \lambda_i) 
=
- f'(\lambda_i).
$
Thus it suffices to construct $f$ and evaluate its derivative at $\lambda_i$ to compute $C_{i,k}.$

To evaluate at these points efficiently (time $O(k \log^2 k)$ for all $k$ products),
the first step is to convert the factored polynomial 
$f(x) = 
\prod_{j=1}^k
(\lambda_j - x)$
into its expanded coefficient form.
This can be done with a standard divide-and-conquer recursion, splitting into two polynomials with $k/2$ factors each.
The base case is to FOIL the degree-2 $(\lambda_i - x)(\lambda_{i+1} - x) = \lambda_i \lambda_{i+1} - (\lambda_i + \lambda_{i+1}) x + x^2$.
The recursive case can be handled using standard FFT-based polynomial multiplication routines~\cite{cormen2022introduction},
taking time $\Theta(k \log k)$ to multiply two degree-$k$ polynomials.
Since this will have $\log k$ levels of recursion and spend time $O(k \log k)$ at each level of recursion,
the total time required is $O(k \log^2 k)$.
Now that we have computed the polynomial $f'(x)$,
similar FFT-based methods for \emph{multipoint evaluation}~\cite{von2003modern} 
can be used to evaluate $f'$ at $k$ points in time $O(k \log^2 k)$.
(More generally time $O(k+m) \log^2 (k+m)$ to evaluate a degree-$k$ polynomial at $m$ points; $m=k$ in our case.)
Thus in time $O(k \log^2 k)$,
we can compute the coefficients $C_{i,k}$ used in the definition of the PDF and CDF;
so long as we continue sampling from the same hypoexponential;
these values do not need to be recomputed,
reducing the amortized cost the more reactions are executed in the total simulation.
\end{toappendix}

Nevertheless,
once we have the coefficients $C_{i,k}$,
it still requires time $\Theta(k)$ to evaluate the PDF in \eqref{eqn:hypoexponential}.
However, to achieve an asymptotic speedup over the Gillespie algorithm,
we must sample a hypoexponential defined by $k$ exponentials in time asymptotically smaller than $k$, 
so time $\Theta(k)$ remains too expensive.
However, this PDF is log-concave: that is, the second derivative of $\log(P(t))$ is negative for all $t > 0$.\footnote{This follows because the hypoexponential distribution is a convolution of exponentials, which are log-concave, and convolution preserves log-concavity.}
This allows us to use the black-box method of adaptive rejection sampling, outlined in \cite{gilks1992adaptive}, to sample repeatedly from the distribution without having to evaluate the PDF every time, as with normal rejection sampling.
In short, this method works by establishing upper and lower piecewise linear bounds on $\log(P(t))$, and improving these bounds to match the function closely as more evaluations are made.
The more accurate these bounds are, the less likely that drawing a sample requires one to explicitly compute the PDF.
In \cite{wild1993algorithm}, the authors give empirical evidence and give a proof
\footnote{The authors describe their proof as ``tentative''; while it somewhat lacks detail, we believe it to be sufficiently rigorous to show this asymptotic relationship.}
that this method generally allows one to obtain $O(m)$ samples from such a distribution in only $O(\sqrt[3]{m})$ evaluations of the PDF.
We will assume that this is the case.

In order to take advantage of this, we must guarantee that we sample from the same hypoexponential distribution repeatedly.
The key insight is that $W$ can be freely added or removed without altering any propensities corresponding to the original CRN, 
since $W$ is inert, 
but its presence changes what hypoexponential distribution will be sampled.
If we repeatedly execute some number of (possibly passive) reactions $r$, then remove $r\cdot g$ copies of $W$ from the configuration (i.e., return to the molecular count before the $r$ reactions were executed),
then each such cycle will have total elapsed times that are distributed like independent identical hypoexponentials. 
Because we assume an upper bound $n_{\mathrm{max}}$ on molecular count, we can guarantee there will always be enough $W$ to remove. 
If no such bound exists, our algorithm remains correct, but may be inefficient.
To prevent these extra $W$ from causing too many passive reactions, we can switch to a different hypoexponential whenever the molecular count halves (or doubles up to $n_{\mathrm{max}})$.
This adds an extra $\log n_{\mathrm{max}}$ factor to the analysis.

\subsection{Rejection sampling for exact end times}
\label{sec:end-batch-rejection-sampling}

The typical input to the Gillespie algorithm contains an exact continuous time $t$ at which to sample a configuration.
We will eventually run past time $t$ during a batch, but wish to sample the configuration at time exactly $t$.
To avoid this issue while remaining exact, we use rejection sampling to sample how many reactions from the batch occur before time $t$,
under the assumption that the last reaction occurs after time $t$.

The simplest version of this procedure is as follows:
when the end time of a batch is sampled to be past $t$,
sequentially sample the exponential variables that comprise the batch.
The first such sample that would cause simulated time to go past $t$ indicates the first reaction in the batch that happens \emph{after} time $t$, so we simulate everything before it, without a collision
(as the collision would happen after time $t$).
If all times are sampled and $t$ still has not been reached, we reject the sample and start over.
If the probability of some batch step exceeding time $t$ is $p$,
then the probability of the sample being accepted is also $p$,
leading to on average $\frac 1 p$ rejections.
Therefore, each time a batch step is run, the expected number of rejections is 1.
The expected cost of a rejection is the expected size of the final batch, which is $\Theta(\sqrt n)$.
Therefore, it is generally fine to run this rejection sampling in this way, as sampling a single batch slowly does not affect asymptotics in expectation.

\opt{full}{ 
This can be made more efficient via binary searching by sampling from hypoexponential distributions.
However, on any input $t$ on which we expect $\Omega(n)$ reactions to happen, this is not necessary,
as the last batch contains $\Theta(\sqrt n)$ reactions in expectation (see \cref{lem:sqrt-expectation}), so sampling the last batch slowly only adds an additional $\Theta(\sqrt n)$ time.}

\subsection{Full continuous-time simulation algorithm}
\label{sec:full-continuous-algorithm}
Here we provide our main algorithm, \cref{alg:full-continuous-time-algorithm}.
Like \cref{alg:full-discrete-time-algorithm}, it works by repeatedly calling \cref{alg:discrete-time-single-batch}.
However, it does so in such a way that allows repeated sampling from the same hypoexponential distribution. 
Note in particular that for all of these algorithms,
we describe the output as a single configuration,
but in practice we would sample a sequence of configurations by calling such algorithms repeatedly.

\begin{algorithm}
\caption{Continuous-time exact simulation}
\label{alg:full-continuous-time-algorithm}
\begin{algorithmic}
\REQUIRE{CRN $\mathcal C = (\Lambda, R)$, volume $v \in \R_{>0}$, configuration $\vc_0$ of $\cC$, end time $t_\mathrm{max}$, batching parameter $p \in \left(0, \frac 1 2 \right]$}
\ENSURE{Configuration $\vc$ of $\cC$ distributed as in \cref{lem:full-continuous-correctness}}

Set $\vc = \vc_0$ and $t = 0$. Repeat until step \ref{step:continuous-step-two} exits:
\begin{enumerate}    
    \item
    \label{step:continuous-step-one}
    Let $i$ be the least integer where $\|\vc\| \leq 2^i$,
    $k_0 = \|\vc\|$, 
    $n_0 = 2^{i+1}$,
    $\ell_{\mathrm{max}} = \lfloor n_0^p\rfloor$,
    and $\cC'=$ the output of \Cref{alg:null-reactions} on inputs $\cC$, $v$, and $k_0$. Let $\vc' = \vc + k_0 \cdot K + (n_0 - \|\vc\| - k_0) \cdot W$, so that $\|\vc'\| = n_0$.

    \item 
    \label{step:continuous-step-two}
    Sample a value $t_0$ from the hypoexponential distribution (see \Cref{sec:sampling-hypoexponential}) with rates $\lambda_j = \binom{n_0 + j\cdot\gen(\cC)}{\ord(\cC)}, 0 \leq j < \ell_{\mathrm{max}}$. If $t_0 + t > t_{\mathrm{max}}$, run end-of-simulation rejection sampling starting from $\vc'$ to sample a configuration $\vc$ (see \cref{sec:end-batch-rejection-sampling}), and then output $\vc$ with all $K$ and $W$ removed. Otherwise, add $t_0$ to $t$.

    \item 
    \label{step:step3continuous}
    Repeat until $\ell_{\textrm{max}} = 0$: run \cref{alg:discrete-time-single-batch} on inputs $\cC'$, $\vc'$, and $\ell_{\mathrm{max}}$. Subtract the returned number of steps from $\ell_{\textrm{max}}$. Set $\vc'$ to the returned configuration. 

    \item 
    Set $\vc$ to $\vc'$ with all $K$ and $W$ removed.
    

\end{enumerate}
    
\end{algorithmic}
\end{algorithm}

\begin{lemrep}
    \label{lem:full-continuous-correctness}
    On input $\cC$, $\vc$, $v$, and $t_{\mathrm{max}}$, the output of \cref{alg:full-continuous-time-algorithm} is distributed as $\tstep{\cC}{\vc}{t_{\mathrm{max}}}{v}$.
\end{lemrep}
\begin{proof}
    We couple \cref{alg:full-continuous-time-algorithm} with the Gillespie algorithm run on $\cC'$ \emph{over the course of each individual loop iteration}.
    That is, 
    we ensure that \cref{alg:full-continuous-time-algorithm} and the Gillespie algorithm, 
    operating on the same configuration $\vc'$ described in step \ref{step:continuous-step-one} and both having simulated $t$ time so far,
    have the same probability of executing at least $\ell_{\mathrm{max}}$ reactions before time $t_{\mathrm{max}}$;
    we also ensure that their configuration distributions match in each case at the end of the iteration.
    This coupling implies that each iteration of \cref{alg:full-continuous-time-algorithm} matches the behavior of the Gillespie algorithm on $\vc'$, and by \cref{lem:null-reaction-simulation}, this is equivalent to the Gillespie algorithm run on $\cC$ over distributions of species other than $K$ and $W$, even in continuous time.
    That is, each individual iteration of \cref{alg:full-continuous-time-algorithm} that winds up simulating $t_0$ continuous time on a configuration $\vc$ of $\cC$,
    does exactly the same thing in distribution as the Gillespie algorithm simulating for time $t_0$ on $\vc$.
    Thus, \cref{alg:full-continuous-time-algorithm} as a whole has the same behavior as the Gillespie algorithm.
    
    The coupling, within a single iteration starting at time $t$, dictates the randomness of \cref{alg:full-continuous-time-algorithm} based on the randomness of the Gillespie algorithm.
    In particular, either the Gillespie algorithm simulates $l_{\mathrm{max}}$ total reactions within the remaining time $t_{\mathrm{max}} - t$, or it does not.
    The probability that the Gillespie algorithm does simulate this many reactions within this amount of time corresponds to the probability that the value $t_0$, sampled in step \ref{step:continuous-step-two}, satisfies $t + t_0 < t_{\mathrm{max}}$, 
    because this hypoexponential distribution describes exactly the amount of time that the Gillespie algorithm will take to simulate this many steps (which does not depend on which reactions occur, due to uniform reactivity of $\cC'$).
    If the Gillespie algorithm does simulate this many steps, and it takes time $t_0$ to do so, then step \ref{step:continuous-step-two} samples this value of $t_0$.
    In this case, \cref{lem:partial-discrete-correctness} implies that the coupling remains valid because \cref{alg:full-continuous-time-algorithm} runs reactions by running \cref{alg:discrete-time-single-batch}.
    If the Gillespie algorithm runs past time $t_{\mathrm{max}}$ before simulating this many reactions, 
    the coupling remains valid because rejection sampling (described in \cref{sec:end-batch-rejection-sampling}) correctly conditionally samples how many reactions to simulate.
\end{proof}

\begin{lemrep}
    \label{lem:full-continuous-efficiency}
    Suppose \cref{alg:full-continuous-time-algorithm} is run on input $\cC$, $v$, $\vc_0$, $t_{\mathrm{max}}$, and $p$, simulating an execution of $\cC$ that has $\ell$ reactions. 
    Suppose that every configuration $\vc$ appearing at the start of an iteration of this execution satisfies $n_{\mathrm{min}} \leq \|\vc\| \leq n_{\mathrm{max}}$
    and $\schedulerslowdown{\cC}{v}{\vc} \leq s$
    for some $n_\mathrm{min}, n_\mathrm{max} \in \N$, $s \in \R_{>0}$.
    Then the algorithm runs in time 
    $O\left(\frac{(2q)^{\ord(\cC)}s\ell\log(n_{\mathrm{max}})}{n_{\mathrm{min}}^p} + n_{\mathrm{max}}^{2p/3}\ell^{1/3}\log(n_{\mathrm{max}})\right)$.
\end{lemrep}

To break down this expression: $\ord(\cC)$ and $q$ are constants depending only on $\cC$.
$s$ is a slowdown factor due to effects of the specific configuration being simulated.
\opt{full,sub}{
Empirically, for example,
$s \leq 5$ for the Lotka-Volterra oscillator, as shown in \Cref{fig:lotka_volterra_plot_with_passive_reactions}.
}
On many reasonable inputs $n_{\mathrm{min}}$ and $n_{\mathrm{max}}$ will differ by a multiplicative constant, so we can treat them as being the same, and ignore the logarithmic factors. 
With these simplifications, we can express the runtime as roughly
\eq{\tilde{O}\left(n^{-p}\ell + n^{2p/3}\ell^{1/3}\right).}
The first term is the cost of sampling configurations in discrete time using batching.
It decreases as we increase $p$ toward $\frac 1 2$, because we are able to run more reactions in each batch.
In particular, the Gillespie algorithm fundamentally runs in time $\Omega(\ell)$, 
and the $n^{-p}$ factor represents the savings of batching.
The second term is the cost of adaptive rejection sampling to exactly sample inter-reaction times.
It decreases as we decrease $p$, because this allows us to sample more frequently from a less complex hypoexponential distribution,
allowing adaptive rejection sample to learn about the distribution more quickly.
Because of this, we can consider regimes comparing $\ell$ and $n$, and find the optimal asymptotic runtime of our algorithm in each regime by setting these two exponents equal to each other.
If $\ell \in \Omega(n^{5/4})$, the first term is dominant even when $p = \frac 1 2$, which is the largest value of $p$ that is beneficial (because there are typically $\Theta(n^{1/2})$ reactions between collisions). 
In this regime, the simulated execution is long enough that the adaptive rejection sampling algorithm has enough time to learn the hypoexponential distribution, and there is no asymptotic slowdown compared to our discrete time algorithm.
If $\ell \in \Theta(n)$, the asymptotically optimal value of $p$ is $2/5$, and our algorithm gives a speedup factor over the Gillespie algorithm of $n^{2/5}$.
We generally expect $\ell \in \Omega(n)$, because CRNs do not generally exhibit interesting behavior in a sublinear number of reactions.

\begin{proof}
    The first term comes from the runtime of step \ref{step:step3continuous}, which is shown in \cref{lem:discrete-efficiency-final}.
    There are two relevant differences from that lemma: first, $\sqrt n$ is replaced by $n_{\mathrm{min}}^p$.
    The change in exponent is because this factor comes from the number of reactions that are batched, and \cref{alg:full-continuous-time-algorithm} only calls \cref{alg:discrete-time-single-batch} to simulate $n^p$ reactions at a time.
    $n_{\mathrm{min}}$ is used so that the bound remains valid for every iteration.
    Second, up to half of the molecules in $\vc'$ might be $W$. 
    This might cause passive reactions to be simulated more often. 
    Because at most half of molecules are $W$ at the beginning of each batch, each molecule chosen as a reactant has probability at most $\frac 1 2$ of being $W$. 
    Thus, for each chosen reaction, with probability $\Omega(2^{-\ord(\cC)})$ its reactants contain no $W$. 
    Thus, the presence of $W$ cannot slow the simulation down by more than this factor, which we combine with the $q^{\ord(\cC)}$ term.
    
    What remains to show is the second term. 
    This term comes from step \ref{step:continuous-step-two}, which samples the hypoexponential distribution.
    To simulate $\ell$ reactions from a population size $n_0$, \cref{alg:full-continuous-time-algorithm} must sample a hypoexponential distribution with $n_0^p$ rates (representing the time to run $n_0^p$ reactions) a total of $\frac \ell {n_0^p}$ times.
    We use adaptive rejection sampling\cite{gilks1992adaptive}, which allows us to obtain these $\frac \ell {n_0^p}$ samples in $\sqrt[3]{\frac \ell {n_0^p}}$ evaluations of the hypoexponential PDF.
    To evaluate the hypoexponential PDF for the first time, we must first compute the values $C_{i,n_0^p}$ given in \cref{eqn:hypoexponential}, 
    \opt{full}{which can be done in time $\tilde{O}(n_0^p)$ as shown in \cref{sec:sampling-hypoexponential}.}
    \opt{sub}{which can be done in time $\tilde{O}(n_0^p)$ as shown in \cref{sec:apx-proof-of-c-i-k}.}
    Then, each subsequent evaluation takes time $O(n_0^p)$ to compute and sum $n_0^p$ terms.
    It follows that this process takes time $n_0^p\cdot \sqrt[3]{\frac \ell {n_0^p}} = n_0^{2p/3}\ell^{1/3}$ for a given hypoexponential distribution.
    We use $n_{\mathrm{max}}$ for $n_0$ as a worst-case bound.

    As molecular count changes, step \ref{step:continuous-step-one} may encounter different values of $i$ (and thus different values of $n_0$).
    This is necessary to allow molecular count to grow while ensuring that only a constant fraction of molecules are $W$ at the start of each batch.
    This requires sampling from as many as $\log(n_{\mathrm{max}})$ different hypoexponential distributions during step \ref{step:continuous-step-two}.
    This expression then becomes a multiplicative factor, as $n_{\mathrm{max}}^{2p/3}\ell^{1/3}$ bounds how long it will take to draw all the necessary samples from any one of these distributions.
\end{proof}

\opt{sub}{
\Cref{sec:approximating-hypo}
explains how our implementation of the batching algorithm actually samples from a close approximation of the hypoexponential distribution,
much faster in practice.
It also justifies both theoretically and empirically that this does not change the sampled configuration distribution at all
(thus does not lead to qualitatively different behavior as with other speedup methods such as $\tau$-leaping),
and it changes the timestamp distributions negligibly.

}

\begin{toappendix}

\opt{sub}{
\section{Sampling approximately from the hypoexponential distribution}
}
\opt{full}{
\subsection{Sampling approximately from the hypoexponential distribution}
}

\label{sec:approximating-hypo}

Implementations of the algorithm described in \Cref{sec:sampling-hypoexponential,sec:end-batch-rejection-sampling} spend significant time sampling from the hypoexponential distribution.
Although the asymptotic performance is adequate amortized over many calls to the sampling procedure,
the constants involved make this a performance bottleneck in practice.

In this section we describe an alternative that is faster in practice and produces nearly identical outcomes.
We give up trying to sample from the hypoexponential distribution exactly,
instead sampling from the computationally much simpler 
\emph{gamma} distribution.
This technically means that in practice we are not sampling from precisely the same distribution of times as the Gillespie algorithm.
We emphasize that this is merely a slight imprecision in \emph{timestamps};
the sequence of configurations is still sampled from precisely the same distribution as Gillespie.
Thus this will not lead to inaccuracies in the sampled qualitative behavior of the CRN as with inexact methods such as $\tau$-leaping.
Furthermore, 
we justify that the measured deviations of timestamps from those of Gillespie will be negligible,
i.e., in any reasonably large population,
so small that it would not change a single pixel on a plot of counts over time.
Compared to $\tau$-leaping,
which gives worse approximations with larger $\tau$,
we have no time-accuracy tradeoff.
There are just some places where we observed a prohibitive computational cost in practice, despite the asymptotic performance shown in \Cref{sec:sampling-hypoexponential}.
The optimizations described in this section are therefore largely relevant to practical implementation,
so we focus on whether the approximation leads to inaccurate results in practice.

The approximations described in this section actually get better for larger population sizes $n$.
Thus in practice, for small $n$ one can sample the exact time distribution directly,
using the approximations described in this section only on $n$ sufficiently large that the approximation is so accurate that deviations from the true distribution are undetectable in practice.

Recall that the special case of the hypoexponential, 
where each of the $k \in \N^+$ exponentials being summed has identical rate $\lambda$, is called an \emph{Erlang}  distribution
$\mathbf{Er}(k,\lambda)$.
The gamma distribution $\mathbf{\Gamma}(\alpha,\lambda)$ generalizes $\mathbf{Er}(k,\lambda)$ to allow a real-valued first parameter $\alpha$, but coincides with $\mathbf{Er}(\alpha,\lambda)$ for positive integer $\alpha \in \N^+$.
The PDF of the Erlang has a term $(k-1)!$,
and the gamma function $\Gamma: \mathbb{C} \to \mathbb{C}$, 
defined on all complex numbers, 
has the property that
$\Gamma(k) = (k-1)!$ for all positive integers $k$.
The gamma distribution has the same PDF as the Erlang, but with $\Gamma(k)$ appearing in place of the $(k-1)!$ factorial in Erlang's PDF.

The reason we use the gamma distribution instead of the Erlang to approximate the hypoexponential distribution is that we use the method of 
\emph{moment matching}~\cite{bowman2004estimation},
finding the gamma distribution with the same mean (first moment) and variance (second moment) as the desired hypoexponential.
Furthermore, the rates of the exponentials defining the hypoexponential are very close to each other:
since $k = \Theta(\sqrt n)$ in expectation,
the ratio of the first and last terms of the sum defining the mean is very close to 1.
If the ratio were \emph{equal} to 1, 
then this would be a simple Erlang distribution.
The hypoexponential has many real parameters,
but Erlang only has 2, and the first is an integer.
Thus with the integer restriction, we cannot hope to find an Erlang matching both of these moments precisely.
Yet the \emph{gamma} distribution can match both,
since its two parameters are both real-valued, 
i.e., the gamma has the same number of degrees of freedom as the two moments we want to match.
Empirically in practice,
the real-valued \emph{shape} parameter of the gamma is very close to an integer,
so is ``almost'' an Erlang.

We note that it is possible to have a controllable approximation scheme that trades accuracy for speed.
The hypoexponential is defined as a sum of $k$ exponential random variables.
We could have a parameter $1 \leq c \leq k$ and sample
$c$ gamma random variables, with the $i$'th gamma having expected value to match the $i$'th block of $k/c$ exponentials.
In the case of $c=1$,
this is the approximation we described above.
In the case of $c=k$,
this samples exactly the hypoexponential by individually sampling the $k$ exponentials defining it.
In between,
as $c$ is larger, this is a better and better approximation.
However, in practice, we found that setting $c=1$ leads to a distribution essentially indistinguishable from the hypoexponential being approximating.

The hypoexponential in our case is parameterized by four parameters:

\begin{itemize}
\item 
    $n$: population size
\item
    $k$: number of interactions in a collision-free run
\item
    $o$: order of the CRN (number of reactants)
\item
    $g$: generativity of the CRN
    (number of products minus number of reactants)
\end{itemize}

Recall that the exponential random variables of the inter-reaction times have rates
$\binom{n}{o}$,
$\binom{n+g}{o}$,
$\binom{n+2g}{o}$,
$\dots$,
$\binom{n+(k-1)g}{o}$.
Those are the rates for the hypoexponential distribution giving the time of the entire batch of length $k$.
Such a distribution has
\[
\text{mean}
\qquad
\mu = 
\sum_{i=0}^{k-1} \frac{1}{\binom{n+ig}{o}}
\qquad
\text{and variance}
\qquad
\sigma^2 = 
\sum_{i=0}^{k-1} 
\frac{1}{\binom{n+ig}{o}^2}.
\]

Computing these directly would take time $\Omega(k)$,
defeating the goal of processing a batch of size $k$ in time $o(k).$
(Though in practice we do compute them directly for small values of $n$.)
The rest of \Cref{sec:approximating-hypo}
is devoted to showing that we can more efficiently compute the mean and variance of this hypoexponential distribution.

\opt{full}{
\subsubsection{Technical lemmas}
}
\opt{sub}{
\subsection{Technical lemmas}
}
\label{sec:polygamma-identities}

We first prove several technical lemmas involving identities that will be useful for computing both the mean and variance of a hypoexponential with the rates relevant to our batching algorithm.

\begin{lem}
\label{lem:inverse-binomial}
For all $B,o \in \N^+$ such that $o \le B$,
\[
    \frac{1}
    {\binom{B}{o}}
=
    o
    \cdot
    \sum_{m=0}^{o-1} 
    \binom{o-1}{m}
    \frac{(-1)^{m}}{B-(o-1-m)}.
\]
\end{lem}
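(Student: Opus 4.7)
The plan is to recognize the right-hand side as a partial fraction expansion of $1/\binom{B}{o}$, viewed as a rational function of $B$ with simple poles at $B = 0, 1, \dots, o-1$.

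First I would rewrite
\[
\frac{1}{\binom{B}{o}} = \frac{o!}{B(B-1)(B-2)\cdots(B-(o-1))},
\]
which is a proper rational function in $B$ with $o$ distinct simple poles at $B = j$ for $j = 0, 1, \dots, o-1$. Its partial fraction decomposition therefore has the form $\sum_{j=0}^{o-1} \frac{A_j}{B-j}$, where each residue is
\[
A_j = \frac{o!}{\prod_{\substack{0\le i \le o-1 \\ i \ne j}} (j-i)}.
\]

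Next I would compute $A_j$ explicitly. The factors with $i < j$ contribute $j!$ (all positive), and the factors with $i > j$ contribute $(-1)^{o-1-j}(o-1-j)!$ since $j-i$ runs over $-1,-2,\dots,-(o-1-j)$. Thus
\[
A_j = \frac{o!\,(-1)^{o-1-j}}{j!\,(o-1-j)!} = o \cdot \binom{o-1}{j}\,(-1)^{o-1-j}.
\]
Substituting back gives
\[
\frac{1}{\binom{B}{o}} = o \sum_{j=0}^{o-1} \binom{o-1}{j}\,\frac{(-1)^{o-1-j}}{B-j}.
\]

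Finally I would reindex with $m = o-1-j$, so that $B-j = B-(o-1-m)$ and the sign becomes $(-1)^m$; using the symmetry $\binom{o-1}{o-1-m} = \binom{o-1}{m}$ converts the sum into exactly the form claimed in the lemma. The only real obstacle is bookkeeping the residue signs and the index flip correctly; no nontrivial identity beyond partial fractions is needed.
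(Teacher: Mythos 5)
Your proposal is correct and follows essentially the same route as the paper's proof: both decompose $1/\binom{B}{o}$ by partial fractions in $B$ with simple poles at $0,1,\dots,o-1$, evaluate the residues as $(-1)^{o-1-j}/(j!\,(o-1-j)!)$ up to the factorial prefactor, and reindex via $m = o-1-j$ to obtain the binomial form. The only cosmetic difference is that the paper factors out $o\cdot(o-1)!$ before decomposing, whereas you carry the full $o!$ through the residue computation.
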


\begin{proof}
    First,
    \[
    \frac{1}
    {\binom{B}{o}}
    =
    \frac{o! (B-o)!}{B!}
    =
    o
    \cdot
    \frac{(o-1)!}
    {\prod_{j=0}^{o-1} B-j}
    \]
    so it suffices to show
    \begin{equation}
    \label{eq:lem:inverse-binomial-suffice-to-prove}
    \frac{(o-1)!}
    {\prod_{j=0}^{o-1} B-j}
    =
    \sum_{m=0}^{o-1} 
    \binom{o-1}{m}
    \frac{(-1)^{m}}{(B-(o-1-m))}.
    \end{equation}
    
    Decomposing by partial fractions
    we can write:
    \begin{equation}
    \label{eq:poch-denominator}
    \frac{1}{\prod_{j=0}^{o-1} B-j} = \sum_{j=0}^{o-1} \frac{A_j}{B-j}.
    \end{equation}
    where the coefficients $A_j$ are given by the residue formula:
    \[
    A_j
    = \frac{1}{\prod_{\ell=0, \ell \neq j}^{o-1}(j-\ell)}
    = 
    \frac{1}{\prod_{\ell=0}^{j-1}(j-\ell) \cdot
    \prod_{\ell=j+1}^{o-1}(j-\ell)}
    =
    \frac{1}{j! \prod_{\ell=j+1}^{o-1}(j-\ell)}.
    \]
    
    For the product 
    $\prod_{\ell=j+1}^{o-1}(j-\ell)$,
    when $\ell$ takes values $j+1,j+2,\dots,o-1$, 
    then $(j-\ell)$ takes values
    $-1,-2,\dots,-(o-1-j)$.
    Conventionally factorial is not defined on negative integer values,
    but this product's absolute value is $(o-j-1)!$,
    equal to $(o-1-j)!$ if $o-1-j$ is even or $-((o-1-j)!)$ if $o-1-j$ is odd.
    Written differently,
    $\prod_{\ell=j+1}^{o-1}(j-\ell) = (-1)^{o-1-j} (o-1-j)!$.
    Thus we have
    \[
    A_j 
    = \frac{1}{j! (-1)^{o-1-j} (o-1-j)!}
    = \frac{(-1)^{o-1-j}}{j! (o-1-j)!}.
    \]
    Substituting into \Cref{eq:poch-denominator},
    \[
    \frac{1}{\prod_{j=0}^{o-1}(B-j)}
    = 
    \sum_{j=0}^{o-1} 
    \frac{(-1)^{o-1-j}}{j!(o-1-j)! (B-j)}
    = 
    \sum_{m=0}^{o-1} 
    \frac{(-1)^{m}}{m!(o-1-m)! (B-(o-1-m))},
    \]
    where the second equality follows by letting $m = o-1-j$
    (i.e., add the terms of the sum in the reverse order).
    Now multiply both sides by $(o-1)!$ to show \Cref{eq:lem:inverse-binomial-suffice-to-prove} holds:
    \[
    \frac{(o-1)!}{\prod_{j=0}^{o-1}(B-j)}
    = 
    \sum_{m=0}^{o-1} 
    \frac{(o-1)! (-1)^{m}}
    {m!(o-1-m)! (B-(o-1-m))}
    = 
    \sum_{m=0}^{o-1} 
    \binom{o-1}{m}
    \frac{(-1)^{m}}
    {B-(o-1-m)}.
    \qedhere
    \]
\end{proof}

For each $n \in \N$,
we let $\psi_n: \R \to \R$ denote the \emph{$n$'th polygamma} function with real arguments,
the $(n+1)$'st derivative of the ``log gamma'' function $\ln \Gamma(x)$.
In particular,
$\psi_0(x) = (\ln \Gamma(x))' = \frac{\Gamma(x)'}{\Gamma(x)}$ and 
$\psi_1 = (\ln \Gamma(x))''$ 
are respectively known as the \emph{digamma} and \emph{trigamma} functions,
where $f(x)'$ and $f(x)''$ respectively denote $\frac{df(x)}{dx}$ and $\frac{d^2f(x)}{dx^2}$.

The following is a technical lemma relating sums of a certain form to differences in the digamma function $\psi_0$.

\begin{lem}
\label{lem:diff-digamma-sum}
    For all $A,k,g \in \N^+$, 
    \[
    \sum_{i=0}^{k-1}
    \frac{g}{A+ig}
    =
    \psi_0\left(
        k + \frac{A}{g}
    \right)
    -
    \psi_0\left(
        \frac{A}{g}
    \right).
    \]
\end{lem}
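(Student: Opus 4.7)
The plan is to prove this identity by telescoping, using the well-known recurrence relation for the digamma function, namely $\psi_0(x+1) - \psi_0(x) = 1/x$, which holds for all $x > 0$ and follows directly from differentiating the functional equation $\Gamma(x+1) = x\Gamma(x)$.

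First I would rewrite each summand in a form compatible with this recurrence: for each $i \in \{0, 1, \dots, k-1\}$, divide numerator and denominator by $g$ to obtain
\[
\frac{g}{A+ig} = \frac{1}{\frac{A}{g}+i}.
\]
Setting $x = \frac{A}{g} + i$ in the digamma recurrence gives
\[
\psi_0\!\left(\frac{A}{g}+i+1\right) - \psi_0\!\left(\frac{A}{g}+i\right) = \frac{1}{\frac{A}{g}+i} = \frac{g}{A+ig}.
\]
Summing this identity for $i = 0, 1, \dots, k-1$ yields a telescoping sum on the left,
\[
\sum_{i=0}^{k-1} \left[\psi_0\!\left(\frac{A}{g}+i+1\right) - \psi_0\!\left(\frac{A}{g}+i\right)\right] = \psi_0\!\left(k+\frac{A}{g}\right) - \psi_0\!\left(\frac{A}{g}\right),
\]
and equating with the corresponding sum on the right gives the claim.

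The proof is essentially a two-line telescoping argument, so there is no real obstacle; the only step requiring care is to note that the digamma recurrence is valid at the real arguments we use (which it is, since $A, g, k \in \N^+$ guarantees $A/g + i > 0$ throughout). If desired, the recurrence $\psi_0(x+1) - \psi_0(x) = 1/x$ could itself be derived by differentiating $\ln \Gamma(x+1) = \ln x + \ln \Gamma(x)$, but this is a standard identity that can be cited.
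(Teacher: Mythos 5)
Your proof is correct and takes essentially the same route as the paper: both rewrite $\frac{g}{A+ig}$ as $\frac{1}{A/g+i}$ and then apply the summation identity $\psi_0(k+z)-\psi_0(z)=\sum_{i=0}^{k-1}\frac{1}{z+i}$, which the paper cites directly and which you derive by telescoping the one-step recurrence $\psi_0(x+1)-\psi_0(x)=1/x$. The only difference is that your version is slightly more self-contained.
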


\begin{proof}
    We use the identity~\cite{abramowitz1965handbook,qureshi2018analyticcomputationsdigammafunction}
    for $k \in \N^+$,
    \begin{equation}
    \label{eq:digamma-add-k}
    \psi_0(k+z) = 
    \sum_{i=0}^{k-1}
    \frac{1}{z+i}
    + \psi_0(z).
    \end{equation}
    Then
\begin{align*}
    \sum_{i=0}^{k-1}
    \frac{g}{A+ig}
=
    \sum_{i=0}^{k-1}
    \frac{1}{\frac{A+ig}{g}}
=
    \sum_{i=0}^{k-1}
    \frac{1}{\frac{A}{g}+i}
&=
    \sum_{i=0}^{k-1}
    \frac{1}{\frac{A}{g}+i}
    +
    \psi_0\left(
        \frac{A}{g}
    \right)
    -
    \psi_0\left(
        \frac{A}{g}
    \right)
\\&=
    \psi_0\left(
        k + \frac{A}{g}
    \right)
    -
    \psi_0\left(
        \frac{A}{g}
    \right)
\qquad \text{by \Cref{eq:digamma-add-k}.}
\qquad\qquad
\qedhere
\end{align*}
\end{proof}

The following similar technical lemma involves squaring the terms in the sum of \Cref{lem:diff-digamma-sum},
which turns out to be the difference of two \emph{trigamma} functions $\psi_1$.

\begin{lem}
\label{lem:diff-trigamma-sum}
    For all $A,k,g \in \N^+$, 
    \[
    \sum_{i=0}^{k-1}
    \left(
    \frac{g}{A+ig}
    \right)^2
    =
    \psi_1\left(
    \frac{A}{g}
    \right)
    -
    \psi_1\left(
    k+\frac{A}{g}
    \right)
    \]
\end{lem}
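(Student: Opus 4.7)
The plan is to mirror the structure of the proof of \Cref{lem:diff-digamma-sum}, replacing the digamma identity with its derivative (the analogous trigamma identity) and performing the same algebraic rescaling. Specifically, I would first rewrite the summand so that $g$ is absorbed into the denominator:
\[
\sum_{i=0}^{k-1} \left(\frac{g}{A+ig}\right)^2
= \sum_{i=0}^{k-1} \frac{1}{\left(\frac{A}{g}+i\right)^2},
\]
which puts the sum into a form directly matching a standard trigamma identity.

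Next, I would invoke (or briefly derive) the trigamma analogue of \Cref{eq:digamma-add-k}, namely
\[
\psi_1(k+z) = \psi_1(z) - \sum_{i=0}^{k-1} \frac{1}{(z+i)^2},
\]
valid for $k \in \mathbb{N}^+$ and $z > 0$. This can be obtained either by differentiating \Cref{eq:digamma-add-k} with respect to $z$ (since $\psi_1 = \psi_0'$), or equivalently from the series representation $\psi_1(z) = \sum_{n=0}^\infty \frac{1}{(z+n)^2}$ by telescoping the two series for $\psi_1(z)$ and $\psi_1(k+z)$.

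Finally, I would set $z = A/g$ and rearrange to get
\[
\sum_{i=0}^{k-1} \frac{1}{\left(\frac{A}{g}+i\right)^2}
= \psi_1\!\left(\frac{A}{g}\right) - \psi_1\!\left(k + \frac{A}{g}\right),
\]
which, combined with the opening rewrite, gives the claimed identity.

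I do not anticipate a genuine obstacle here: the result is essentially the term-by-term derivative of \Cref{lem:diff-digamma-sum} in the variable $A/g$, and the only care needed is to justify the trigamma shift identity (either by citing the same source used for \Cref{eq:digamma-add-k} or by a one-line telescoping argument on the defining series). The rescaling step that pulls the factor of $g$ inside the denominator is identical to the one used in the digamma proof, so no new bookkeeping is required.
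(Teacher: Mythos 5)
Your proposal is correct and follows essentially the same route as the paper: the paper also starts from the series representation $\psi_1(z) = \sum_{i=0}^\infty \frac{1}{(z+i)^2}$, rewrites $\psi_1(A/g)$ and $\psi_1(k+A/g)$ as sums of the terms $(g/(A+ig))^2$, and obtains the finite sum by subtracting the tail. The only cosmetic difference is that you also mention the option of differentiating the digamma shift identity, but your telescoping alternative is exactly the paper's argument.
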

\begin{proof}
The trigamma function $\psi_1$ has infinite series expansion~\cite{lozier2003nist}
$
    \psi_1(z) = 
    \sum_{i=0}^\infty
    \frac{1}{(z+i)^2},
$
so
\begin{align*}
    \psi_1\left(
    \frac{A}{g}
    \right)
&=
    \sum_{i=0}^\infty
    \frac{1}{\left( \frac{A}{g} + i \right)^2}
=
    \sum_{i=0}^\infty
    \frac{1}{\left( \frac{A+ig}{g}\right)^2}
=
    \sum_{i=0}^\infty
    \left(
    \frac{g}{A+ig}
    \right)^2
\end{align*}
and similarly
\begin{align*}
    \psi_1\left(
    k + \frac{A}{g}
    \right)
&=
    \sum_{i=0}^\infty
    \frac{1}{\left( \frac{A}{g} + k + i \right)^2}
=
    \sum_{i=0}^\infty
    \left(
    \frac{g}{(A+(k+i)g)}
    \right)^2
=
    \sum_{m=k}^\infty
    \left(
    \frac{g}{A+mg}
    \right)^2,
\end{align*}
where the last re-indexes the sum letting $m=k+i$.
Now observe
\begin{align*}
    \psi_1\left(
    \frac{A}{g}
    \right)
    -
    \psi_1\left(
    k + \frac{A}{g}
    \right)
&=
    \sum_{i=0}^\infty
    \left(
    \frac{g}{A+ig}
    \right)^2
    -
    \sum_{m=k}^\infty
    \left(
    \frac{g}{A+mg}
    \right)^2
=
    \sum_{i=0}^{k-1}
    \left(
    \frac{g}{A+ig}
    \right)^2.
    \qquad\qquad
    \qedhere
\end{align*}
\end{proof}

\begin{lem}
\label{lem:diff-digamma-sum-two-factors-denominator}
For all $A,B,k,g \in \N^+$ with $A \neq B$,
\[
    \sum_{i=0}^{k-1}
    \frac{1}
    {(A+ig) (B+ig)}
    =
    \frac{1}{g(A-B)} \cdot
    \left[
        \psi_0\left(
            \frac{A}{g}
        \right)
        -
        \psi_0\left(
            k + \frac{A}{g}
        \right)
        -
        \psi_0\left(
            \frac{B}{g}
        \right)
        +
        \psi_0\left(
            k + \frac{B}{g}
        \right)
    \right].
\]
\end{lem}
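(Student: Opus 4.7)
The plan is a straightforward partial-fraction reduction to \Cref{lem:diff-digamma-sum}. First I would observe the identity
\[
    \frac{1}{A+ig} - \frac{1}{B+ig}
    =
    \frac{(B+ig) - (A+ig)}{(A+ig)(B+ig)}
    =
    \frac{B-A}{(A+ig)(B+ig)},
\]
which is valid because $A \neq B$ (and $g,A,B > 0$ ensure nonzero denominators). Rearranging gives
\[
    \frac{1}{(A+ig)(B+ig)}
    =
    \frac{1}{B-A}\left[\frac{1}{A+ig} - \frac{1}{B+ig}\right]
    =
    \frac{1}{g(B-A)}\left[\frac{g}{A+ig} - \frac{g}{B+ig}\right].
\]

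Next I would sum both sides over $i = 0, \dots, k-1$ and split the sum into two pieces, one involving $\frac{g}{A+ig}$ and one involving $\frac{g}{B+ig}$. Each of these is exactly the form handled by \Cref{lem:diff-digamma-sum}, giving
\[
    \sum_{i=0}^{k-1}\frac{g}{A+ig}
    =
    \psi_0\!\left(k + \tfrac{A}{g}\right) - \psi_0\!\left(\tfrac{A}{g}\right),
    \qquad
    \sum_{i=0}^{k-1}\frac{g}{B+ig}
    =
    \psi_0\!\left(k + \tfrac{B}{g}\right) - \psi_0\!\left(\tfrac{B}{g}\right).
\]

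Finally I would substitute these two expressions and flip the sign of the prefactor $\frac{1}{g(B-A)}$ to match the target form $\frac{1}{g(A-B)}$, which swaps the roles of the two bracketed differences. Combining yields exactly
\[
    \frac{1}{g(A-B)}\left[\psi_0\!\left(\tfrac{A}{g}\right) - \psi_0\!\left(k + \tfrac{A}{g}\right) - \psi_0\!\left(\tfrac{B}{g}\right) + \psi_0\!\left(k + \tfrac{B}{g}\right)\right],
\]
as desired. There is no real obstacle here; the only thing to be careful about is the sign bookkeeping when converting $B-A$ to $A-B$, and the hypothesis $A \ne B$ which is exactly what keeps the partial-fraction step well-defined.
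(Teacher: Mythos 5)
Your proposal is correct and follows essentially the same route as the paper's proof: a partial-fraction split of $\frac{1}{(A+ig)(B+ig)}$ into a difference of simple terms, followed by applying \Cref{lem:diff-digamma-sum} to each resulting sum; the only cosmetic difference is that you start with the decomposition in terms of $B-A$ and flip the sign at the end, whereas the paper arranges the subtraction to produce $A-B$ directly. The sign bookkeeping in your final step checks out.
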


\begin{proof}
    Note that by partial fraction decomposition,
    \[
    \frac{1}{B+ig}
    -
    \frac{1}{A+ig}
    =
    \frac{A+ig - (B+ig)}
    {(A+ig)(B+ig)}
    =
    \frac{A-B}
    {(A+ig)(B+ig)},
    \]
    so dividing both sides by $A-B$:
    \[
    \frac{1}{(A+ig)(B+ig)}
    =
    \frac{1}{A-B}
    \left(
        \frac{1}{B+ig}
        -
        \frac{1}{A+ig}
    \right).
    \]
    Thus
\begin{align*}
    \sum_{i=0}^{k-1}
    \frac{1}
    {(A+ig) (B+ig)}
&=
    \sum_{i=0}^{k-1}
    \frac{1}{A-B}
    \left(
        \frac{1}{B+ig}
        -
        \frac{1}{A+ig}
    \right)
\\&=
    \frac{1}{A-B} \cdot
    \left[
        \sum_{i=0}^{k-1}
        \frac{1}{B+ig}
        -
        \sum_{i=0}^{k-1}
        \frac{1}{A+ig}
    \right]
\\&=
    \frac{1}{g(A-B)} \cdot
    \left[
        \sum_{i=0}^{k-1}
        \frac{g}{B+ig}
        -
        \sum_{i=0}^{k-1}
        \frac{g}{A+ig}
    \right]
\\&=
    \frac{1}{g(A-B)} \cdot
    \left[
        \psi_0\left(
            k + \frac{B}{g}
        \right)
        -
        \psi_0\left(
            \frac{B}{g}
        \right)
        -
        \psi_0\left(
            k + \frac{A}{g}
        \right)
        +
        \psi_0\left(
            \frac{A}{g}
        \right)
    \right],
\end{align*}
where the final equality follows by applying \Cref{lem:diff-digamma-sum} to each sum.
The lemma follows by rearranging terms in the brackets.
\end{proof}

\opt{full}{
\subsubsection{Computing mean and variance of hypoexponentials}
}
\opt{sub}{
\subsection{Computing mean and variance of hypoexponentials}
}

The mean of a hypoexponential distribution defined as a sum of exponential random variables with rates 
$\binom{n}{o}$,
$\binom{n+g}{o}$,
$\binom{n+2g}{o}$,
$\dots$,
$\binom{n+(k-1)g}{o}$,
i.e., with means equal to the reciprocals of those rates,
by linearity of expectation,
is 
$\sum_{i=0}^{k-1} 
\frac{1}{\binom{n+ig}{o}}$.
Calculating this directly by computing the sum takes time $\Omega(k)$,
but we need to do this when processing a batch of size $k$,
and the entire point of the algorithm is to use much less than time $k$ to process $k$ reactions.

The next lemma allows us to compute the mean of a such a hypoexponential with $k$ terms by computing a sum with only $o$ terms.
This is significant because we'll think of $o$ (the maximum number of reactants in any reaction in the original CRN) as a small constant,
whereas the na\"{i}ve way to calculate the mean 
(the left side of the equation of \Cref{lem:mean-hypo-formula})
would require time $\Theta(k)$,
where $k \gg o$,
to sum all expected values $\frac{1}{\binom{n+ig}{o}}.$
The identities proven in \Cref{sec:polygamma-identities} involving the digamma $\psi_0$ and trigamma $\psi_1$ functions will be used;
fortunately, algorithms exist to compute these functions in time $O(1)$~\cite{spouge1994computation}.\footnote{
    More precisely in time depending only on the desired relative error,
    but independent of the magnitude of the argument.
    See also~\cite{bernardo1976algorithm,schwachheim1969algorithm,mpmath}.
}

We note that in actual implementation, these identities are extremely sensitive to floating-point rounding errors.
In particular, even if each term $t$ in the sum of, e.g., \Cref{lem:mean-hypo-formula}
is ``moderately sized'', i.e., within a few orders of magnitude of 1,
in general there are pairs of terms, e.g., 
$10^{-2} \le t_1,t_2 \le 10^2$,
such that 
$|t_1 - t_2| \ll 10^{-15}$, 
i.e., 
the terms suffer ``catastrophic cancellation'' in which pairwise differences are much smaller than the precision of standard floating-point arithmetic.\footnote{
    Although IEEE 64-bit double-precision floating point numbers can be as small as $10^{-308}$,
    taking the difference of two floats close to 1, 
    since they use about 15 digits of precision, can only represent differences between such numbers as small as $10^{-15},$
    e.g.,
    $1.000000000000000005 - 
     1.000000000000000004$
    is equal to
    $0.000000000000000001,$
    yet the above expression evaluated with double-precision floats evaluates to $0.0$ since both the float literals $1.000000000000000005$
    and
    $1.000000000000000004$
    evaluate to $1.0$.
    So in evaluating the terms of the sums such as in \Cref{lem:mean-hypo-formula},
    we must take care that the individual terms are evaluated with sufficient precision that two different opposite-sign terms with very close absolute values are not rounded to have identical absolute values.
}
Thus, when computing these sums, 
it is necessary to use arbitrary precision libraries such as Python's \texttt{mpmath} package~\cite{mpmath}
to avoid such catastrophic cancellation errors.

\begin{lem}
\label{lem:mean-hypo-formula}
For all $n,k,o,g \in \N^+$,
where $o \le n$,
\[
    \sum_{i=0}^{k-1}
    \frac{1}{\binom{n+ig}{o}}
=
    \frac{o}{g} \cdot
    \sum_{m=0}^{o-1}
    (-1)^m
    \cdot
    \binom{o-1}{m}
    \cdot
\left[ 
    \psi_0\left(
        k + \frac{n - (o-1-m)}{g}
    \right)
-
    \psi_0\left(
        \frac{n - (o-1-m)}{g}
    \right)
\right].
\]
\end{lem}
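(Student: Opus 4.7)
The plan is to combine \Cref{lem:inverse-binomial} with \Cref{lem:diff-digamma-sum} by first decomposing each $1/\binom{n+ig}{o}$ term into a small (size $o$) sum over linear factors, then swapping the order of summation so that the inner sum (over $i$) has precisely the form handled by \Cref{lem:diff-digamma-sum}.

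First, I would apply \Cref{lem:inverse-binomial} with $B = n+ig$ (valid because $o \le n \le n+ig$), giving
\[
    \frac{1}{\binom{n+ig}{o}}
    = o \sum_{m=0}^{o-1} \binom{o-1}{m} \frac{(-1)^m}{(n+ig) - (o-1-m)}.
\]
Substituting this into the left-hand side of the lemma and swapping the order of summation (legal since the sums are finite) yields
\[
    \sum_{i=0}^{k-1} \frac{1}{\binom{n+ig}{o}}
    = o \sum_{m=0}^{o-1} (-1)^m \binom{o-1}{m} \sum_{i=0}^{k-1} \frac{1}{A_m + ig},
\]
where I set $A_m := n - (o-1-m)$ so that $(n+ig) - (o-1-m) = A_m + ig$.

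Next I would pull out a factor of $1/g$ from the inner sum and invoke \Cref{lem:diff-digamma-sum} with $A = A_m$:
\[
    \sum_{i=0}^{k-1} \frac{1}{A_m + ig}
    = \frac{1}{g} \sum_{i=0}^{k-1} \frac{g}{A_m + ig}
    = \frac{1}{g}\!\left[\psi_0\!\left(k + \frac{A_m}{g}\right) - \psi_0\!\left(\frac{A_m}{g}\right)\right].
\]
Plugging this back in and collecting the leading $o/g$ factor gives exactly the claimed identity.

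The only genuine obstacle is verifying the hypothesis of \Cref{lem:diff-digamma-sum}, which requires $A_m \in \N^+$, i.e., $n - (o-1-m) \ge 1$ for every $m \in \{0,1,\dots,o-1\}$. The worst case is $m=0$, giving $A_0 = n - (o-1) \ge 1$, which follows from the assumption $n \ge o$ in the lemma's hypotheses (and is the natural regime, since otherwise $\binom{n}{o}=0$ and the original sum is undefined). The rest of the argument is routine algebraic bookkeeping; no clever manipulation beyond the two cited identities is needed.
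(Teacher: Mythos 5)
Your proposal is correct and follows essentially the same route as the paper's proof: apply \Cref{lem:inverse-binomial} with $B = n+ig$, swap the order of summation, factor out $1/g$, and invoke \Cref{lem:diff-digamma-sum} with $A = n-(o-1-m)$. Your explicit check that $A_m = n-(o-1-m) \ge 1$ (using $n \ge o$) is a small point of added care that the paper leaves implicit.
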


\begin{proof}
    Letting $B = n+ig$ in \Cref{lem:inverse-binomial},
    we have
    \begin{align*}
    \sum_{i=0}^{k-1}
    \frac{1}
    {\binom{n+ig}{o}}
&=
    \sum_{i=0}^{k-1}
    o
    \cdot
    \sum_{m=0}^{o-1}
    \binom{o-1}{m}
    \frac{(-1)^{m}}{n+ig-(o-1-m)}
\qquad \text{by \Cref{lem:inverse-binomial}}
\\&=
    o
    \cdot
    \sum_{m=0}^{o-1}
    (-1)^m
    \cdot
    \binom{o-1}{m}
    \cdot
    \sum_{i=0}^{k-1}
    \frac{1}{n-(o-1-m)+ig}
\\&=
    \frac{o}{g}
    \cdot
    \sum_{m=0}^{o-1}
    (-1)^m
    \cdot
    \binom{o-1}{m}
    \cdot
    \sum_{i=0}^{k-1}
    \frac{g}{n-(o-1-m)+ig}
\\&=
    \frac{o}{g}
    \cdot
    \sum_{m=0}^{o-1}
    (-1)^m
    \cdot
    \binom{o-1}{m}
    \cdot
\left[ 
    \psi_0\left(
        k + \frac{n - (o-1-m)}{g}
    \right)
-
    \psi_0\left(
        \frac{n - (o-1-m)}{g}
    \right)
\right],
\end{align*}
where the last equality follows from \Cref{lem:diff-digamma-sum} with $A = n - (o-1-m)$.
\end{proof}

To analyze the time complexity of computing the right-hand side of \Cref{lem:mean-hypo-formula},
observe that the binomial coefficients can be computed iteratively while evaluating the sum,
via the identity
$\binom{o-1}{m} = \binom{o-1}{m-1} \cdot \frac{o-m}{m}$,
so that the entire sum requires time $O(o)$ to compute,
since $\psi_0$ can be computed in time $O(1)$.

Similarly to the mean,
by linearity of variance when the random variables are independent,
the variance of a hypoexponential distribution,
defined by rates 
$\binom{n}{o}$,
$\binom{n+g}{o}$,
$\binom{n+2g}{o}$,
$\dots$,
$\binom{n+(k-1)g}{o}$,
is 
$\sum_{i=0}^{k-1} 
\frac{1}{\binom{n+ig}{o}^2}$.
The next lemma, 
similarly to \Cref{lem:mean-hypo-formula},
allows the variance to be computed in time $O(o^2)$.

\begin{lem}
\label{lem:variance-hypo-formula}
    For all $n,k,o,g \in \N^+$,
    where $o \le n$,
\begin{align*}
    \sum_{i=0}^{k-1} \frac{1}{\binom{n+ig}{o}^2}
=&
\notag
    \frac{o^2}{g^2}
    \sum_{m=0}^{o-1}
    \binom{o-1}{m}^2
    \left[
    \psi_1\left(
        \frac{n-(o-1-m)}{g}
    \right)
    -
    \psi_1\left(
        k + \frac{n-(o-1-m)}{g}
    \right)
    \right]
\\&+
    \frac{2 o^2}{g}
    \sum_{m=0}^{o-1}
    \sum_{j=m+1}^{o-1}
    \frac{(-1)^{m+j}}{m-j}
    \binom{o-1}{m}
    \binom{o-1}{j}
\\&
\qquad\qquad\qquad\qquad
\cdot
\notag
    \left[
        \psi_0\left(
            \frac{n-(o-1-m)}{g}
        \right)
        -
        \psi_0\left(
            k + \frac{n-(o-1-m)}{g}
        \right)
    \right.
\\&\qquad\qquad\qquad\qquad \ \ \ -
    \left.
        \psi_0\left(
            \frac{n-(o-1-j)}{g}
        \right)
        +
        \psi_0\left(
            k + \frac{n-(o-1-j)}{g}
        \right)
    \right].
\end{align*}
\end{lem}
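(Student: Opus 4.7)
The plan is to expand $\frac{1}{\binom{n+ig}{o}^2}$ as a double sum by squaring the partial-fraction identity of \Cref{lem:inverse-binomial}, then swap the summation order so the inner sum (over $i$) can be recognized as one of the telescoping $\psi_0$/$\psi_1$ identities proved in \Cref{sec:polygamma-identities}. Concretely, applying \Cref{lem:inverse-binomial} with $B = n+ig$ and squaring gives
\[
    \frac{1}{\binom{n+ig}{o}^2}
    =
    o^2 \sum_{m=0}^{o-1} \sum_{j=0}^{o-1}
    (-1)^{m+j} \binom{o-1}{m} \binom{o-1}{j}
    \frac{1}{(A_m + ig)(A_j + ig)},
\]
where $A_m := n - (o-1-m)$. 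After summing over $i$ from $0$ to $k-1$ and pulling the sum inside, the task reduces to evaluating $\sum_{i=0}^{k-1} \frac{1}{(A_m+ig)(A_j+ig)}$ for each pair $(m,j)$.

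Next I would split this double sum over $(m,j)$ into diagonal terms ($m=j$) and off-diagonal terms ($m \ne j$). For the diagonal, the inner factor becomes $\frac{1}{(A_m+ig)^2}$, and after pulling out $\frac{1}{g^2}$ to match the shape $\bigl(\frac{g}{A_m+ig}\bigr)^2$, \Cref{lem:diff-trigamma-sum} (with $A = A_m$) converts this into $\frac{1}{g^2}\bigl[\psi_1(A_m/g) - \psi_1(k + A_m/g)\bigr]$; the sign factor $(-1)^{m+m}=1$ and the coefficient $\binom{o-1}{m}^2$ together produce the first line of the claimed formula. For the off-diagonal, apply \Cref{lem:diff-digamma-sum-two-factors-denominator} with $(A,B) = (A_m, A_j)$, and use that $A_m - A_j = m - j$ so $\tfrac{1}{A_m - A_j} = \tfrac{1}{m-j}$.

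The final step is a symmetrization: since the summand in $(m,j)$ for $m \ne j$ is invariant under swapping $m \leftrightarrow j$ (the sign from $\tfrac{1}{m-j}$ flips, but so does the bracketed difference of $\psi_0$'s, so both flips cancel and the $(-1)^{m+j}$ factor is symmetric), the full off-diagonal contribution equals twice the sum over $m < j$, which recovers the factor of $2$ and the index range $j \ge m+1$ in the claimed formula. Collecting everything yields the stated identity.

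The main obstacle is purely bookkeeping: tracking the sign and the direction $A_m - A_j$ vs $A_j - A_m$ in the symmetrization step, and being careful that the restriction $A_m \ne A_j$ required by \Cref{lem:diff-digamma-sum-two-factors-denominator} is automatic (it holds whenever $m \ne j$, since $A_m - A_j = m - j$). There is no analytic subtlety — once the two polygamma identities from \Cref{sec:polygamma-identities} are in hand, the proof is essentially a careful expansion-and-collect argument.
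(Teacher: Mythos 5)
Your proposal is correct and follows essentially the same route as the paper's proof: square the partial-fraction expansion from \Cref{lem:inverse-binomial}, handle the diagonal terms with \Cref{lem:diff-trigamma-sum} and the cross terms with \Cref{lem:diff-digamma-sum-two-factors-denominator}, using $A_m - A_j = m-j$. The only cosmetic difference is that the paper invokes the identity $(\sum_m x_m)^2 = \sum_m x_m^2 + 2\sum_{m<j} x_m x_j$ up front, whereas you write the full double sum and symmetrize afterward; these are the same argument.
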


\begin{proof}
    Letting $B = n+ig$ in \Cref{lem:inverse-binomial},
    we have
    \begin{equation}
    \label{eq:var-hypo-first-eq}
    \sum_{i=0}^{k-1}
    \frac{1}
    {\binom{n+ig}{o}^2}
    =
    \sum_{i=0}^{k-1}
    o^2
    \cdot
    \left(
    \sum_{m=0}^{o-1}
    \binom{o-1}{m}
    \frac{(-1)^{m}}
    {n+ig-(o-1-m)}
    \right)^2.
    \end{equation}
    Recall the identity\footnote{
        For example,
        $(a+b+c)^2
        =
        (a^2 + ab + ac)
        + (ab + b^2 + bc)
        + (ac + bc + c^2)
        = 
        (a^2 + b^2 + c^2)
        + 2(ab + ac + bc).
        $
    }
    \begin{equation}
    \label{eq:square-of-sums}
        \left(
        \sum_{m=0}^{o-1}
        x_m
        \right)^2
    =
        \sum_{m=0}^{o-1}
        x_m^2
        +
        2 \cdot
        \sum_{m=0}^{o-1}
        \sum_{j=m+1}^{o-1}
        x_m \cdot x_j.
    \end{equation}
    By \Cref{eq:square-of-sums}, 
    we can write the squared sum in 
    \Cref{eq:var-hypo-first-eq} as
\begin{align*}
&
    \left(
    \sum_{m=0}^{o-1}
    \binom{o-1}{m}
    \frac{(-1)^{m}}{n+ig-(o-1-m)}
    \right)^2
\\=&
    \sum_{m=0}^{o-1}
        \left[
        \binom{o-1}{m}
        \frac{(-1)^{m}}{n+ig-(o-1-m)}
        \right]^2
\\&+
    2 \cdot
    \sum_{m=0}^{o-1}
    \sum_{j=m+1}^{o-1}
    \binom{o-1}{m}
    \frac{(-1)^{m}}{n+ig-(o-1-m)}
    \binom{o-1}{j}
    \frac{(-1)^{j}}{n+ig-(o-1-j)}
\\=&
    \sum_{m=0}^{o-1}
    \binom{o-1}{m}^2
    \left(
    \frac{1}{n-(o-1-m)+ig}
    \right)^2
\qquad \text{$((-1)^m)^2 = 1$ for all $m \in \N^+$}
\\&+
    2 
    \sum_{m=0}^{o-1}
    \sum_{j=m+1}^{o-1}
    \binom{o-1}{m}
    \binom{o-1}{j}
    \frac{(-1)^{m+j}}
    {(n-(o-1-m)+ig) (n-(o-1-j)+ig)}.
\end{align*}
    Substituting this back into the right side of
    \Cref{eq:var-hypo-first-eq},
\begin{align*}
&
    \sum_{i=0}^{k-1}
    o^2
    \left(
    \sum_{m=0}^{o-1}
    \binom{o-1}{m}
    \frac{(-1)^{m}}{n+ig-(o-1-m)}
    \right)^2
\\=&
    \sum_{i=0}^{k-1}
    o^2 
    \sum_{m=0}^{o-1}
    \binom{o-1}{m}^2
    \left(
    \frac{1}{n-(o-1-m)+ig}
    \right)^2
\\&+
    \sum_{i=0}^{k-1}
    2 o^2
    \sum_{m=0}^{o-1}
    \sum_{j=m+1}^{o-1}
    \binom{o-1}{m}
    \binom{o-1}{j}
    \frac{(-1)^{m+j}}
    {(n-(o-1-m)+ig) 
     (n-(o-1-j)+ig)}
\\=&
    o^2 
    \sum_{m=0}^{o-1}
    \binom{o-1}{m}^2
    \cdot
    \sum_{i=0}^{k-1}
    \left(
    \frac{1}{n-(o-1-m)+ig}
    \right)^2
\\&+
    2 o^2
    \sum_{m=0}^{o-1}
    \sum_{j=m+1}^{o-1}
    (-1)^{m+j}
    \binom{o-1}{m}
    \binom{o-1}{j}
    \sum_{i=0}^{k-1}
    \frac{1}
    {(n-(o-1-m)+ig) 
     (n-(o-1-j)+ig)}.
\end{align*}
    The first summation can be written
\begin{align}
&
\notag
    o^2 
    \sum_{m=0}^{o-1}
    \binom{o-1}{m}^2
    \cdot
    \sum_{i=0}^{k-1}
    \left(
    \frac{1}{n-(o-1-m)+ig}
    \right)^2
\\=&
\notag
    \frac{o^2}{g^2}
    \sum_{m=0}^{o-1}
    \binom{o-1}{m}^2
    \cdot
    \sum_{i=0}^{k-1}
    \left(
    \frac{g}{n-(o-1-m)+ig}
    \right)^2
\\=&
    \frac{o^2}{g^2}
    \sum_{m=0}^{o-1}
    \binom{o-1}{m}^2
    \left[
    \psi_1\left(
        \frac{n-(o-1-m)}{g}
    \right)
    -
    \psi_1\left(
        k + \frac{n-(o-1-m)}{g}
    \right)
    \right],
\label{eq:lem:variance-hypo-formula-first-sum}
\end{align}
applying \Cref{lem:diff-trigamma-sum} with $A = n-(o-1-m)$ for the last equality.
Applying \Cref{lem:diff-digamma-sum-two-factors-denominator}
with $A=n-(o-1-m)$ and $B=n-(o-1-j)$,
the other terms can be written
\begin{align}
&
\notag
    2 o^2
    \sum_{m=0}^{o-1}
    \sum_{j=m+1}^{o-1}
    (-1)^{m+j}
    \binom{o-1}{m}
    \binom{o-1}{j}
    \sum_{i=0}^{k-1}
    \frac{1}
    {(n-(o-1-m)+ig) 
     (n-(o-1-j)+ig)}
\\=&
\notag
    2 o^2
    \sum_{m=0}^{o-1}
    \sum_{j=m+1}^{o-1}
    (-1)^{m+j}
    \binom{o-1}{m}
    \binom{o-1}{j}
    \frac{1}
    {g [n-(o-1-m) - (n-(o-1-j))]}
\\&\qquad\qquad\qquad\cdot
\notag
    \left[
        \psi_0\left(
            \frac{n-(o-1-m)}{g}
        \right)
        -
        \psi_0\left(
            k + \frac{n-(o-1-m)}{g}
        \right)
    \right.
\\&\qquad\qquad\qquad\ \ \ -
\notag
    \left.
        \psi_0\left(
            \frac{n-(o-1-j)}{g}
        \right)
        +
        \psi_0\left(
            k + \frac{n-(o-1-j)}{g}
        \right)
    \right]
\\=&
\notag
    \frac{2 o^2}{g}
    \sum_{m=0}^{o-1}
    \sum_{j=m+1}^{o-1}
    \frac{(-1)^{m+j}}{m-j}
    \binom{o-1}{m}
    \binom{o-1}{j}
\\&\qquad\qquad\qquad\cdot
\notag
    \left[
        \psi_0\left(
            \frac{n-(o-1-m)}{g}
        \right)
        -
        \psi_0\left(
            k + \frac{n-(o-1-m)}{g}
        \right)
    \right.
\\&\qquad\qquad\qquad\ \ \ -
    \left.
        \psi_0\left(
            \frac{n-(o-1-j)}{g}
        \right)
        +
        \psi_0\left(
            k + \frac{n-(o-1-j)}{g}
        \right)
    \right].
\label{eq:lem:variance-hypo-formula-second-sum}
\end{align}
    The lemma follows by adding
    \eqref{eq:lem:variance-hypo-formula-first-sum}
    and \eqref{eq:lem:variance-hypo-formula-second-sum}.
\end{proof}

\opt{full}{
\subsubsection{Approximating mean and variance of hypoexponentials}
}
\opt{sub}{
\subsection{Approximating mean and variance of hypoexponentials}
}
The identities of \Cref{lem:mean-hypo-formula,lem:variance-hypo-formula}
give a way to compute the mean and variance of our hypoexponential distribution,
defined by a sum with $k$ terms,
in time $o(k)$.
Nevertheless, since empirical testing has shown that these identities require computing the digamma and trigamma functions at higher floating-point precision than standard 64-bit double precision,
there are significant constant factors associated with this approach.
In practice we actually approximate the mean and variance in the following much faster way.
Recall the mean of a hypoexponential defined by rates $\binom{n}{o}, \binom{n+g}{o}, \binom{n+2g}{o}, \dots, \binom{n+(k-1)g}{o}$ is $\sum_{i=0}^{k-1} \frac{1}{\binom{n+ig}{o}}$.

Define the \emph{relative error} between numbers $a$ and $b$ to be $\frac{|a-b|}{\min(|a|,|b|)}$.
Consider the first term $t_1 = \frac{1}{\binom{n}{o}}$ and last term $t_k = \frac{1}{\binom{n+(k-1)g}{o}}$ of the sum.
We compute the relative error between $t_1$ and $t_k$, and if it is is less than $0.1$,\footnote{
    Note that the larger is $n$,
    since $k \approx \sqrt{n}$,
    we expect the relative error between $t_1$ and $t_k$ to be quite small,
    since it converges to 0 as $n \to \infty$.
    For example,
    if $n=10^6, k = \sqrt{n} = 10^3, o=2, g=1$,
    then the relative error between $t_1$ and $t_k$ is $\approx 0.002$.
}
we approximate the sum by using geometric mean of these terms:
\[
\sum_{i=0}^{k-1} \frac{1}{\binom{n+ig}{o}}
\approx
k \cdot \sqrt{\frac{1}{\binom{n}{o}} \cdot \frac{1}{\binom{n+(k-1)g}{o}}}.
\]

The following lemma justifies this approximation, 
establishing a formal relationship between the relative error of the first and last terms $t_1,t_k$ and the relative error between the actual sum and the approximation given using the geometric mean described above.

\begin{lem}
\label{lem:relative-error-geometric-mean-for-mean-of-hypo}
    Let $\delta > 0$.
    If the relative error between 
    $t_1 = 1/{\binom{n}{o}}$ and
    $t_k = 1/{\binom{n+(k-1)g}{o}}$ is $\delta$,
    then the relative error between 
    $\sum_{i=0}^{k-1} \frac{1}{\binom{n+ig}{o}}$ and 
    $k \sqrt{t_1 t_k}$
    is at most 
    $\frac{\delta}{2} + \frac{\delta^2}{8}.$
\end{lem}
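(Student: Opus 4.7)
The plan is to exploit the elementary fact that every term $t_i := 1/\binom{n+ig}{o}$ in the sum lies between $t_1$ and $t_k$, and then compare the extreme possible values of the sum against the geometric-mean approximation. First I would observe that, since $g \ge 0$ (as assumed throughout \Cref{sec:crn-new-defns}), the binomial coefficients $\binom{n+ig}{o}$ are nondecreasing in $i$, so $t_i$ is nonincreasing, giving $t_k \le t_i \le t_1$ for every $0 \le i \le k-1$. Setting $S := \sum_{i=0}^{k-1} t_i$ and $A := k\sqrt{t_1 t_k}$, the hypothesis on relative error rewrites as $t_1 = (1+\delta)\,t_k$, so $S \in [k t_k,\ k(1+\delta)t_k]$ and $A = k t_k \sqrt{1+\delta}$.

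Next, the relative error of $S$ with respect to $A$ is a piecewise-monotonic function of $S$, and I would maximize each piece separately. In the case $S \ge A$ the relative error equals $(S-A)/A$, which is increasing in $S$ and hence maximized at $S = k(1+\delta)t_k$, giving $(1+\delta)/\sqrt{1+\delta} - 1 = \sqrt{1+\delta} - 1$. In the case $S < A$ the relative error equals $(A-S)/S$, which is decreasing in $S$ and hence maximized at $S = k t_k$, again giving $\sqrt{1+\delta} - 1$. So in both cases the relative error is at most $\sqrt{1+\delta} - 1$, regardless of where in $[kt_k,k(1+\delta)t_k]$ the true sum $S$ actually lies.

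Finally I would verify the elementary inequality $\sqrt{1+\delta} - 1 \le \delta/2 + \delta^2/8$ by squaring both sides of $\sqrt{1+\delta} \le 1 + \delta/2 + \delta^2/8$: expanding gives $(1+\delta/2+\delta^2/8)^2 = 1 + \delta + \delta^2/2 + \delta^3/8 + \delta^4/64 \ge 1+\delta$, so taking square roots (both sides positive) yields the desired bound and completes the proof.

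The only ``hard part'' is psychological: one might worry that bounding $|S - A|$ requires a precise estimate of $S$, e.g., via convexity of $i \mapsto 1/\binom{n+ig}{o}$ or an Euler--Maclaurin-style approximation. The observation that collapses the argument is that we only need the crude enclosure $S \in [kt_k, k(1+\delta)t_k]$ together with the fact that the relative-error function is monotone on each side of $A$, so both worst cases yield the same clean expression $\sqrt{1+\delta}-1$, which is then bounded by a one-line Taylor-style inequality.
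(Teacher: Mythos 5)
Your proof is correct, and it takes a genuinely different and more elementary route than the paper's. The paper decomposes the error via the triangle inequality through the geometric mean $G$ of \emph{all} $k$ terms: it bounds the gap between the arithmetic mean $A$ and $G$ using the Cartwright--Field refinement of AM--GM~\cite{cartwright1978refinement} together with a worst-case variance estimate (contributing the $t_k\delta^2/8$ term), and separately bounds $|G - \sqrt{t_1 t_k}|$ by $t_k(\sqrt{1+\delta}-1) \le t_k\delta/2$. You instead use only the crude enclosure $S \in [kt_k, k(1+\delta)t_k]$ and the piecewise monotonicity of the relative-error function, obtaining the bound $\sqrt{1+\delta}-1$ directly; since $\sqrt{1+\delta} \le 1+\delta/2$, this is in fact \emph{stronger} than the stated $\delta/2 + \delta^2/8$, showing the second-order term in the lemma is not needed. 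Your case split on $S \ge A$ versus $S < A$ also handles the $\min(|a|,|b|)$ in the paper's definition of relative error more carefully than the paper's own final step, which divides by $kG_{1,k}$ without checking it is the smaller quantity. What the paper's heavier machinery buys is a decomposition that isolates where the first-order error comes from (the gap between $G$ and $\sqrt{t_1 t_k}$, not the AM--GM gap), which is the structure the authors point to when conjecturing an $O(\delta^2)$ bound; your argument, relying only on the interval enclosure, cannot be sharpened past $\Theta(\delta)$ without additional information about how the terms are distributed. For the lemma as stated, however, your proof is complete and simpler.
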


\begin{proof}
Let the terms be $t_i = 1 / \binom{n+(i-1)g}{o}$ for $i \in \{1,\dots,k\}$,
so the sum is $\sum_{i=1}^k t_i$.
Note that $t_1 > t_i > t_k$ for all $1 < i < k$;
in the remainder of the proof, we use only this fact.

The relative error between $t_1$ and $t_k$ is then $\delta = \frac{t_1-t_k}{t_k} = \frac{t_1}{t_k} - 1$,
implying $t_1 = t_k(1 + \delta)$.
Let $A = \frac{1}{k} \sum_{i=1}^k t_i$ be the arithmetic mean of the terms $t_1,\dots,t_k$ (note $kA$ is exactly their sum),
and let $G = \left( \prod_{i=1}^k t_i \right)^{1/k}$ be the geometric mean of all $k$ terms.
Let $G_{1,k} = \sqrt{t_1 t_k},$ the geometric mean of the first and last term only.
We first bound $|A-G|$, then $|G - k G_{1,k}|$.

From~\cite{cartwright1978refinement}, with each $p_i = 1/k$ (or $1/n$ as stated in~\cite{cartwright1978refinement}; $n$ there is $k$ in this proof) gives
\[
A - G \leq \frac{1}{2 t_k} \sum_{i=1}^k \frac{1}{k} \left( t_i - \sum_{j=1}^k \frac{1}{k} t_j \right)^2.
\]
Note that the outer sum on the right side is the sample variance $\Var{t_1,t_2,\dots,t_k}$ of the terms,
which we denote as $\Var{t}$, so we can write
\begin{equation}
\label{eq:am-gm-eq1}
A-G \le \frac{\Var{t}}{2 t_k}.
\end{equation}
In the worst case,
subject to only this constraint,
the variance of the terms is maximized when half the terms are $t_k$ and the other half are $t_1$.
In that case, writing $\mu_{1,k} = (t_1 + t_k) / 2$ for the (arithmetic) mean of those two terms, 
the variance would be
\[
\frac{1}{k} \left[ 
    \frac{k}{2} (t_i - \mu_{1,k})^2 + 
    \frac{k}{2} (t_k - \mu_{1,k})^2 
\right]
= \frac{1}{4} (t_1 - t_k)^2,
\]
which follows by substituting $\mu_{1,k} = (t_1 + t_k) / 2$ and algebraic simplification.
Thus $\Var{t} \leq \frac{1}{4} (t_1 - t_k)^2.$
Since $t_1 = t_k(1+\delta)$,
we have
$
    t_1 - t_k
    =
    t_k (1+\delta) - t_k
    =
    t_k \delta,
$
Thus
\[
\Var{t} 
\le \frac{1}{4} (t_1 - t_k)^2
= \frac{t_k^2 \delta^2}{4}.
\]
Substituting in \eqref{eq:am-gm-eq1} gives
$A - G \leq \frac{t_k\delta^2}{8}$.

Now we bound the geometric mean $G$ of all terms using the geometric mean $G_{1,k} = \sqrt{t_1 t_k}$ of the first and last terms.
Recall $t_1 = t_k(1 + \delta)$,
so
\[
G_{1,k} 
= \sqrt{t_1 t_k} 
= \sqrt{t_k^2 (1+\delta)}
= t_k \sqrt{1+\delta}.
\]
Note that $t_k \le G \le t_1 = t_k(1+\delta)$.
If $G=t_k$,
then
\[
|G-G_{1,k}|
= G_{1,k} - G
= t_k \sqrt{1+\delta} - t_k
= t_k (\sqrt{1+\delta} - 1)
\]
and if $G = t_k(1+\delta)$, then
\[
|G-G_{1,k}|
= G - G_{1,k}
= t_k \sqrt{1+\delta} - t_k
= t_k(1+\delta) - t_k \sqrt{1+\delta}
= \frac{t_k (\sqrt{1+\delta} - 1)}{\sqrt{1+\delta}}.
\]
Since $\sqrt{1+\delta} > 1$ for all $\delta>0,$
this means the first case is larger, so in the worst case,
\[
|G - G_{1,k}|
\le t_k (\sqrt{1+\delta} - 1)
\]
Note that for all $\delta > 0$,
we have $\sqrt{1+\delta} < 1 + \frac{\delta}{2}$.
Combining the above bounds on $A-G$ and $|G-G_{1,k}|$ to bound $|A - G_{1,k}|$ by the triangle inequality:
\[
|A - G_{1,k}|
\leq |A - G| + |G - G_{1,k}|
\le \frac{t_k \delta^2}{8} + t_k \left( \sqrt{1+\delta} - 1 \right)
=   \frac{t_k \delta^2}{8} + t_k \left(1 + \frac{\delta}{2} - 1 \right)
=   t_k \left( \frac{\delta}{2} + \frac{\delta^2}{8} \right).
\]
Recall the sum $S = \sum_{i=1}^k t_i = k A$.
This implies our approximation by $k G_{1,k}$ has absolute error
\[
|S - k G_{1,k}| \le k t_k \left( \frac{\delta}{2} + \frac{\delta^2}{8} \right).
\]
Recall $\sqrt{1+\delta} > 1$ for all $\delta > 0$.
So the absolute error above implies relative error
\begin{align*}
\frac{|S - k G_{1,k}|}{k G_{1,k}}
&\le
\frac{k t_k \left( \frac{\delta}{2} + \frac{\delta^2}{8} \right)}{k \sqrt{t_1 t_k}}
=
\frac{t_k \left( \frac{\delta}{2} + \frac{\delta^2}{8} \right)}{\sqrt{t_k^2 (1+\delta)}}
=
\frac{ \frac{\delta}{2} + \frac{\delta^2}{8}}{\sqrt{1+\delta}}
<
\frac{\delta}{2} + \frac{\delta^2}{8}.
\qquad\qquad\qquad
\qedhere
\end{align*}
\end{proof}

We in fact conjecture that the relative error of the approximation of the sum by $k G_{1,k}$ is $O(\delta^2)$ (better than merely $O(\delta)$ as in \Cref{lem:relative-error-geometric-mean-for-mean-of-hypo}), which appears to be the case empirically.
For example, when the relative error of $t_1$ and $t_k$ is $0.1$, in practice the relative error of the approximation and the true mean appears to be $< 0.01$.
However, we have not been able to prove this.
It would likely require using more information about the distribution of the terms $t_i$;
under our worst case assumption in the proof that $G$ could be as small as $t_k$
(i.e, assuming all terms in the sum are $t_k$)
or as large as $t_1$
(i.e, assuming all terms in the sum are $t_1$),
the bound proven seems asymptotically tight without additional constraints on the terms.

We similarly approximate the variance $\sum_{i=1}^k t_i^2$ via $k \sqrt{t_1^2 t_k^2} = kt_1 t_k$,
though we omit a detailed analysis.
A similar proof to that of \Cref{lem:relative-error-geometric-mean-for-mean-of-hypo} shows that it is also bounded by the relative error between the first and last terms.
Since these terms are squared and significantly less than 1,
the relative error between $t_1^2$ and $t_k^2$ is even smaller than in the case of approximating the mean,
making this an even tighter approximation of the variance than that of the mean in \Cref{lem:relative-error-geometric-mean-for-mean-of-hypo}.
In practice, these approximations appear to result in sampling CRN trajectories indistinguishable from the Gillespie algorithm.

\end{toappendix}

\begin{toappendix}
\section{Simulation data}
\label{sec:data}

\subsection{Empirical sampling of data for some CRNs}

This section shows data for a few CRNs that are best suited for using our batching algorithm;
namely they have positive generativity (hence cannot be simulated by the original batching algorithm) but are order 2, hence have the fewest problems with passive reactions as described in \Cref{alg:null-reactions}.
More formally, CRNs with smaller values of $\schedulerslowdown{\cC}{v}{\vc}$ as defined in~\Cref{def:efficiency-condition}.

\begin{figure}
\centering
\includegraphics
[width=0.6\textwidth]{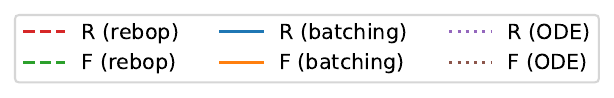}
\\
\begin{subfigure}{0.49\textwidth}
    \centering
    \includegraphics[width=\textwidth]{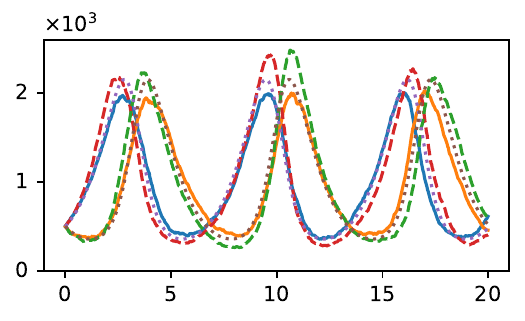}
    \caption{initial $n=10^3$}
    \label{fig:lotka_volterra_counts_time20_n1e3}
\end{subfigure}
\hfill
\begin{subfigure}{0.49\textwidth}
    \centering
    \includegraphics[width=\textwidth]{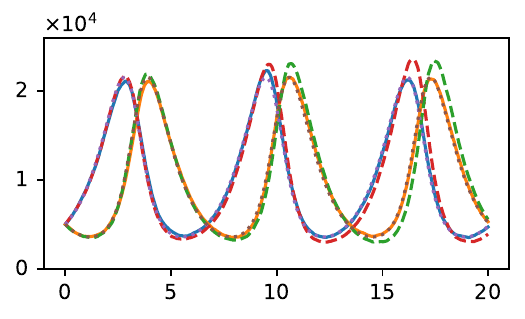}
    \caption{initial $n=10^4$}
    \label{fig:lotka_volterra_counts_time20_n1e4}
\end{subfigure}
\begin{subfigure}{0.49\textwidth}
    \centering
    \includegraphics[width=\textwidth]{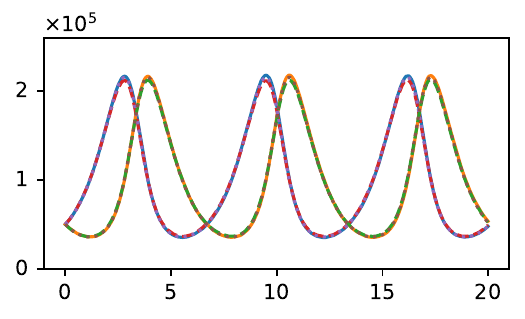}
    \caption{initial $n=10^5$}
    \label{fig:lotka_volterra_counts_time20_n1e5}
\end{subfigure}
\hfill
\begin{subfigure}{0.49\textwidth}
    \centering
    \includegraphics[width=\textwidth]{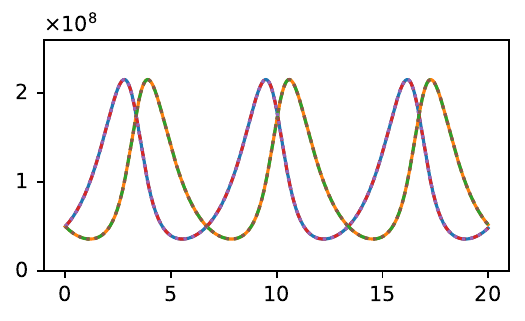}
    \caption{initial $n=10^8$}
    \label{fig:lotka_volterra_counts_time20_n1e8}
\end{subfigure}
\caption{
Plots of counts vs.~time for the Lotka-Volterra oscillator CRN,
for initial population size $n \in \{10^3,10^4,10^5,10^8\}$
with half predator,
half prey,
using both rebop (Gillespie algorithm) and our batching algorithm implementation. Stochastic effects make the plots behave differently for small $n$,
as well as differently from a deterministic ODE approximation,
also plotted.
As $n$ increases, stochastic noise decreases, and both stochastic simulators generate trajectories approaching the deterministic model.
At $n=10^8$ all three plots are nearly indistinguishable.}
\label{fig:lotka_volterra_counts}
\end{figure}

As in \Cref{sec:compare-running-time-batching-gillespie},
when we compare our batching algorithm to Gillespie,
we use rebop~\cite{rebop} as the fastest Gillespie algorithm implementation that we could find.\footnote{
    Rebop is \emph{far} faster than most Gillespie implementations: \url{https://github.com/Armavica/rebop\#performance}
}
We note that rebop does not directly support the concept of volume $v$; it implicitly assumes $v=1$.
Recall in \Cref{sec:prelim} that each reaction with $o$ total reactants has a term $1 / v^{o-1}$ in its rate.
So for proper comparison,
we manually adjust the rebop rate constants in this way
(i.e., divide order-$o$ reaction rate constants by $v^{o-1}$)
so that rebop's reactions have the same total rate as reactions in our batching algorithm.

We note that deviations in simulated trajectories do not imply that the batching algorithm is sampling from the wrong distribution.
Rather, in all cases this is simply stochasticity of the Gillespie model itself;
see \Cref{fig:dimerization_comparison} for further empirical justification of the claim that our algorithm samples from  the Gillespie distribution.

\Cref{fig:lotka_volterra_counts} shows simulations of the Lotka-Volterra chemical oscillator~\cite{lotka1910contribution,volterra1926variazioni}, a.k.a., predator-prey oscillator:
the reactions 
\begin{align*}
R   &\rxn^1 2R
\\
F   &\rxn^1 \emptyset
\\
F+R &\rxn^1 2F
\end{align*}
starting with equal $R$ (rabbits/prey) and $F$ (foxes/predators).
    Although devised originally by Lotka to study chemical reaction networks with autocatalytic reactions~\cite{lotka1910contribution},
    this model was independently devised by Volterra~\cite{volterra1926variazioni} to model actual animal populations that were observed empirically to oscillate over the timescale of years.
    The intuition is that rabbits always have plenty of plants to eat, so constantly reproduce 
    ($R \rxn 2R$),
    foxes die if they are hungry 
    ($F \rxn \emptyset$),
    but foxes reproduce if they eat a rabbit 
    ($F+R \rxn 2F$).

\subsection{Empirically the batching algorithm samples from the Gillespie distribution}

\begin{figure}
    \centering
    \begin{subfigure}{0.31\textwidth}
        \centering
        \includegraphics[width=\textwidth]{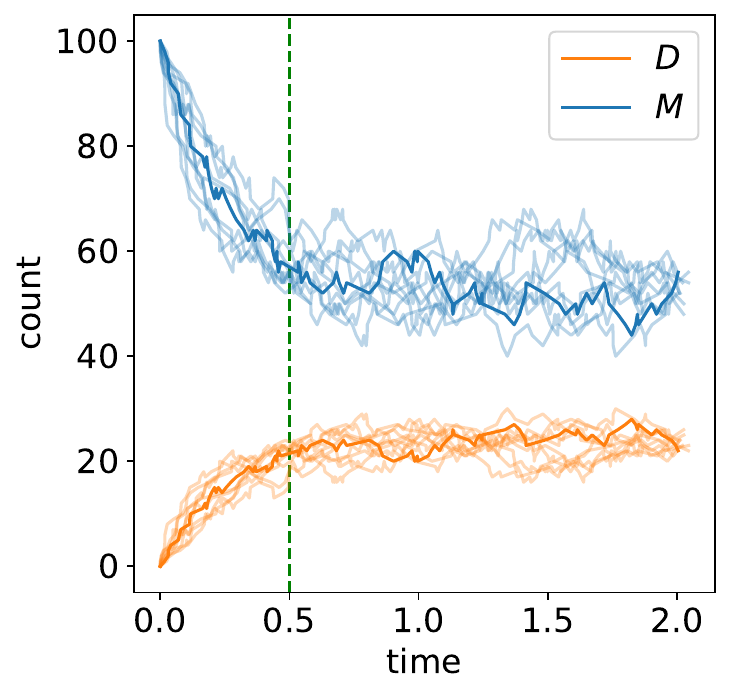}
        \caption{Plot of species counts vs.~time in ten example runs.}
        \label{fig:dimerization_counts_vs_time}
    \end{subfigure}
    \hfill
    \begin{subfigure}{0.68\textwidth}
        \centering
        \includegraphics[width=\textwidth]{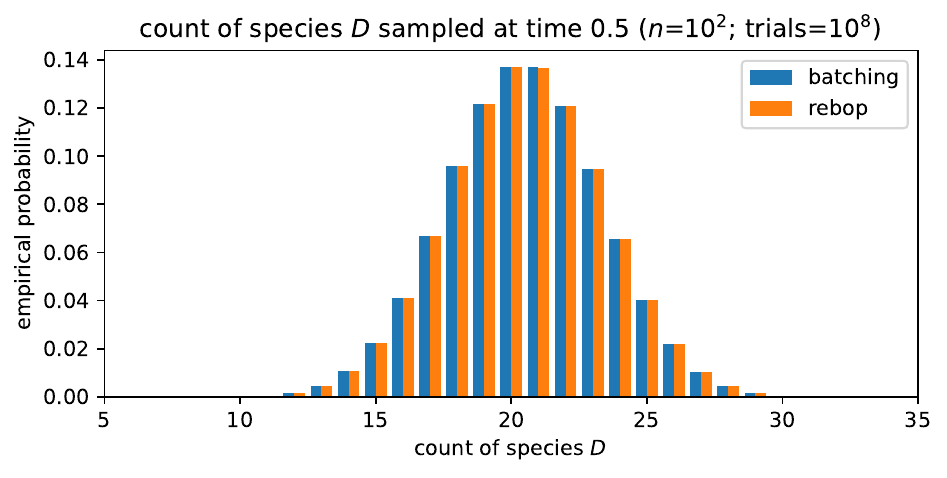}
        \caption{Comparing distribution of batching algorithm to Gillespie (rebop).}
        \label{fig:dimerization_comparison}
    \end{subfigure}
    \caption{
    \Cref{fig:dimerization_counts_vs_time} shows a plot of ten example runs of the reversible dimerization reaction $2M \revrxn_1^1 D$, starting with $\#M = 100$ and $\#D = 0$,
    showing counts vs time for each run.
    The CRN approaches an equilibrium with expected $25$ copies of $D$ and expected $50$ copies of $M$,
    typically reaching there after about 1 unit of time.
    (Though of course the counts bounce around even at equilibrium.)
    \Cref{fig:dimerization_comparison} shows a plot of empirical distributions from rebop and the batching algorithm,
    of the count of $D$ at time 0.5,
    just prior to (likely) convergence,
    where $\E{\#D} \approx 20.5.$
    Here, ``empirical probability'' means the total number of runs in which the given count was the count of $D$ at time $0.5$, divided by the total number of trials.
    }
    \label{fig:dimer_counts_vs_time_and_distribution_comparison}
\end{figure}

\Cref{fig:dimer_counts_vs_time_and_distribution_comparison} demonstrates empirically that the batching algorithm samples from the same distribution as the Gillespie algorithm.
We run both our batching algorithm and rebop on the CRN with the reversible dimerization reaction $2M \revrxn_1^1 D$,
where two monomers $M$ can join to form an unstable dimer $D$,
which can in turn split back into monomers.
For the sake of collecting many samples,
we choose a small initial population size $n = 100$:
start with $\#M = 100$.
Run until time $0.5$,
and measure the count of $D$,
for many trials.
We plot the empirical distribution (number of times $D$ had the given count, divided by the number of trials)
for both our batching algorithm and rebop.

We chose the reversible dimerization CRN $2M \revrxn D$ because,
unlike the Lotka-Volterra CRN
used for other examples,
the dimerization CRN cannot ``go extinct.''
The Lotka-Volterra CRN has the unfortunate property that,
if rabbits die out, then so do foxes eventually, leading to a so-called ``terminal'' configuration in which no reactions are applicable.
Many CRN simulators, such as rebop,
simply hang if asked to simulate to a certain time,
if they go terminal before that time is reached.

Conceptually, this is not difficult to deal with, but in practice it is easier simply to simulate a CRN with no reachable terminal configurations.
Thus we use the CRN $2M \revrxn D$, which cannot go terminal, nor can its molecular count increase without bound.
Nevertheless, 
as with Lotka-Volterra,
it represents an excellent test case for our new algorithm,
since it fundamentally requires a reaction with positive generativity,
the key challenge in adapting the batching algorithm of~\cite{berenbrink2020simulating} to more general CRNs.

\subsection{Batching algorithm has quadratic speedup over Gillespie algorithm}
\label{sec:compare-running-time-batching-gillespie}

\renewcommand{\floatpagefraction}{0.99}%
\begin{figure}
\centering
\includegraphics[width=0.45\linewidth]{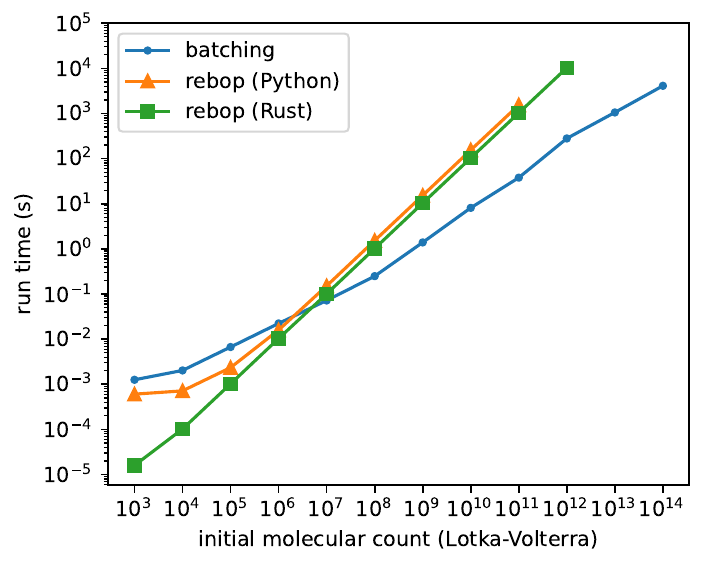}
\includegraphics[width=0.45\linewidth]{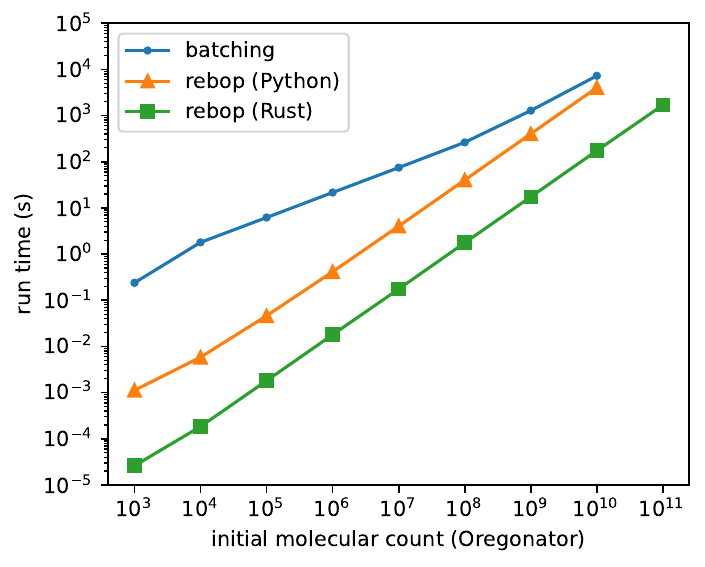}
\caption{
    Runtime scaling of our batching algorithm
    vs.~the Gillespie algorithm, as implemented by the rebop Python package~\cite{rebop},
    and the pure Rust rebop crate~\cite{rebop_rust}. 
    Both Lotka-Volterra reactions (left) and Oregonator reactions (right)
    are simulated on the given initial population size $n$ (with an even split between all species) until time 1.0,
    which corresponds to $\Theta(n)$ total reactions.
    As expected, in both cases rebop shows asymptotic scaling of $\Theta(n)$ time 
    (slope 1 on a log-log plot) 
    compared to scaling of $\Theta(\sqrt{n})$ for batching
    (slope 1/2). 
    However, batching is much more efficient for Lotka-Volterra.
    For exact reactions and rate constants used, see \cref{fig:lotka_volterra_plot_with_passive_reactions}.
}
\label{fig:lotka_volterra_scaling}
\end{figure}

\Cref{fig:lotka_volterra_scaling} shows runtime scaling of our batching algorithm vs.~the Gillespie algorithm, as implemented by the rebop package~\cite{rebop}, which is the fastest Gillespie implementation that we have found.
We tested against both 
the rebop Python package~\cite{rebop},
and the rebop Rust crate~\cite{rebop_rust}
where one can implement the CRN in a pure Rust program.
We show benchmarks for two CRNs, to show how our algorithm's performance relative to the Gillespie algorithm can depend heavily on the CRN being simulated.
For the exact CRNs being simulated, see \cref{fig:lotka_volterra_plot_with_passive_reactions}.

For Lotka-Volterra, batching becomes faster quickly,
and can reasonably simulate much larger population sizes than rebop.
For the Oregonator, 
while it is clear that batching still has an asymptotic advantage,
the additional overhead is sufficiently costly that rebop stays faster until population sizes are too large to cheaply simulate.
We discuss the likely cause of this in \cref{sec:data-passive-reactions}.

The slight nonlinearity at $n = 10^{12}$ for Lotka-Volterra simulation in \Cref{fig:lotka_volterra_scaling} arises due to an implementation issue:
Our way of sampling the length of a collision-free run involves several floating-point operations,
mainly to repeatedly compute the log-gamma function $\ln \Gamma(x)$.
On sufficiently large population sizes,
we found empirically that standard 64-bit double precision floats lacked the precision to compute these numbers.
In particular, our method of inversion sampling samples a float uniformly in the unit interval $u \in [0,1)$.
We compute $\ln u$.
In computing the CDF of the distribution in order to do inversion sampling,
this number $\ln u$ is compared to numbers computed as the difference of log-gamma of much larger numbers,
scaling with population size $n$.
When the log-gamma of those large numbers (recall $\ln \Gamma(n) \approx n \ln n$) is sufficiently large (in this case,
the imprecision causes the inversion sampling to be incorrect, 
because the differences between those large numbers are insufficiently precise to compare meaningfully to $\ln u$.
This issue is fundamental:
the binary search step of our inversion sampling requires us to compare large numbers that may differ by a small amount.

To correct for this,
we use Rust's experimental f128 type that has so-called ``quadruple'' precision,
double that of a standard double (called f64 in Rust).
However, because hardware and OS support for f128 is incomplete,
we had to implement our own slower software implementation of the natural log function and the log-gamma function on f128 values.
This is the source of the observed slowdown when $n \ge 10^{11}.$
In practice, even if we allow these errors through and just use the faster f64 implementation
(also shown in \Cref{fig:lotka_volterra_scaling}),
there does not appear to be any systematic bias that leads the sampled CRN trajectory to appear sufficiently different from the correct Gillespie distribution at many population sizes.
However,
we kept the slow f128 implementation as the standard to ensure correctness.
To avoid using our expensive natural log and log-gamma functions when not needed,
we start each binary search using f64 arithmetic,
and at each step of the search we check whether there is any chance that f128-level precision may change the result.
That is, we check if the potential floating point error (which can be easily calculated) is large enough to change the result of the comparison at that step of the search, 
in which case we switch to using f128.


\todo{DD: This paragraph is somewhat repetitive with early statements.}
More precisely,
all of this happens when we sample the length of a collision-free run.
The algorithm described in the proof of \Cref{lem:sample_coll_efficient} involves summing terms that are the output of the log-gamma function.
These outputs are always represented as f128 values,
which take roughly twice as long to add or subtract as f64 values, accounting for the unconditional slowdown in all population sizes in f128 compared to f64.
However, for sufficiently small values as input to log-gamma,
for efficiency we compute log-gamma using f64 values,
even though we then store that output in an f128 to add the terms.
The additional slowdown for $n \ge 10^{11}$ is because the computation of log-gamma itself begins switching to using f128 values internally,
and this computation is sufficiently complex that the constant factor increase here is more noticeable;
in particular,
computing the natural log function is a significant factor in computing log-gamma, but natural log for f128 is not supported yet in Rust, so we had to write our own software implementation of natural log as well.

Regarding the performance of the rebop Python package versus the rebop pure Rust crate:
the rebop documentation claims 
\begin{displayquote}
\emph{Performance and ergonomics are taken very seriously. For this reason, two independent APIs are provided to describe and simulate reaction networks:}
\begin{itemize}
    \item
    \emph{a macro-based DSL implemented by} [\verb!define_system!], \emph{usually the most efficient, but that requires to compile a rust program;}
    
    \item 
    \emph{a function-based API implemented by the module} [\verb!gillespie!], 
    \emph{also available through Python bindings. This one does not require a rust compilation and allows the system to be defined at run time. It is typically 2 or 3 times slower than the macro DSL, but still faster than all other software tried.}
\end{itemize}
\end{displayquote}

The superior performance of the pure Rust crate (``macro-based DSL'') over the Python API is evident in the difference between the data labeled ``rebop (Python)'' and ``rebop (Rust)'' in \cref{fig:lotka_volterra_scaling}.
Our package, like the rebop Python package, uses a Rust backend with a Python front-end using PyO3 to call Rust from Python.
Since we intend our package as a Python package, which will be much easier to use for most users,
it seems that the fair comparison is to the rebop Python API, rather than the faster pure Rust crate.
Yet,
whether comparing to the rebop Python API or Rust API,
in either case
the linear scaling 
(time $\Theta(n)$ to simulate $n$ reactions)
shown in \Cref{fig:lotka_volterra_scaling} still holds, but the pure Rust implementation of rebop stays superior to our batching algorithm for slightly large population sizes than the rebop Python package.
Nevertheless the batching algorithm is clearly superior to both for Lotka-Volterra when $n \geq 10^7.$

We intend eventually to implement a pure Rust crate for our batching algorithm.
We may see a performance gain as well,
but it is not clear that we can do the same optimizations done by rebop,
which somehow optimizes to get a constant factor speedup based on knowing the reactions at compile time.
In the case of the batching algorithm,
there's no obvious optimization that can be done based on knowing the reactions at compile time.
The constant-factor overhead of rebop's Python API is definitely not merely the overhead of calling from Python,
because the times for large population sizes were measured with a single call to rebop's \texttt{Gillespie.run} method (called from Python, but implemented as a Rust method), which has only a small additive constant overhead to call once, yet the performance of the pure Rust implementation remains a constant factor faster no matter how large $n$.

\subsubsection{Multibatching}
\label{sec:multibatching-discussion}

The batching algorithm of~\cite{berenbrink2020simulating} has an optimized version called ``multibatching'', in which multiple collisions are simulated in a single batch.
This is useful when the time $t_c$ to sample the collision-free run length and simulate a collision is less than the time $t_b$ required to process a batch;
if $t_b \approx t_c \cdot m$ for $m \in \N^+$,
then the ideal tradeoff would be to simulate $m$ collisions for each batch.
This is the main reason it is useful to have a definition of the collision-free run length distribution $\mathbf{coll}(n,r,o,g)$ that allows $r>0$,
even though in this paper we focused on the case $r=0$,
since $r>0$ is how we model the question
``\emph{given that $r$ molecules have already interacted from previous collisions and the batch length up to this point, how many \emph{additional} molecules can be picked before another collision?}'',
and then that number is added to the growing batch length.

In other words, instead of a single step of the algorithm being to process a batch of length $\ell \sim \mathbf{coll}(n,0,o,g)$,
then simulate a single collision,
we instead imagine the following process.
We sample a collision-free run length $\ell_1 \sim \mathbf{coll}(n,0,o,g)$,
then a second collision-free run length 
$\ell_2 \sim \mathbf{coll}(n, (\ell_1 + 1)\cdot (o+g), o, g)$ 
(to model that the first $\ell_1+1$ reactions ($\ell_1$ in the first part of the batch, plus 1 for the collision) turn $(\ell_1 + 1)\cdot (o+g)$ total molecules ``red''.
Then we sample a third collision-free run length 
$\ell_3 \sim \mathbf{coll}(n, (\ell_1 + \ell_2 + 2)\cdot (o+g), o, g)$,
etc.,
up to $\ell_m$.
We then simulate the $m$ interactions involving the $m$ collisions (see~\cite[Section 4]{berenbrink2020simulating} for details),
and a single batch of length $\sum_{i=1}^m \ell_i$.

Although it is possible
to implement multibatching for our generalized batching algorithm,
we have not explored the idea in depth.
Empirically our implementation spends the bulk of its time sampling the collision-free run length,
due mostly to the extra complexity of our definition of a collision, generalized to allow positive generativity.
If this could be optimized to take far less time,
then implementing a similar multibatching approach in our algorithm would become beneficial.

\subsubsection{Passive reactions}
\label{sec:data-passive-reactions}

\begin{figure}
    
    \centering
    \begin{subfigure}{0.99\textwidth}\begin{minipage}{0.2\textwidth}
    \centering
    Lotka Volterra:
    \begin{align*}
        R   &\rxn^1 2R
        \\
        F   &\rxn^1 \emptyset
        \\
        F+R &\rxn^1 2F
    \end{align*}
    \end{minipage}
    \begin{minipage}{0.8\textwidth}
    \includegraphics[width=\linewidth]{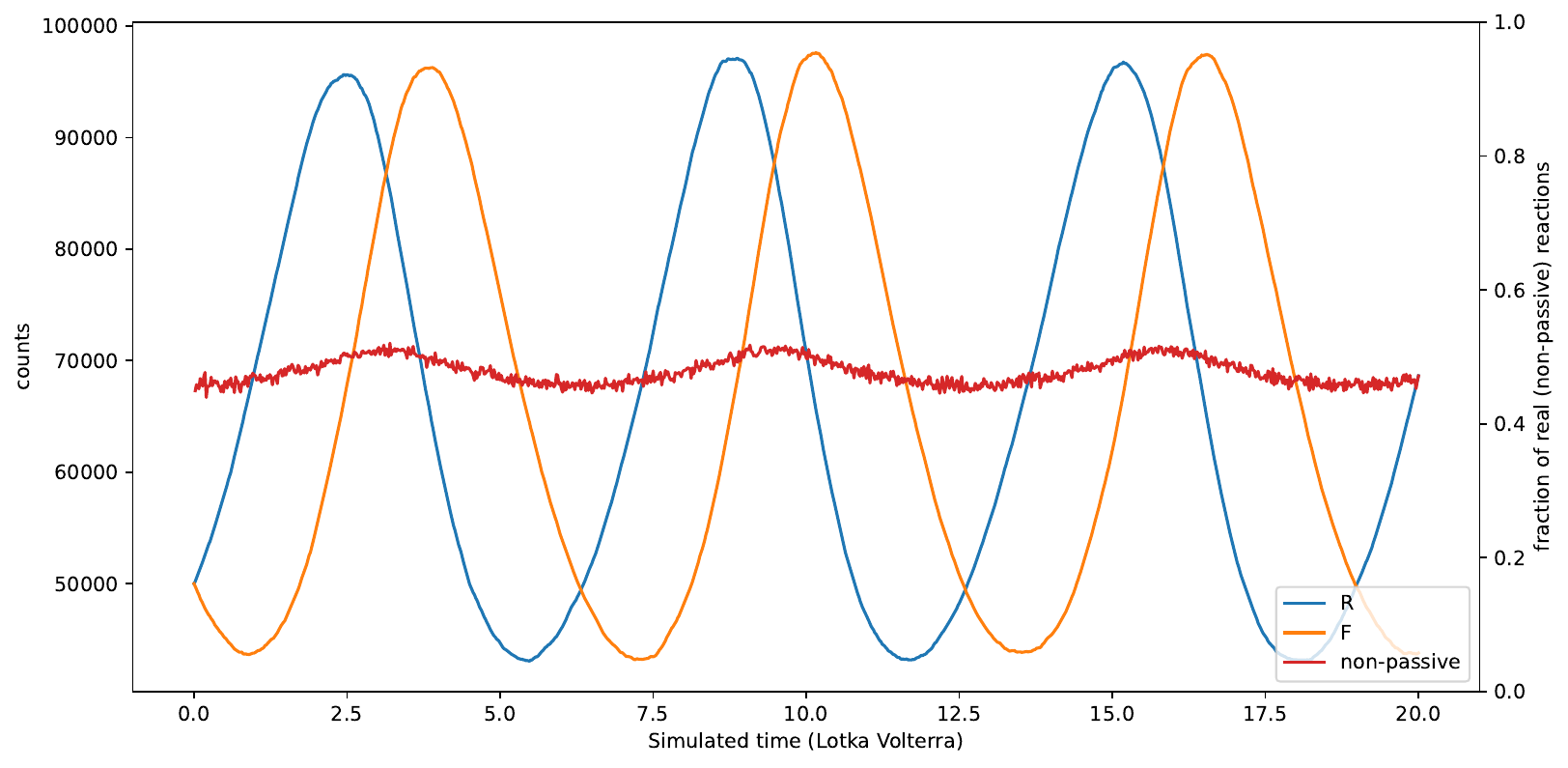}
    \end{minipage}
    \end{subfigure}
    \begin{subfigure}{0.99\textwidth}\begin{minipage}{0.2\textwidth}
    \centering
    R\"ossler:
    \begin{align*}
        X_1   &\rxn^{30} 2X_1
        \\
        2X_1   &\rxn^{0.5} X_1
        \\
        X_1+X_2   &\rxn^{1} 2X_2
        \\
        X_2   &\rxn^{10} \emptyset
        \\
        X_1+X_3   &\rxn^{1} \emptyset
        \\
        X_3   &\rxn^{16.5} 2X_3
        \\
        2X_3   &\rxn^{0.5} X_3        
    \end{align*}
    \end{minipage}
    \begin{minipage}{0.8\textwidth}
    \raggedleft
    \includegraphics[width=0.96\linewidth]{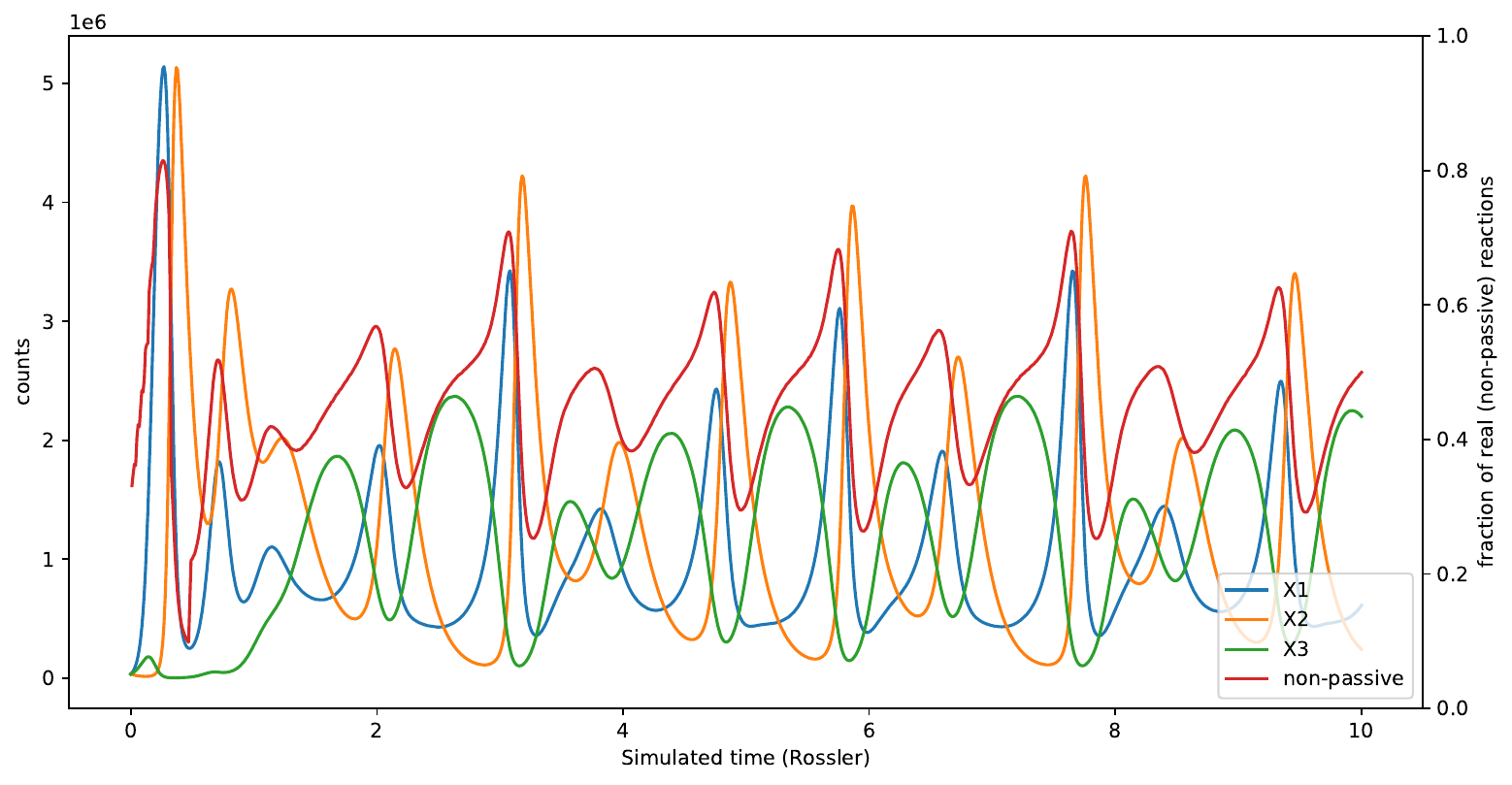}
    \end{minipage}
    \end{subfigure}
    \begin{subfigure}{0.99\textwidth}\begin{minipage}{0.2\textwidth}
    \centering
    Oregonator:
    \begin{align*}
        X_2   &\rxn^{0.1} X_1
        \\
        X_1+X_2   &\rxn^{1000} \emptyset
        \\
        X_1   &\rxn^{520} 2X_1+X_3
        \\
        2X_1   &\rxn^{40} \emptyset
        \\
        X_3   &\rxn^{443.324} X_2
        \\
        X_3   &\rxn^{2.676} \emptyset
    \end{align*}
    \end{minipage}
    \begin{minipage}{0.8\textwidth}
    \includegraphics[width=\linewidth]{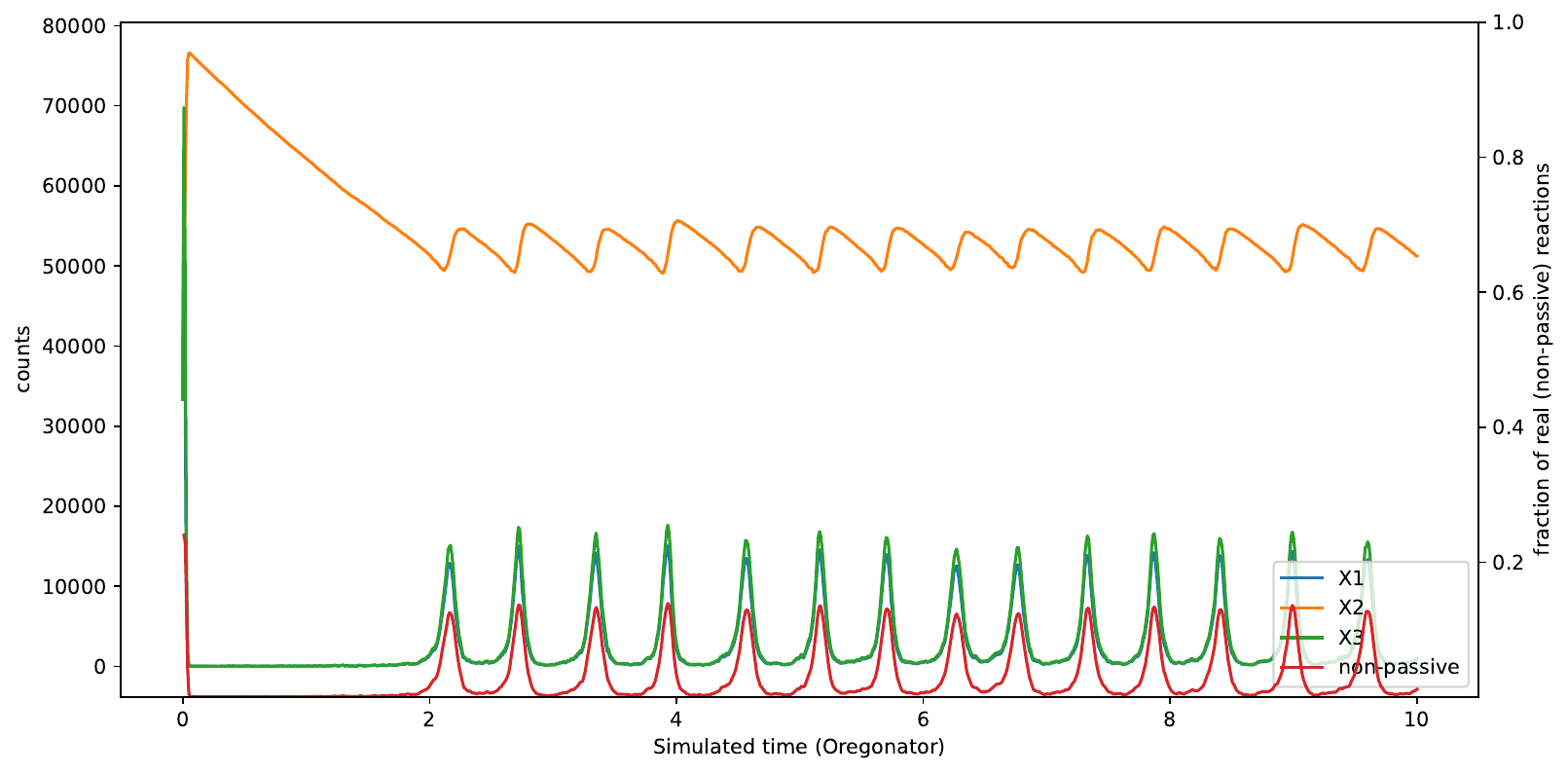}
    \end{minipage}
    \end{subfigure}
    \caption{Plot of counts vs.~time for three CRNs with initial molecular count $n=10^5$, with the fraction of non-passive reactions also shown in red (note separate $y$-axes on right). The first two simulations are not majorly impeded by excessive passive reactions. For the Oregonator, however, the smallest sampled fraction of non-passive reactions is around 0.0003, representing a slowdown factor of $\approx 3000$. During the periodic part, this value oscillates between around 0.002 and 0.1. Rate constants modified from \cite{soloveichik2010dna}. }
    \label{fig:lotka_volterra_plot_with_passive_reactions}
\end{figure}

Recall that a potential source of slowdown for the batching algorithm is simulating many passive reactions
(see \Cref{alg:null-reactions}),
which do not correspond to reactions executed in the original CRN.
The greater the fraction of passive reactions in a batch,
the less progress the batching algorithm makes toward simulating the original CRN.

To measure empirically how many reactions are passive, we counted the fraction of simulated reactions that 
\Cref{fig:lotka_volterra_plot_with_passive_reactions} shows three different CRNs each with initial population size $n=10^5$,
with counts plotted alongside the fraction of reaction sampled at each time point that were not passive,
i.e. that made progress in simulating the CRN.
This comparison shows that, for some CRNs, passive reactions are not a major concern; 
for the Lotka-Volterra oscillators and R\"ossler attractor, typically around half of simulated reactions are passive.
For the Oregonator oscillator, however, we see two issues.
First, during the initial simulation period, more than 99.9\% of simulated reactions are passive,
serving as a major bottleneck.
Second, even during steady-state behavior,
between 90 and 99.5\% of simulated reactions are passive,
making overall simulation much slower.
Indeed, batss simulates the former two CRNs more quickly than rebop even at relatively low population sizes,
whereas rebop is faster for the Oregonator even at large sizes.
We believe these issues occur due to the Oregonator's disparate rate constants,
with the issue being exacerbated during parts of the simulation that have very low counts of some species.

Because $\# K$ influences correction factors in rate constants (see~\Cref{def:adjusted-parameters,alg:crn-transformation}),
changing $\# K$ requires recomputing rate constants.
We avoid changing $K$ on every step because it is computationally prohibitive to recompute the rate constants that frequently.
Currently we set $\#K=n$ (i.e., equal to the number of molecules in the original CRN) when we reset $\#K$,
since this gives the asymptotic behavior that we want of maintaining a $\Omega(1)$ constant fraction of sampled reactions are non-passive.
However, it remains to optimize;
the optimal choice of $\#K$ is $\Theta(n)$,
but choosing the actual value more carefully could potentially reduce the number of null reactions.

\end{toappendix}

\section{Open Questions}

There are two pressing theoretical open questions relating to asymptotic efficiency. The first pertains to slowdown described in \cref{def:efficiency-condition} from probabilistic reactions on certain configurations (e.g., $2L \rxn L + F$ as $\#L \to 0$).\todo{This is only a problem for reversible dimerization when the rate constants are imbalanced, or molecular count is significantly different from volume, so we need to be more specific.}
It is possible that any exact algorithm that chooses reactions by selecting individual reactants will suffer from this slowdown, as it is necessary to ensure correct reaction probabilities in configurations where a high-propensity reaction has one or more low-count reactants. 
There may be some way to utilize the core idea of batching, but choose which reactants comprise a batch in a more clever way.
It may also be possible to choose reactions in some other way than either choosing individual reactions (as in Gillespie) or choosing individual reactants (as in batching).

The second theoretical question pertains to the asymptotic cost of adaptive rejection sampling.
Ideally, in all situations, our algorithm would be able to simulate $\Theta(\sqrt n)$ reactions in $O(\log n)$ time;
however, when sampling from the hypoexponential exactly using adaptive rejection sampling, this only yields optimal efficiency when $\ell \geq n^{5/4}$.
Future work may show how to avoid this by exactly sampling timestamps more efficiently.

Another question is how broadly applicable this algorithm is beyond CRNs. 
It may be possible to adapt the algorithm to other stochastic processes that 
select elements from some set at each iteration to undergo a ``transition'', but are not modeled exactly by the Gillespie algorithm, 
for example surface CRNs~\cite{clamons2020programming} or tile displacement systems~\cite{sarraf2023modular,winfree2023two},
which are both chemical models that account for geometrical arrangement of some chemical species that interact with each other.

Practically, it may also be useful to develop approximate simulation algorithms that derive their logic from our exact simulation algorithm.
For example, by skipping the step of sampling collision-free run length and instead simply choosing the run length to be its expected value, one can avoid the numerical precision issues that our implementation has.
It seems feasible that such practical algorithms may still be amenable to theoretical analysis of bound approximation accuracy as in e.g. \cite{soloveichik2009robust}.


\bibliography{main}


\end{document}